\author{Zhiqiang Wei, Lei Yang, Derrick Wing Kwan Ng,\\ Jinhong Yuan, and Lajos Hanzo
\thanks{Zhiqiang Wei, Lei Yang, Derrick Wing Kwan Ng, and Jinhong Yuan are with the School of Electrical Engineering and Telecommunications, the University of New South Wales, Australia (email: zhiqiang.wei@student.unsw.edu.au; lei.yang3@unsw.edu.au; w.k.ng@unsw.edu.au; j.yuan@unsw.edu.au).
Lajos Hanzo is with the School of Electronics and Computer Science, University of Southampton, Southampton, UK (email: lh@ecs.soton.ac.uk).
The conference version of this paper has been presented at the IEEE Globecom 2018 \cite{Wei2018GlobeCom}.}}
\title{On the Performance Gain of NOMA over OMA in Uplink Communication Systems}
\newtheorem{Thm}{Theorem}
\newtheorem{Lem}{Lemma}
\newtheorem{proof}{proof}
\newtheorem{Cor}{Corollary}
\newtheorem{T-Prob}{Transformed Problem}
\newcommand{\tabincell}[2]{\begin{tabular}{@{}#1@{}}#2\end{tabular}}
\newtheorem{Remark}{Remark}
\newcommand{\abs}[1]{\lvert#1\rvert}
\begin{document}
\IEEEspecialpapernotice{(Invited Paper)\vspace{-10mm}}
\maketitle
\begin{abstract}
In this paper, we investigate and reveal the ergodic sum-rate gain (ESG) of non-orthogonal multiple access (NOMA) over orthogonal multiple access (OMA) in uplink cellular communication systems.
A base station equipped with a single-antenna, with multiple antennas, and with massive antenna arrays is considered both in single-cell and multi-cell deployments.
In particular, in single-antenna systems, we identify two types of gains brought about by NOMA:
1) a large-scale near-far gain arising from the distance discrepancy between the base station and users; 2) a small-scale fading gain originating from the multipath channel fading.
Furthermore, we reveal that the large-scale near-far gain increases with the normalized cell size, while the small-scale fading gain is a constant, given by $\gamma = 0.57721$ nat/s/Hz, in Rayleigh fading channels.
When extending single-antenna NOMA to $M$-antenna NOMA, we prove that both the large-scale near-far gain and small-scale fading gain achieved by single-antenna NOMA can be increased by a factor of $M$ for a large number of users.
Moreover, given a massive antenna array at the base station and considering a fixed ratio between the number of antennas, $M$, and the number of users, $K$, the ESG of NOMA over OMA increases linearly with both $M$ and $K$.
We then further extend the analysis to a multi-cell scenario.
Compared to the single-cell case, the ESG in multi-cell systems degrades as NOMA faces more severe inter-cell interference due to the non-orthogonal transmissions.
Besides, we unveil that a large cell size is always beneficial to the ergodic sum-rate performance of NOMA in both single-cell and multi-cell systems.
Numerical results verify the accuracy of the analytical results derived and confirm the insights revealed about the ESG of NOMA over OMA in different scenarios.
\end{abstract}

\begin{keywords}
Non-orthogonal multiple access, ergodic sum-rate gain, large-scale near-far gain, small-scale fading gain, inter-cell interference.
\end{keywords}

\section{Introduction}
The networked world we live in has revolutionized our daily life.
Wireless communications has become one of the disruptive technologies and it is one of the best business opportunities of the future\cite{Andrews2014,wong2017key}.
In particular, the development of wireless communications worldwide fuels the massive growth in the number of wireless communication devices and sensors for emerging applications such as smart logistics \& transportation, environmental monitoring, energy management, safety management, and industry automation, just to name a few.
It is expected that in the Internet-of-Things (IoT) era \cite{Zorzi2010}, there will be $50$ billion wireless communication devices connected worldwide with a connection density up to a million devices per $\rm{km}^{\rm{2}}$ \cite{QualComm,SunALOHA}.
The massive number of devices and explosive data traffic pose challenging requirements, such as massive connectivity \cite{SunJUICE} and ultra-high spectral efficiency for future wireless networks\cite{Andrews2014,wong2017key}.
As a result, compelling new technologies, such as massive multiple-input multiple-output (MIMO)\cite{Marzetta2010,ngo2013energy}, non-orthogonal multiple access (NOMA)\cite{Dai2015,Ding2015b,WeiSurvey2016,QiuLOMA,QiuLOMASlowFading}, and millimeter wave (mmWave) communications\cite{Rappaport2013,XiaoMing2017,zhao2017multiuser,wei2018multibeam,WeiBeamWidthControl} etc. have been proposed to address the aforementioned issues.
Among them, NOMA has drawn significant attention both in industry and in academia as a promising multiple access technique.
The principle of power-domain NOMA is to exploit the users' power difference for multiuser multiplexing together with superposition coding at the transmitter, while applying successive interference cancelation (SIC) at the receivers for alleviating the inter-user interference (IUI)\cite{WeiSurvey2016}.
In fact, the industrial community has proposed up to 16 various forms of NOMA as the potential multiple access schemes for the forthcoming fifth-generation (5G) networks\cite{ChenNOMAScheme}.

Compared to the conventional orthogonal multiple access (OMA) schemes, NOMA allows users to simultaneously share the same resource blocks and hence it is beneficial for supporting a large number of connections in spectrally efficient communications.
The concept of non-orthogonal transmissions dates back to the 1990s, e.g. \cite{Cover1991,Tse2005}, which serves as a foundation for the development of the power-domain NOMA.
Indeed, NOMA schemes relying on non-orthogonal spreading sequences have led to popular code division multiple access (CDMA) arrangements, even though eventually the so-called orthogonal variable spreading factor (OVSF) code was selected for the global third-generation (3G) wireless systems\cite{Verdu1999,YangCDMA,Hanzo2003,WangPowerEfficiency}.
To elaborate a little further, the spectral efficiency of CDMA was analyzed in \cite{Verdu1999}.
In \cite{YangCDMA}, the authors compared the benefits and deficiencies of three typical CDMA schemes: single-carrier direct-sequence CDMA (SC DS-CDMA), multicarrier CDMA (MC-CDMA), and multicarrier DS-CDMA (MC DS-CDMA).
Furthermore, a comparative study of OMA and NOMA was carried out in \cite{WangPowerEfficiency}.
It has been shown that NOMA possesses a spectral-power efficiency advantage over OMA\cite{WangPowerEfficiency} and this theoretical gain can be realized with the aid of the interleave division multiple access (IDMA) technique proposed in \cite{Ping2006IDMA}.
Despite the initial efforts on the study of NOMA, the employment of NOMA in practical systems has been developing relatively slowly due to the requirement of sophisticated hardware for its implementation.
Recently, NOMA has rekindled the interests of researchers as a benefit of the recent advances in signal processing and silicon technologies\cite{Access2015,NOMADOCOMO}.
However, existing contributions, e.g. \cite{Cover1991,Tse2005,WangPowerEfficiency}, have mainly focused their attention on the NOMA performance from the information theoretical point of view, such as its capacity region\cite{Cover1991,Tse2005} and power region\cite{WangPowerEfficiency}.
The recent work in \cite{Xu2017} intuitively explained the source of performance gain attained by NOMA over OMA via simulations.
The authors of \cite{ProceedingLiu,DaiCST2018} surveyed the state-of-the-art research on NOMA and offered a high-level discussion of the challenges and research opportunities for NOMA systems.
However, to the best of our knowledge, there is a paucity of literature on the comprehensive analysis of the achievable ergodic sum-rate gain (ESG) of NOMA over OMA relying on practical signal detection techniques.
Furthermore, the ESG of NOMA over OMA in different practical scenarios, such as single-antenna, multi-antenna, and massive antenna array aided systems relying on single-cell or multi-cell deployments has not been compared in the open literature.

As for single-antenna systems, several authors have analyzed the performance of NOMA from different perspectives, e.g.\cite{Xu2015,Ding2014,Chen2017,Yang2016}.
More specifically, based on the achievable rate region, Xu \emph{et al.} proved in \cite{Xu2015} that NOMA outperforms time division multiple access (TDMA) with a high probability in terms of both its overall sum-rate and the individual user-rate.
Furthermore, the ergodic sum-rate of single-input single-output NOMA (SISO-NOMA) was derived and the performance gain of SISO-NOMA over SISO-OMA was demonstrated via simulations by Ding \emph{et al.}\cite{Ding2014}.
Upon relying on their new dynamic resource allocation design, Chen \emph{et al.} \cite{Chen2017} proved that SISO-NOMA always outperforms SISO-OMA using a rigorous optimization technique.
In \cite{Yang2016}, Yang \emph{et al.} analyzed the outage probability degradation and the ergodic sum-rate of SISO-NOMA systems by taking into account the impact of partial channel state information (CSI).
As a further development, efficient resource allocation was designed for NOMA systems by Sun \emph{et al.} \cite{Sun2016Fullduplex} as well as by Wei \emph{et al.} \cite{WeiTCOM2017} under the assumptions of perfect CSI and imperfect CSI, respectively.
The simulation results in \cite{Sun2016Fullduplex} and \cite{WeiTCOM2017} quantified the performance gain of NOMA over OMA in terms of its spectral efficiency and power efficiency, respectively.
The aforementioned contributions studied the performance of NOMA systems or discussed the superiority of NOMA over OMA in different contexts.
However, the analytical results quantifying the ESG of SISO-NOMA over SISO-OMA has not been reported at the time of writing.
More importantly, the source of the performance gain of NOMA over OMA has not been well understood and the impact of specific system parameters on the ESG, such as the number of NOMA users, the signal-to-noise ratio (SNR), and the cell size, have not been revealed in the open literature.

To achieve a higher spectral efficiency, the concept of NOMA has also been amalgamate with multi-antenna systems, resulting in the notion of multiple-input multiple-output NOMA (MIMO-NOMA), for example, by invoking the signal alignment technique of Ding \emph{et al.} \cite{DingSignalAlignment} and the quasi-degradation-based precoding design of Chen \emph{et al.} \cite{ChenQuasiDegradation}.
Although the performance gain of MIMO-NOMA over MIMO-OMA has indeed been shown in \cite{DingSignalAlignment,ChenQuasiDegradation} with the aid of simulations, the performance gain due to additional antennas has not been quantified analytically.
Moreover, how the ESG of NOMA over OMA increases upon upgrading the system from having a single antenna to multiple antennas is still an open problem at the time of writing, which deserves our efforts to explore.
The answers to these questions can shed light on the practical implementation of NOMA in future wireless networks.
On the other hand, there are only some preliminary results on applying the NOMA principle to massive-MIMO systems.
For instance, Zhang \emph{et al.} \cite{ZhangDiMassive} investigated the outage probability of massive-MIMO-NOMA (\emph{m}MIMO-NOMA).
Furthermore, Ding and Poor \cite{DingMassive} analyzed the outage performance of \emph{m}MIMO-NOMA relying on realistic limited feedback and demonstrated a substantial performance improvement for \emph{m}MIMO-NOMA over \emph{m}MIMO-OMA.
Upon extending NOMA to a mmWave massive-MIMO system, the capacity attained in the high-SNR regime and low-SNR regime were analyzed by Zhang \emph{et al.} \cite{ZhangDimmWave}.
Yet, the ESG of \emph{m}MIMO-NOMA over \emph{m}MIMO-OMA remains unknown and the investigation of \emph{m}MIMO-NOMA has the promise attaining NOMA gains in large-scale systems in the networks of the near future.

On the other hand, although single-cell NOMA has received significant research attention \cite{Xu2015,Ding2014,Chen2017,Yang2016,Sun2016Fullduplex,WeiTCOM2017,DingSignalAlignment,ChenQuasiDegradation,ZhangDiMassive,DingMassive,ZhangDimmWave}, the performance of NOMA in multi-cell scenarios remains unexplored but critically important for practical deployment, where the inter-cell interference becomes a major obstacle\cite{ShinMulticellNOMA}.
Centralized resource optimization of multi-cell NOMA was proposed by You in \cite{YouMulticellNOMA}, while a distributed power control scheme was studied in \cite{FuMulticellNOMA}.
The transmit precoder design of MIMO-NOMA aided multi-cell networks designed for maximizing the overall sum throughput was proposed by Nguyen \emph{et al.} \cite{NguyenMulticellNOMA} and a computationally efficient algorithm was proposed for achieving a locally optimal solution.
Despite the fact that the simulation results provided by \cite{ShinMulticellNOMA,YouMulticellNOMA,FuMulticellNOMA,NguyenMulticellNOMA} have demonstrated a performance gain for applying NOMA in multi-cell cellular networks, the analytical results quantifying the ESG of NOMA over OMA for multi-cell systems relying on single-antenna, multi-antenna, and massive-MIMO arrays at the BSs have not been reported in the open literature.
Furthermore, the performance gains disseminated in the literature have been achieved for systems having a high transmit power or operating in the high-SNR regime.
However, a high transmit power inflicts a strong inter-cell interference, which imposes a challenge for the design of inter-cell interference management.
Therefore, there are many practical considerations related to the NOMA principle in multi-cell systems, while have to be investigated.

\begin{table}
	\caption{Comparison of This Work with Literature for the Results of Performance Gain of NOMA over OMA}
	\centering
	\begin{tabular}{c|l|ccccccccc}
		\hline
		Considered system setup  & Main results & \cite{Xu2017} & \cite{Ding2014} & \cite{Chen2017} &  \cite{DingSignalAlignment,ChenQuasiDegradation} &  \cite{DingMassive} & \cite{YouMulticellNOMA,FuMulticellNOMA,NguyenMulticellNOMA} & This work\\\hline
		\multirow{4}{*}{\tabincell{c}{Single-antenna \\single-cell systems}}
        & Outage probability  &  & \Checkmark & & & & & \\
        & Proof of superiority  &  &  &\Checkmark & & & & \Checkmark\\
        & Ergodic sum-rate  &  & \Checkmark & & & & & \Checkmark\\
        & Ergodic sum-rate gain  &  &  &  & & & & \Checkmark\\
		& Numerical results &  \Checkmark & \Checkmark & \Checkmark & & & & \Checkmark\\\hline
		\multirow{4}{*}{\tabincell{c}{Multi-antenna \\single-cell systems}}
        & Outage probability  &  & & &\Checkmark & & & \\
        & Proof of superiority  &  &  & & & & & \Checkmark\\
        & Ergodic sum-rate  &  &  & & & & & \Checkmark\\
        & Ergodic sum-rate gain  &  &  &  & & & & \Checkmark\\
		& Numerical results&  & & & \Checkmark & & & \Checkmark\\\hline
		\multirow{4}{*}{\tabincell{c}{Massive-MIMO \\single-cell systems}}
        & Outage probability  &  & & & & \Checkmark& & \\
        & Proof of superiority  &  &  & & & & & \Checkmark\\
        & Ergodic sum-rate  &  &  & & & & & \Checkmark\\
        & Ergodic sum-rate gain  &  &  &  & & & & \Checkmark\\
		& Numerical results&\Checkmark  & & & & \Checkmark& & \Checkmark\\\hline
		\multirow{4}{*}{Multi-cell systems}
        & Outage probability  &  & & & & & & \\
        & Proof of superiority  &  &  & & & & & \Checkmark\\
        & Ergodic sum-rate  &  &  & & & & & \Checkmark\\
        & Ergodic sum-rate gain  &  &  &  & & & & \Checkmark\\
		& Numerical results&  \Checkmark & & & & &\Checkmark & \Checkmark\\\hline
	\end{tabular}\label{LiteratureComparison}
\end{table}

In summary, the comparison of our work with the most pertinent existing contributions in the literature is shown in Table \ref{LiteratureComparison}.
Although the existing treatises have investigated the system performance of NOMA from different perspectives, such as the outage probability \cite{Ding2014,DingSignalAlignment,ChenQuasiDegradation,DingMassive} and the ergodic sum-rate \cite{Ding2014}, in various specifically considered system setups, no unified analysis has been published to discuss the performance gain of NOMA over OMA.
To fill this gap, our work offers a unified analysis on the ergodic sum-rate gain of NOMA over OMA in single-antenna, multi-antenna, and massive-MIMO systems with both single-cell and
multi-cell deployments.


This paper aims for providing answers to the above open problems and for furthering the understanding of the ESG of NOMA over OMA in the uplink of communication systems.
To this end, we carry out the unified analysis of ESG in single-antenna, multi-antenna and massive-MIMO systems.
We first focus our attention on the ESG analysis in single-cell systems and then extend our analytical results to multi-cell systems by taking into account the inter-cell interference (ICI).
We quantify the ESG of NOMA over OMA relying on practical signal reception schemes at the base station for both NOMA as well as OMA and unveil its behaviour under different scenarios.
Our simulation results confirm the accuracy of our performance analyses and provide some interesting insights, which are summarized in the following:
\begin{itemize}
  \item In all the cases considered, a high ESG can be achieved by NOMA over OMA in the high-SNR regime, but the ESG vanishes in the low-SNR regime.
  \item In the single-antenna scenario, we identify two types of gains attained by NOMA and characterize their different behaviours.
      In particular, we show that the \emph{large-scale near-far gain} achieved by exploiting the distance-discrepancy between the base station and users increases with the cell size, while the \emph{small-scale fading gain} is given by an Euler-Mascheroni constant\cite{abramowitz1964handbook} of $\gamma = 0.57721$ nat/s/Hz in Rayleigh fading channels.
  \item When applying NOMA in multi-antenna systems, compared to the MIMO-OMA utilizing zero-forcing detection, we analytically show that the ESG of SISO-NOMA over SISO-OMA can be increased by $M$-fold, when the base station is equipped with $M$ antennas and serves a sufficiently large number of users $K$.
  \item Compared to MIMO-OMA utilizing a maximum ratio combining (MRC) detector, an $\left(M-1\right)$-fold degrees of freedom (DoF) gain can be achieved by MIMO-NOMA.
      In particular, the ESG in this case increases linearly with the system's SNR quantified in dB with a slope of $\left(M-1\right)$ in the high-SNR regime.
  \item For massive-MIMO systems with a fixed ratio between the number of antennas, $M$, and the number of users, $K$, i.e., $\delta = \frac{M}{K}$, the ESG of \emph{m}MIMO-NOMA over \emph{m}MIMO-OMA increases linearly with both $K$ and $M$ using MRC detection.
  \item In practical multi-cell systems operating without joint cell signal processing, the ESG of NOMA over OMA is degraded due to the existence of ICI, especially for a small cell size with a dense cell deployment.
      Furthermore, no DoF gain can be achieved by NOMA in multi-cell systems due to the lack of joint multi-cell signal processing to handle the ICI.
      In other words, all the ESGs of NOMA over OMA in single-antenna, multi-antenna, and massive-MIMO multi-cell systems saturate in the high-SNR regime.
  \item For both single-cell and multi-cell systems, a large cell size is always beneficial to the performance of NOMA.
      In particular, in single-cell systems, the ESG of NOMA over OMA is increased for a larger cell size due to the enhanced large-scale near-far gain.
      For multi-cell systems, a larger cell size reduces the ICI level, which prevents a severe ESG degradation.
\end{itemize}
%

The notations used in this paper are as follows. Boldface capital and lower case letters are reserved for matrices and vectors, respectively. ${\left( \cdot \right)^{\mathrm{T}}}$ denotes the transpose of a vector or matrix and ${\left( \cdot \right)^{\mathrm{H}}}$ denotes the Hermitian transpose of a vector or matrix.
$\mathbb{C}^{M\times N}$ represents the set of all $M\times N$ matrices with complex entries.
$\abs{\cdot}$ denotes the absolute value of a complex scalar or the determinant of a matrix, $\|{\cdot}\|$ denotes Euclidean norm of a complex vector, $\lceil \cdot \rceil$ is the ceiling function which returns the smallest integer greater than the input value, and ${\mathrm{E}_{x}}\left\{ \cdot \right\}$ denotes the expectation over the random variable $x$.
The circularly symmetric complex Gaussian distribution with mean $\mu$ and variance $\sigma^2$ is denoted by ${\cal CN}(\mu,\sigma^2)$.

\section{System Model}
\begin{figure}[t]
\centering
\includegraphics[width=2in]{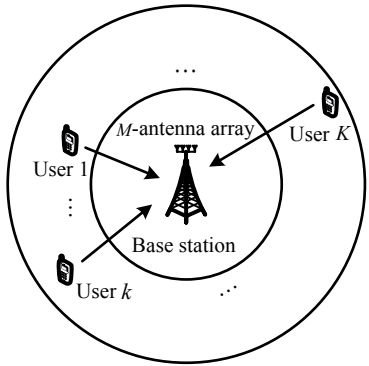}
\caption{The system model of the single-cell uplink communication with one base station and $K$ users.}
\label{NOMA_Uplink_Model}
\end{figure}
\subsection{System Model}
We first consider the uplink\footnote{We restrict ourselves to the uplink NOMA communications\cite{wei2017fairness}, as advanced signal detection/decoding algorithms of NOMA are more affordable at the base station.} of a
single-cell\footnote{We first focus on the ESG analysis for single-cell systems, which serves as a building block for the analyses for multi-cell systems presented in Section \ref{DiscussionsMulticell}.} NOMA system with a single base station (BS) supporting $K$ users, as shown in Fig. \ref{NOMA_Uplink_Model}.
The cell is modeled by a pair of two concentric ring-shaped discs.
The BS is located at the center of the ring-shaped discs with the inner radius of $D_0$ and outer radius of $D$, where all the $K$ users are scattered uniformly within the two concentric ring-shaped discs.
%
%
For the NOMA scheme, all the $K$ users are multiplexed on the same frequency band and time slot, while for the OMA scheme, $K$ users utilize the frequency or time resources orthogonally.
Without loss of generality, we consider frequency division multiple access (FDMA) as a typical OMA scheme.

In this paper, we consider three typical types of communication systems:
\begin{itemize}
  \item SISO-NOMA and SISO-OMA: the BS is equipped with a single-antenna ($M=1$) and all the $K$ users also have a single-antenna.
  \item MIMO-NOMA and MIMO-OMA: the BS is equipped with a multi-antenna array ($M>1$) and all the $K$ users have a single-antenna associated with $K > M$.
  \item Massive MIMO-NOMA (\emph{m}MIMO-NOMA) and massive-MIMO-OMA (\emph{m}MIMO-OMA): the BS is equipped with a large-scale antenna array ($M \to \infty$), while all the $K$ users are equipped with a single antenna, associated with $\frac{M}{K} = \delta < 1$, i.e., the number of antennas $M$ at the BS is lower than the number of users $K$, but with a fixed ratio of $\delta < 1$.
\end{itemize}

\subsection{Signal and Channel Model}
The signal received at the BS is given by
\begin{equation}\label{MIMONOMASystemModel}
{\bf{y}} = \sum\limits_{k = 1}^K {{{\bf{h}}_k}} \sqrt {{p_k}} {x_k} + {\bf{v}},
\end{equation}
where ${\bf{y}}\in \mathbb{C}^{ M \times 1}$, $p_k$ denotes the power transmitted by user $k$, $x_k$ is the normalized modulated symbol of user $k$ with ${\mathrm{E}}\left\{ \left|{x_k}\right|^2 \right\} = 1$, and ${\bf{v}}\sim \mathcal{CN}\left(\mathbf{0},N_0 {{\bf{I}}_M}\right)$ represents the additive white Gaussian noise (AWGN) at the BS with zero mean and covariance matrix of $N_0 {{\bf{I}}_M}$.
To emphasize the impact of the number of users $K$ on the performance gain of NOMA over OMA, we fix the total power consumption of all the uplink users and thus we have
\begin{equation}\label{SumPowerConstraint}
\sum\limits_{k = 1}^K {{p_{k}}}  \le {P_{\rm{max}}},
\end{equation}
where ${P_{\rm{max}}}$ is the maximum total transmit power for all the users.
Note that the sum-power constraint is a commonly adopted assumption in the literature\cite{Vishwanath2003,WangMUG,Xu2017} for simplifying the performance analysis of uplink communications.
In fact, the sum-power constraint is a reasonable assumption for practical cellular communication systems, where a total transmit power limitation is intentionally imposed to limit the ICI.

The uplink (UL) channel vector between user $k$ and the BS is modeled as
\begin{equation}\label{ChannelModel}
{{{\bf{h}}_k}} = \frac{{\bf{g}}_k}{\sqrt{1+d_k^{\alpha}}},
\end{equation}
where ${\bf{g}}_k \in \mathbb{C}^{ M \times 1}$ denotes the Rayleigh fading coefficients, i.e., ${\bf{g}}_k \sim \mathcal{CN}\left(\mathbf{0},{{\bf{I}}_M}\right)$, $d_k$ is the distance between user $k$ and the BS in the unit of meter, and $\alpha$ represents the path loss exponent\footnote{In this paper, we ignore the impact of shadowing to simplify our performance analysis.
Note that, shadowing only introduces an additional power factor to ${{\bf{g}}_k}$ in the channel model in \eqref{ChannelModel}.
Although the introduction of shadowing may change the resulting channel distribution of ${{{\bf{h}}_k}}$, the distance-based channel model is sufficient to characterize the large-scale near-far gain exploited by NOMA, as will be discussed in this paper.}.
We denote the UL channel matrix between all the $K$ users and the BS by ${\bf{H}} = \left[ {{{\bf{h}}_1}, \ldots ,{{\bf{h}}_K}} \right] \in \mathbb{C}^{ M \times K}$.
Note that the system model in \eqref{MIMONOMASystemModel} and the channel model in \eqref{ChannelModel} include the cases of single-antenna and massive-MIMO aided BS associated with $M = 1$ and $M \to \infty$, respectively.
For instance, when $M=1$, ${{{{h}}_k}} = \frac{{{g}}_k}{\sqrt{1+d_k^{\alpha}}}$ denotes the corresponding channel coefficient of user $k$ in single-antenna systems.
We assume that the channel coefficients are independent and identically distributed (i.i.d.) over all the users and antennas.
Since this paper aims for providing some insights concerning the performance gain of NOMA over OMA, we assume that perfect UL CSI knowledge is available at the BS for coherent detection.
%
\subsection{Signal Detection and Resource Allocation Strategy}
To facilitate our performance analyses, we focus our attention on the following efficient signal detection and practical resource allocation strategies.
\subsubsection{Signal detection}
\begin{table}
\caption{Signal Detection Techniques for NOMA and OMA Systems}
\centering
\begin{tabular}{cc|cc}
  \hline
  NOMA system  & Reception technique & OMA system & Reception technique\\ \hline
  SISO-NOMA  & SIC & SISO-OMA & FDMA-SUD \\
  MIMO-NOMA & MMSE-SIC & MIMO-OMA & FDMA-ZF, FDMA-MRC\\
  \emph{m}MIMO-NOMA & MRC-SIC & \emph{m}MIMO-OMA & FDMA-MRC\\\hline
\end{tabular}\label{TransceiverProtocol}
\end{table}

The signal detection techniques adopted in this paper for NOMA and OMA systems are shown in Table \ref{TransceiverProtocol}, which are detailed in the following.

For SISO-NOMA, we adopt the commonly used successive interference cancelation (SIC) receiver \cite{wei2017performance} at the BS, since its performance approaches the capacity of single-antenna systems\cite{Tse2005}.
On the other hand, given that all the users are separated orthogonally by different frequency subbands for SISO-OMA, the simple single-user detection (SUD) technique can be used to achieve the optimal performance.

For MIMO-NOMA, the minimum mean square error criterion based successive interference cancelation (MMSE-SIC) constitutes an appealing receiver algorithm, since its performance approaches the capacity \cite{Tse2005} at an acceptable computational complexity for a finite number of antennas $M$ at the BS.
On the other hand, two types of signal detection schemes are considered for MIMO-OMA, namely FDMA zero forcing (FDMA-ZF) and FDMA maximum ratio combining (FDMA-MRC).
Exploiting the extra spatial degrees of freedom (DoF) attained by multiple antennas at the BS, ZF can be used for multi-user detection (MUD), as its achievable rate approaches the capacity in the high-SNR regime\cite{Tse2005}.
In particular, all the users are categorized into $G = K/M$ groups\footnote{Without loss of generality, we consider the case with $G$ as an integer in this paper.} with each group containing $M$ users.
Then, ZF is utilized for handling the inter-user interference (IUI) within each group and FDMA is employed to separate all the $G$ groups on orthogonal frequency subbands.
In the low-SNR regime, the performance of ZF fails to approach the capacity \cite{Tse2005}, thus a simple low-complexity MRC scheme is adopted for single user detection on each frequency subband.
We note that there is only a single user in each frequency subband of our considered FDMA-MRC aided MIMO-OMA systems, i.e., no user grouping.

With a massive number of UL receiving antennas employed at the BS, we circumvent the excessive complexity of matrix inversion involved in ZF and MMSE detection by adopting the low-complexity MRC-SIC detection \cite{WeiLetter2018} for \emph{m}MIMO-NOMA systems and the FDMA-MRC scheme for \emph{m}MIMO-OMA systems.
Given the favorable propagation property of massive-MIMO systems\cite{Ngo2013}, the orthogonality among the channel vectors of multiple users holds fairly well, provided that the number of users is sufficiently lower than the number of antennas.
Therefore, we can assign $W \ll M$ users to every frequency subband and perform the simple MRC detection while enjoying negligible IUIs in each subband.
In this paper, we consider a fixed ratio between the group size and the number of antennas, namely, $\varsigma = \frac{W}{M} \ll 1$, and assume that the above-mentioned favorable propagation property holds under the fixed ratio $\varsigma$ considered.

\subsubsection{Resource allocation strategy}\label{ResourceAllocation}
To facilitate our analytical study in this paper, we consider an equal resource allocation strategy for both NOMA and OMA schemes.
In particular, equal power allocation is adopted for NOMA schemes\footnote{As shown in \cite{Vaezi2018non}, allocating a higher power to the user with the worse channel is not necessarily required in NOMA\cite{Vaezi2018non}.}.
On the other hand, equal power and frequency allocation is adopted for OMA schemes.
%
Note that the equal resource allocation is a typical selected strategy for applications bearing only a limited system overhead, e.g. machine-type communications (MTC).
%

We note that beneficial user grouping design is important for the MIMO-OMA system relying on FDMA-ZF and for the \emph{m}MIMO-OMA system using FDMA-MRC.
In general, finding the optimal user grouping strategy is an NP-hard problem and the performance analysis based on the optimal user grouping strategy is generally intractable.
Furthermore, the optimal SIC decoding order of NOMA in multi-antenna and massive-MIMO systems is still an open problem in the literature, since the channel gains on different antennas are usually diverse.
To avoid tedious comparison and to facilitate our performance analysis, we adopt a random user grouping strategy for the OMA systems considered and a fixed SIC decoding order for the NOMA systems investigated.
%
%
In particular, we randomly select $M$ and $W$ users for each group on each frequency subband for the MIMO-OMA and \emph{m}MIMO-OMA systems, respectively.
For NOMA systems, without loss of generality, we assume $\left\| {{{\bf{h}}_1}} \right\| \ge \left\| {{{\bf{h}}_2}} \right\|, \ldots, \ge\left\| {{{\bf{h}}_K}} \right\|$, that the users are indexed based on their channel gains, and the SIC/MMSE-SIC/MRC-SIC decoding order\footnote{Note that, in general, the adopted decoding order is not the optimal SIC decoding order for maximizing the achievable sum-rate of the considered MIMO-NOMA and \emph{m}MIMO-NOMA systems.} at the BS is $1,2,\ldots,K$.
%
%
Additionally, to unveil insights about the performance gain of NOMA over OMA, we assume that there is no error propagation during SIC/MMSE-SIC/MRC-SIC decoding at the BS.
\section{ESG of SISO-NOMA over SISO-OMA}
In this section, we first derive the ergodic sum-rate of SISO-NOMA and SISO-OMA.
Then, the asymptotic ESG of SISO-NOMA over SISO-OMA is discussed under different scenarios.

\subsection{Ergodic Sum-rate of SISO-NOMA and SISO-OMA}
When decoding the messages of user $k$, the interferences imposed by users $1,2,\ldots,(k-1)$ have been canceled in the SISO-NOMA system by SIC reception.
Therefore, the instantaneous achievable data rate of user $k$ in the SISO-NOMA system considered is given by:
\begin{equation}\label{SISONOMAIndividualAchievableRate}
R_{k}^{\rm{SISO-NOMA}} = {\ln}\left( 1 + \frac{{{p_k}{{\left| {{h_k}} \right|}^2}}}{{\sum\limits_{i = k + 1}^K {{p_i}{{\left| {{h_i}} \right|}^2}}  + {N_0}}} \right).
\end{equation}
On the other hand, in the SISO-OMA system considered, user $k$ is allocated to a subband exclusively, thus there is no inter-user interference (IUI).
As a result, the instantaneous achievable data rate of user $k$ in the SISO-OMA system considered is given by:
\begin{equation}\label{SISOOMAIndividualAchievableRate}
R_{k}^{{\rm{SISO-OMA}}} = f_k {\ln}\left(1+ {\frac{{{p_{k }}{{\left| {{{{h}}_k}} \right|}^2}}}{{f_k N_0}}} \right),
\end{equation}
with ${p_k}$ and $f_k$ denoting the power allocation and frequency allocation of user $k$.
Note that we consider a normalized frequency bandwidth for both the NOMA and OMA schemes in this paper, i.e., $\sum \limits_{k=1}^{K} f_k = 1$.
Under the identical resource allocation strategy, i.e., for ${{p_{k}}} = \frac{{P_{\rm{max}}}}{K}$ and $f_k = 1/K$, we have the instantaneous sum-rate of SISO-NOMA and SISO-OMA given by
\begin{align}
R_{\rm{sum}}^{{\rm{SISO-NOMA}}} &= \sum\limits_{k = 1}^K R_{k}^{\rm{SISO-NOMA}}=  {\ln}\left( 1 + \frac{{P_{\rm{max}}}}{K {N_0}} \sum\limits_{k = 1}^K {{\left| {{{h}_k}} \right|}^2} \right) \;\text{and} \label{InstantSumRateSISONOMA}\\
R_{\rm{sum}}^{{\rm{SISO-OMA}}} &= \sum\limits_{k = 1}^K R_{k}^{\rm{SISO-OMA}}= \frac{1}{K}\sum\limits_{k = 1}^K {\ln}\left( 1 + \frac{P_{\rm{max}}}{{N_0}} {{\left| {{{h}_k}} \right|}^2} \right), \label{InstantSumRateSISOOMA}
\end{align}
respectively.
%

Given the instantaneous sum-rates in \eqref{InstantSumRateSISONOMA} and \eqref{InstantSumRateSISOOMA}, firstly we have to investigate the channel gain distribution before embarking on the derivation of the corresponding ergodic sum-rates.
Since all the users are scattered uniformly across the pair of concentric rings between the inner radius of $D_0$ and the outer radius of $D$ in Fig. \ref{NOMA_Uplink_Model}, the cumulative distribution function (CDF) of the channel gain\footnote{As mentioned before, we assumed that the channel gains of all the users are ordered as $\left| {{{{h}}_1}} \right| \ge \left| {{{{h}}_2}} \right|, \ldots, \ge\left| {{{{h}}_K}} \right|$ in Section \ref{ResourceAllocation}.
However, as shown in \eqref{InstantSumRateSISONOMA}, the system sum-rate for the considered SISO-NOMA system is actually independent of the SIC decoding order.
Therefore, we can safely assume that all the users have i.i.d. channel distribution, which does not affect the performance analysis results.
In the sequel of this paper, the subscript $k$ is dropped without loss of generality.} ${{\left| {{h}} \right|}^2}$ is given by
\begin{equation}\label{ChannelDistributionCDF}
{F_{{{\left| {{h}} \right|}^2}}}\left( x \right) = \int_{D_0}^D {\left( {1 - {e^{ - \left( {1 + {z^\alpha }} \right)x}}} \right)} {f_{{d}}}\left( z \right)dz,
\end{equation}
where ${f_{{d}}}\left( z \right) = \frac{2z}{D^2 - D_0^2}$, $D_0 \le z \le D$, denotes the probability density function (PDF) for the random distance $d$.
%
With the Gaussian-Chebyshev quadrature approximation\cite{abramowitz1964handbook}, the CDF and PDF of ${{\left| {{h}} \right|}^2}$ can be approximated by
\begin{align}
{F_{{{\left| {{h}} \right|}^2}}}\left( x \right) &\approx 1 - \frac{1}{D+D_0}\sum\limits_{n = 1}^N {{\beta _n}{e^{ - {c_n}x}}} \; \text{and} \label{SISOChannelDistributionPDF}\\
{f_{{{\left| {{h}} \right|}^2}}}\left( x \right) &\approx \frac{1}{D+D_0}\sum\limits_{n = 1}^N {{\beta _n}{c_n}{e^{ - {c_n}x}}}, x \ge 0, \label{SISOChannelDistributionCDF}
\end{align}
respectively, where the parameters in \eqref{SISOChannelDistributionPDF} and \eqref{SISOChannelDistributionCDF} are:
\begin{align}\label{BetaCn}
{\beta_n} &= \frac{\pi }{N}\left| {\sin \frac{{2n \hspace{-1mm}-\hspace{-1mm} 1}}{{2N}}\pi } \right|\left( {\frac{D\hspace{-1mm}-\hspace{-1mm}D_0}{2}\cos \frac{{2n \hspace{-1mm}-\hspace{-1mm} 1}}{{2N}}\pi  + \frac{D\hspace{-1mm}+\hspace{-1mm}D_0}{2}} \right) \;\text{and}\notag\\
{c_n} &= 1 + {\left( {\frac{D\hspace{-1mm}-\hspace{-1mm}D_0}{2}\cos \frac{{2n \hspace{-1mm}-\hspace{-1mm} 1}}{{2N}}\pi  + \frac{D\hspace{-1mm}+\hspace{-1mm}D_0}{2}} \right)^\alpha },
\end{align}
while $N$ denotes the number of terms for integral approximation.
The larger $N$, the higher the approximation accuracy becomes.
%

Based on \eqref{InstantSumRateSISONOMA}, the ergodic sum-rate of the SISO-NOMA system considered is defined as:
\begin{align}\label{ErgodicSumRateSISONOMADefine}
\overline{R_{{\mathrm{sum}}}^{{\mathrm{SISO - NOMA}}}} = {{\mathrm{E}}_{\mathbf{H}}}\left\{ {R_{{\mathrm{sum}}}^{{\mathrm{SISO - NOMA}}}} \right\} = {{\mathrm{E}}_{\mathbf{H}}}\left\{ {\ln \left( {1 + \frac{{{P_{{\mathrm{max}}}}}}{{K{N_0}}}\sum\limits_{k = 1}^K {{{\left| {{h_k}} \right|}^2}} } \right)} \right\},
\end{align}
where the expectation ${{\mathrm{E}}_{\mathbf{H}}}\left\{ \cdot \right\}$ is averaged over both the large-scale fading and small-scale fading in the overall channel matrix ${\mathbf{H}}$.
For a large number of users, i.e., $K\rightarrow \infty$, the sum of channel gains of all the users within the $\ln\left(\cdot\right)$ in \eqref{ErgodicSumRateSISONOMADefine} becomes a deterministic value due to the strong law of large number, i.e., $\mathop {\lim } \limits_{K \to \infty } {\frac{1}{K}\sum\limits_{k = 1}^K {{{\left| {{h_k}} \right|}^2}} } = \overline{{{\left| {{h}} \right|}^2}}$, where $\overline{{{\left| {{h}} \right|}^2}}$ denotes the average channel power gain and it is given by
\begin{equation}\label{Mean_ChannelPowerGainSISO}
\overline{{{\left| {{h}} \right|}^2}} = \int_0^\infty  {x{f_{{{\left| {{h}} \right|}^2}}}\left( x \right)} dx
\approx \frac{1}{D+D_0}\sum\limits_{n = 1}^N {\frac{{{\beta _n}}}{{{c_n}}}}.
\end{equation}
Therefore, the asymptotic ergodic sum-rate of the SISO-NOMA system considered is given by
%
\begin{align}\label{ErgodicSumRateSISONOMA}
\hspace{-2mm}\mathop {\lim }\limits_{K \to \infty }  \overline{R_{\rm{sum}}^{{\rm{SISO-NOMA}}}}
& \mathop  = \limits^{(a)}  {{\mathrm{E}}_{\mathbf{H}}}\left\{ \mathop {\lim }\limits_{K \to \infty } {R_{\rm{sum}}^{{\rm{SISO-NOMA}}}} \right\}
= {\ln}\left( 1 + \frac{P_{\rm{max}}}{N_0} \overline{{{\left| {{h}} \right|}^2}} \right) \notag\\
& \approx {\ln}\left(\hspace{-1mm} {1 \hspace{-1mm}+ \hspace{-1mm} \frac{P_{\rm{max}}}{{{\left(D\hspace{-1mm}+\hspace{-1mm}D_0\right)}N_0}}\sum\limits_{n = 1}^N \hspace{-1mm} {\frac{{{\beta _n}}}{{{c_n}}}} } \hspace{-1mm}\right),
\end{align}
where the equality $(a)$ is due to the bounded convergence theorem\cite{bartle2014elements} and owing to the finite channel capacity.
Note that for a finite number of users $K$, the asymptotic ergodic sum-rate in \eqref{ErgodicSumRateSISONOMA} serves as an upper bound for the actual ergodic sum-rate in \eqref{ErgodicSumRateSISONOMADefine}, i.e., we have $\mathop {\lim }\limits_{K \to \infty }  \overline{R_{\rm{sum}}^{{\rm{SISO-NOMA}}}} \ge \overline{R_{\rm{sum}}^{{\rm{SISO-NOMA}}}}$, owing to the concavity of the logarithmic function and the Jensen's inequality.
In the Section \ref{Simulations}, we will show that the asymptotic analysis in \eqref{ErgodicSumRateSISONOMA} is also accurate for a finite value of $K$ and becomes tighter upon increasing $K$.

Similarly, based on \eqref{InstantSumRateSISOOMA}, we can obtain the ergodic sum-rate of the SISO-OMA system as follows:
\begin{align}\label{ErgodicSumRateSISOOMA}
\hspace{-2mm}\overline{R_{\rm{sum}}^{{\rm{SISO-OMA}}}}
& = {{\mathrm{E}}_{\mathbf{H}}}\left\{ \frac{1}{K}\sum\limits_{k = 1}^K {\ln}\left( 1 + \frac{P_{\rm{max}}}{{N_0}} {{\left| {{{h}_k}} \right|}^2} \right) \right\} \notag\\
& \mathop = \limits^{(a)} \int_0^\infty  {{{\ln }}\left( {1 + \frac{P_{\rm{max}}}{N_0}x} \right){f_{{{\left| {{h}} \right|}^2}}}\left( x \right)} dx \notag\\
& = \frac{1}{\left(D\hspace{-1mm}+\hspace{-1mm}D_0\right)}\sum\limits_{n = 1}^N {{\beta _n}{e^{\frac{{{c_n N_0}}}{{P_{\rm{max}}}}}} {\mathcal{E}_1}\left( {\frac{{{c_n N_0}}}{{P_{\rm{max}}}}} \right)},
\end{align}
where ${\mathcal{E}_l}\left( x \right) = \int_1^\infty  {\frac{{{e^{ - xt}}}}{t^l}} dt$ denotes the $l$-order exponential integral\cite{abramowitz1964handbook}.
The equality $(a)$ in \eqref{ErgodicSumRateSISOOMA} is obtained since all the users have i.i.d. channel distributions.
Note that in contrast to SISO-NOMA, $\overline{R_{\rm{sum}}^{{\rm{SISO-OMA}}}}$ in \eqref{ErgodicSumRateSISOOMA} is applicable to SISO-OMA supporting an arbitrary number of users.

\subsection{ESG in Single-antenna Systems}
Comparing \eqref{ErgodicSumRateSISONOMA} and \eqref{ErgodicSumRateSISOOMA}, the asymptotic ESG of SISO-NOMA over SISO-OMA with ${K \rightarrow \infty}$ can be expressed as follows:
\begin{align}\label{EPGSISOERA}
\hspace{-2mm}\mathop {\lim }\limits_{K \rightarrow \infty}  \overline{G^{\rm{SISO}}} &= \mathop {\lim }\limits_{K \to \infty }  \overline{R_{\rm{sum}}^{{\rm{SISO-NOMA}}}} - \overline{R_{\rm{sum}}^{{\rm{SISO-OMA}}}} \notag\\
&\approx {\ln}\left( {1 + \frac{{P_{\rm{max}}}}{{{\left(D+D_0\right)N_0}}}\sum\limits_{n = 1}^N {\frac{{{\beta _n}}}{{{c_n}}}} } \right)
- \frac{1}{\left(D\hspace{-1mm}+\hspace{-1mm}D_0\right)}\sum\limits_{n = 1}^N {{\beta _n}{e^{\frac{{{c_n}N_0}}{{P_{\rm{max}}}}}} {\mathcal{E}_1}\left( {\frac{{{c_n}N_0}}{{P_{\rm{max}}}}} \right)}.
\end{align}
Then, in the high-SNR regime, we can approximate the asymptotic ESG\footnote{Under the sum-power constraint, the system SNR directly depends on the total system power budget ${P_{\rm{max}}}$, and thus the system SNR and ${P_{\rm{max}}}$ are interchangeably in this paper.} in \eqref{EPGSISOERA} by applying $\mathop {\lim }\limits_{x \to 0} {\mathcal{E}_1}\left( x \right)\approx - \ln \left( x \right) - \gamma $ \cite{abramowitz1964handbook} as
\begin{equation}\label{EPGSISOERA2}
\mathop {\lim }\limits_{K \rightarrow \infty, {P_{\rm{max}}} \rightarrow \infty}  \overline{G^{\rm{SISO}}} \approx \vartheta \left( {D,{D_0}} \right) + \gamma,
\end{equation}
where $\vartheta \left( {D,{D_0}} \right)$ is given by
\begin{equation}\label{NearFarDiversity}
\vartheta \left( {D,{D_0}} \right)
 = \ln \left( {\frac{{\sum\limits_{n = 1}^N {\left( {\frac{1}{{{c_n}}}} \right)\frac{{{\beta _n}}}{{\left( {D + {D_0}} \right)}}} }}{{\mathop \Pi \limits_{n = 1}^N {{\left( {\frac{1}{{{c_n}}}} \right)}^{\frac{{{\beta _n}}}{{\left( {D + {D_0}} \right)}}}}}}} \right)
\end{equation}
and $\gamma = 0.57721$ is the Euler-Mascheroni constant\cite{abramowitz1964handbook}.
Based on the weighted arithmetic and geometric means (AM-GM) inequality\cite{kedlaya1994proof}, we can observe that $\vartheta \left( {D,{D_0}} \right) \ge 0$.
This implies that $\mathop {\lim }\limits_{K \rightarrow \infty, {P_{\rm{max}}} \rightarrow \infty}  \overline{G^{\rm{SISO}}}>0$ and SISO-NOMA provides a higher asymptotic ergodic sum-rate than SISO-OMA in the system considered.

To further simplify the expression of ESG, we consider path loss exponents $\alpha$ in the range of $\alpha \in \left[3,6\right]$ in \eqref{BetaCn}, which usually holds in urban environments\cite{Access2010}.
As a result, $c_n \gg 1$.
Hence, $\vartheta \left( {D,{D_0}} \right)$ in \eqref{NearFarDiversity} can be further simplified as follows:
\begin{equation}\label{NearFarDiversity2}
\vartheta \left( {D,{D_0}} \right)\approx \vartheta \left( \eta  \right) =\ln \left( {\frac{{\frac{\pi }{{N\left( {1 + \eta } \right)}}\sum\limits_{n = 1}^N {\left[ {{\lambda _n}\left( \eta  \right)} \right]^{1 - \alpha }}\left| {\sin \frac{{2n - 1}}{{2N}}\pi } \right| }}{{\mathop \Pi \limits_{n = 1}^N {{\left[ {{\lambda_n}\left( \eta  \right)} \right]}^{ - \frac{{\alpha \pi {\lambda_n}\left( \eta  \right)}}{{N\left( {1 + \eta } \right)}}\left| {\sin \frac{{2n - 1}}{{2N}}\pi } \right|}}}}} \right),
\end{equation}
where ${\lambda _n}\left( \eta  \right) = \left( {\frac{{\eta - 1}}{2}\cos \left( {\frac{{2n - 1}}{{2N}}\pi } \right) + \frac{{\eta +1}}{2}} \right) \in \left[1, \eta\right)$.
The normalized cell size of $\eta = \frac{D}{D_0} \ge 1$ is the ratio between the outer radius $D$ and the inner radius ${D_0}$, which also serves as a metric of the path loss discrepancy.

We can see that the asymptotic ESG of SISO-NOMA over SISO-OMA in \eqref{EPGSISOERA2} is composed of two components, i.e., $\vartheta \left( {D,{D_0}} \right)$ and $\gamma$.
As observed in \eqref{NearFarDiversity2}, the former component of $\vartheta \left( {D,{D_0}} \right) \approx \vartheta \left( \eta  \right) $ only depends on the normalized cell size of $\eta = \frac{D}{D_0}$ instead of the absolute values of $D$ and ${D_0}$.
In fact, it can characterize the \emph{large-scale near-far gain} attained by NOMA via exploiting the discrepancy in distances among NOMA users.
Interestingly, for the extreme case that all the users are randomly deployed on a circle, i.e., $D = D_0$, we have $\eta = 1$, ${\lambda _n}\left( \eta  \right) = 1$, and $\vartheta \left( \eta  \right) = 0$.
In other words, the large-scale near-far gain disappears, when all the users are of identical distance away from the BS.
With the aid of $\vartheta \left( \eta  \right) = 0$, we can observe in \eqref{EPGSISOERA2} that the performance gain achieved by NOMA is a constant value of $\gamma = 0.57721$ nat/s/Hz.
Since all the users are set to have the same distance when $D = D_0$, the minimum asymptotic ESG $\gamma$ arising from the small-scale Rayleigh fading is named as the \emph{small-scale fading gain} in this paper.
In fact, in the asymptotic case associated with $K \rightarrow \infty$ and ${P_{\rm{max}}} \rightarrow \infty$, SISO-NOMA provides \emph{at least $\gamma =  0.57721$} nat/s/Hz spectral efficiency gain over SISO-OMA for an arbitrary cell size in Rayleigh fading channels.
%
Additionally, we can see that the ESG of SISO-NOMA over SISO-OMA is saturated in the high-SNR regime.
This is because the instantaneous sum-rates of both the SISO-NOMA system in \eqref{InstantSumRateSISONOMA} and the SISO-OMA system in \eqref{InstantSumRateSISOOMA} increase logarithmically with ${P_{\rm{max}}} \rightarrow \infty$.
%

%


%
To visualize the large-scale near-far gain, we illustrate the asymptotic ESG in \eqref{EPGSISOERA2} versus $D$ and $D_0$ in Fig. \ref{ESG_ERA}.
We can observe that when $\eta = 1$,  the large-scale near-far gain disappears and the asymptotic ESG is bounded from below by its minimum value of $\gamma = 0.57721$ nat/s/Hz due to the small-scale fading gain.
Additionally, for different values of $D$ and $D_0$ but with a fixed $\eta = \frac{D}{D_0}$, SISO-NOMA offers the same ESG compared to SISO-OMA.
This is because as predicted in \eqref{NearFarDiversity2}, the large-scale near-far gain only depends on the normalized cell size $\eta$.
More importantly, we can observe that the large-scale near-far gain increases with the normalized cell size $\eta$.
In fact, for a larger normalized cell size $\eta$, the heterogeneity in the large-scale fading among users becomes higher and SISO-NOMA attains a higher near-far gain, hence improving the sum-rate performance.

%
%
%

\begin{figure}[t]
\centering
\includegraphics[width=3.5in]{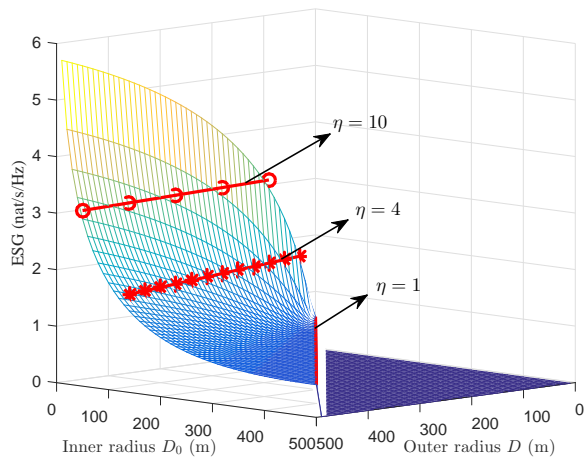}
\caption{The asymptotic ESG in \eqref{EPGSISOERA2} under equal resource allocation versus $D$ and $D_0$ with $K \rightarrow \infty$ and ${P_{\rm{max}}} \rightarrow \infty$.}
\label{ESG_ERA}
\end{figure}

\begin{Remark}
Note that it has been analytically shown in \cite{Dingtobepublished} that two users with a large distance difference (or equivalently channel gain difference) are preferred to be paired.
This is consistent with our conclusion in this paper, where a larger normalized cell size $\eta$ enables a higher ESG of NOMA over OMA.
However, \cite{Dingtobepublished} only considered a pair of two NOMA users.
%
%
In this paper, we analytically obtain the ESG of NOMA over OMA for a more general NOMA system supporting a large number of UL users.
More importantly, we identify two kinds of gains in the ESG derived and reveal their different behaviours.
\end{Remark}
\section{ESG of MIMO-NOMA over MIMO-OMA}
In this section, the ergodic sum-rates of MIMO-NOMA and MIMO-OMA associated with FDMA-ZF as well as FDMA-MRC are firstly analyzed. Then, the asymptotic ESGs of MIMO-NOMA over MIMO-OMA with FDMA-ZF and FDMA-MRC detection are investigated.
\subsection{Ergodic Sum-rate of MIMO-NOMA with MMSE-SIC}
Let us consider that an $M$-antenna BS serves $K$ single-antenna non-orthogonal users relying on MIMO-NOMA.
The BS employs MMSE-SIC detection for retrieving the messages of all the users.
The instantaneous achievable data rate of user $k$ in the MIMO-NOMA system relying on MMSE-SIC detection\footnote{The derivation of individual rates in \eqref{MIMONOMAIndividualAchievableRate} for MMSE-SIC detection of MIMO-NOMA is based on the matrix inversion lemma: \[\log \left| {{\bf{A}} + {\bf{h}}{{\bf{h}}^{\rm{H}}}} \right| - \log \left| {\bf{A}} \right| = \log \left| {1 + {{\bf{h}}^{\rm{H}}}{{\bf{A}}^{ - 1}}{\bf{h}}} \right|.\]
	Interested readers are referred to \cite{Tse2005} for a detailed derivation.} is given by\cite{Tse2005}:
\begin{align}\label{MIMONOMAIndividualAchievableRate}
R_{k}^{{\rm{MIMO-NOMA}}} = \ln \left| {{{\bf{I}}_M} + \frac{1}{{{N_0}}}\sum\limits_{i = k}^K {{p_i}{{\bf{h}}_i}{\bf{h}}_i^{\rm{H}}} } \right|
- \ln \left| {{{\bf{I}}_M} + \frac{1}{{{N_0}}}\sum\limits_{i = k + 1}^K {{p_i}{{\bf{h}}_i}{\bf{h}}_i^{\rm{H}}} } \right|.
\end{align}
As a result, the instantaneous sum-rate of MIMO-NOMA is obtained as
\begin{equation}\label{InstantSumRateMIMONOMA}
R_{\rm{sum}}^{{\rm{MIMO-NOMA}}} = \sum\limits_{k = 1}^K R_{k}^{{\rm{MIMO-NOMA}}}
= {\ln}\left| {{{\bf{I}}_M} + \frac{1}{{{N_0}}}\sum\limits_{k = 1}^K p_k {{{\bf{h}}_k}{\bf{h}}_k^{\rm{H}}} } \right|.
\end{equation}

In fact, MMSE-SIC is capacity-achieving \cite{Tse2005} and \eqref{InstantSumRateMIMONOMA} is the channel capacity for a given instantaneous channel matrix ${\bf{H}}$\cite{GoldsmithMIMOCapacity2003}.
In general, it is a challenge to obtain a closed-form expression for the instantaneous channel capacity above due to the determinant of the summation of matrices in \eqref{InstantSumRateMIMONOMA}.
To provide some insights, in the following theorem, we consider an asymptotically tight upper bound for the achievable sum-rate in \eqref{InstantSumRateMIMONOMA} associated with $K \to \infty$.

\begin{Thm}\label{Theorem1}
For the MIMO-NOMA system considered in \eqref{MIMONOMASystemModel} relying on MMSE-SIC detection, given any power allocation strategy ${\bf{p}} = \left[ {{p_1}, \ldots ,{p_K}} \right]$, the achievable sum-rate in \eqref{InstantSumRateMIMONOMA} is upper bounded by
\begin{equation}\label{InstantSumRateMIMONOMA_UpperBound}
\hspace{-2mm}R_{\rm{sum}}^{{\rm{MIMO-NOMA}}} \le M{\ln}\left( {1 + \frac{1}{{M{N_0}}}\sum\limits_{k = 1}^K {{p_k}{{\left\| {{{\bf{h}}_k}} \right\|}^2}} } \right).
\end{equation}
This upper bound is asymptotically tight, when $K \to \infty$, i.e.,
\begin{equation}\label{AsympInstantSumRateMIMONOMA}
\hspace{-2mm}\mathop {\lim }\limits_{K \rightarrow \infty} \hspace{-1mm} R_{\rm{sum}}^{{\rm{MIMO-NOMA}}}
\hspace{-0.5mm}=\hspace{-0.5mm} \mathop {\lim }\limits_{K \rightarrow \infty} M{\ln}\hspace{-0.5mm}\left( \hspace{-0.5mm}{1 \hspace{-0.5mm}+\hspace{-0.5mm} \frac{1}{{M{N_0}}}\sum\limits_{k = 1}^K {{p_k}{{\left\| {{{\bf{h}}_k}} \right\|}^2}} } \hspace{-0.5mm}\right)\hspace{-0.5mm}.
\end{equation}
\end{Thm}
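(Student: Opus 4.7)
The plan is to prove the inequality first and then establish asymptotic tightness by showing that the aggregate signal covariance matrix converges to a scaled identity, which collapses the determinant bound into an equality.

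For the upper bound in \eqref{InstantSumRateMIMONOMA_UpperBound}, I would start by defining $\mathbf{A} = {{\bf{I}}_M} + \frac{1}{{{N_0}}}\sum\nolimits_{k = 1}^K p_k {{{\bf{h}}_k}{\bf{h}}_k^{\rm{H}}}$, which is positive definite. Let $\lambda_1, \ldots, \lambda_M$ denote its eigenvalues. The determinant-trace inequality (equivalent to the AM-GM inequality applied to eigenvalues of a positive definite matrix), $\det(\mathbf{A}) = \prod_{m=1}^M \lambda_m \le \bigl(\tfrac{1}{M}\sum_{m=1}^M \lambda_m\bigr)^M = \bigl(\tfrac{\operatorname{Tr}(\mathbf{A})}{M}\bigr)^M$, gives $\ln\det(\mathbf{A}) \le M\ln(\operatorname{Tr}(\mathbf{A})/M)$. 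Using the cyclic property of the trace, $\operatorname{Tr}(\mathbf{h}_k \mathbf{h}_k^{\rm H}) = \|\mathbf{h}_k\|^2$, so $\operatorname{Tr}(\mathbf{A}) = M + \frac{1}{N_0}\sum_{k=1}^K p_k \|\mathbf{h}_k\|^2$, and the desired bound in \eqref{InstantSumRateMIMONOMA_UpperBound} follows immediately by substitution. This part is completely routine and does not depend on the specific power allocation.

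For the asymptotic tightness in \eqref{AsympInstantSumRateMIMONOMA}, the key observation is that the AM-GM step is tight precisely when all eigenvalues of $\mathbf{A}$ coincide, i.e., when $\mathbf{A}$ is a scalar multiple of $\mathbf{I}_M$. I would therefore argue that under the sum-power constraint \eqref{SumPowerConstraint} together with an equal-power (or any asymptotically vanishing per-user-power) allocation, the off-diagonal mass in $\sum_{k=1}^K p_k \mathbf{h}_k \mathbf{h}_k^{\rm H}$ averages out. Concretely, since the $\mathbf{h}_k$ are i.i.d.\ over users with $\mathrm{E}[\mathbf{h}_k \mathbf{h}_k^{\rm H}] = \mathrm{E}[|h|^2]\,\mathbf{I}_M$ (by the channel model \eqref{ChannelModel} with i.i.d.\ entries of $\mathbf{g}_k$ and independent user locations), the strong law of large numbers yields $\frac{1}{K}\sum_{k=1}^K \mathbf{h}_k \mathbf{h}_k^{\rm H} \to \overline{|h|^2}\,\mathbf{I}_M$ almost surely. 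Hence $\mathbf{A} \to \bigl(1 + \tfrac{P_{\max}}{N_0}\overline{|h|^2}\bigr)\mathbf{I}_M$, so all eigenvalues coalesce and the AM-GM inequality becomes asymptotically an equality. Evaluating both sides of \eqref{InstantSumRateMIMONOMA_UpperBound} at this limit gives the same value $M\ln\bigl(1 + \tfrac{P_{\max}}{N_0}\overline{|h|^2}\bigr)$, since on the right-hand side the same LLN applied to the scalar sum $\frac{1}{K}\sum_k \|\mathbf{h}_k\|^2 \to M\,\overline{|h|^2}$ produces exactly the same limit once the factor of $M$ inside and outside the logarithm is tracked carefully.

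The main conceptual obstacle is justifying the coalescence of eigenvalues uniformly in the power profile $\mathbf{p}$: the LLN argument above requires that no single $p_k$ carries a non-vanishing fraction of the total budget (otherwise a rank-one spike survives in the limit and $\mathbf{A}$ is not asymptotically isotropic). For the equal-power case used throughout the paper this is automatic, so I would either state the asymptotic result under that mild regularity on $\mathbf{p}$ or, more cleanly, invoke continuity of $\ln\det$ together with the almost-sure matrix-valued LLN to pass to the limit inside the determinant. A remaining technical point is that the LLN gives almost-sure convergence of a normalized sum, so I would first rewrite the sum as $\frac{1}{K}\sum_k (K p_k)\mathbf{h}_k\mathbf{h}_k^{\rm H}$ and note that $K p_k$ is bounded and tends to a constant in the equal-allocation case, making the application of LLN rigorous. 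Once the limit $\mathbf{A}_\infty = c\,\mathbf{I}_M$ is established, the tightness of AM-GM at equal eigenvalues finishes the proof.
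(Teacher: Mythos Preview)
Your proof is correct but follows a genuinely different route from the paper. For the upper bound, you apply the determinant--trace (AM--GM on eigenvalues) inequality directly, which is more elementary than the paper's approach: the paper instead constructs a virtual $K$-user $M\times M$ MIMO system in which user $k$ has channel $\|\mathbf{h}_k\|\mathbf{I}_M$, observes that the original system is recovered by the specific precoder $\mathbf{u}_k=\mathbf{h}_k/\|\mathbf{h}_k\|$, and then computes the virtual system's capacity explicitly. For asymptotic tightness, the paper sandwiches $R_{\mathrm{sum}}^{\mathrm{MIMO-NOMA}}$ between the upper bound and a \emph{lower} bound given by the achievable rate of an explicit MRC-SIC receiver, and proves a sequence of lemmas on the distribution of the channel directions $\mathbf{e}_k=\mathbf{h}_k/\|\mathbf{h}_k\|$ (uniform on the sphere, independent of $\|\mathbf{h}_k\|$, with ${\rm E}\{|\mathbf{e}_k^{\rm H}\mathbf{e}_i|^2\}=1/M$) to show that the two bounds coalesce as $K\to\infty$. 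You instead argue directly that the aggregate matrix $\sum_k p_k\mathbf{h}_k\mathbf{h}_k^{\rm H}$ becomes isotropic by a matrix-valued LLN, so AM--GM attains equality. Your argument is shorter and purely linear-algebraic; the paper's is longer but has the operational payoff of exhibiting a low-complexity receiver (MRC-SIC) that is asymptotically capacity-achieving---a fact the authors explicitly reuse in the proof of Theorem~\ref{Theorem2} for the massive-MIMO case. Your caveat that the LLN step needs $\max_k p_k\to 0$ (no single user carrying a non-vanishing power fraction) is well placed and applies equally to the paper's replacement of $|\nu_{k,i}|^2$ by its mean $1/M$, which is made under the same tacit regularity.
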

\begin{proof}
Please refer to Appendix A for the proof of Theorem \ref{Theorem1}.
\end{proof}

Now, given the instantaneous achievable sum-rate obtained in \eqref{AsympInstantSumRateMIMONOMA}, we proceed to calculate the ergodic sum-rate.
Given the distance from a user to the BS as $d$, the channel gain ${{\left\| {{\mathbf{h}}} \right\|}^2}$ follows the Gamma distribution\cite{yang2017noma}, whose conditional PDF and CDF are given by\footnote{Similar to \eqref{SISOChannelDistributionPDF} and \eqref{SISOChannelDistributionCDF}, we can safely assume that all the users have i.i.d. channel distribution within the cell and drop the subscript $k$ in \eqref{MIMOChannelDistribution}, since the system sum-rate in \eqref{InstantSumRateMIMONOMA} is independent of the MMSE-SIC decoding order\cite{Tse2005}.}
\begin{equation}\label{MIMOChannelDistribution}
{f_{{{\left\| {{\mathbf{h}}} \right\|}^2 | d}}}\left( x \right) = {{\rm{Gamma}}} \left(M, 1+ d^\alpha,x\right)\; \text{and} \;
{F_{{{\left\| {{\mathbf{h}}} \right\|}^2 | d}}}\left( x \right) = \frac{{\gamma_{L} \left( {M,\left( {1 + d^\alpha } \right)x} \right)}}{\Gamma \left( M \right)},
\end{equation}
respectively, where ${\rm{Gamma}} \left(M, \lambda,x\right) = \frac{{{{ \lambda }^M}{x^{M - 1}}{e^{ -  \lambda x}}}}{\Gamma \left( M \right)}$ denotes the PDF of a random variable obeying a Gamma distribution, ${\Gamma \left( M \right)}$ denotes the Gamma function, and ${\gamma_{L} \left( {M,\left( {1 + d^\alpha } \right)x} \right)}$ denotes the lower incomplete Gamma function.
Then, the CDF of the channel gain ${{\left\| {{\mathbf{h}}} \right\|}^2}$ can be obtained by
\begin{equation}
{F_{{{\left\| {{\mathbf{h}}} \right\|}^2}}}\left( x \right) = \int_{D_0}^D \frac{{\gamma_{L} \left( {M,\left( {1 + d^\alpha } \right)x} \right)}}{{\Gamma \left(M\right)}} {f_{{d}}}\left( z \right)dz.
\end{equation}
By applying the Gaussian-Chebyshev quadrature approximation\cite{abramowitz1964handbook}, the CDF and PDF of ${{\left\| {{\mathbf{h}}} \right\|}^2}$ can be written as
\begin{align}
\hspace{-2mm}{F_{{{\left\| {{\mathbf{h}}} \right\|}^2}}}\left( x \right) &\approx 1 - \frac{1}{D+D_0}\sum\limits_{n = 1}^N \frac{{{\beta _n}\gamma_{L} \left( {M,{c_n}x} \right)}}{{\Gamma \left(M\right)}} \; \text{and} \notag\\
\hspace{-2mm}{f_{{{\left\| {{\mathbf{h}}} \right\|}^2}}}\left( x \right) &\approx \frac{1}{D+D_0}\sum\limits_{n = 1}^N {{\beta _n} {\rm{Gamma}} \left(M, c_n,x\right)}, x \ge 0,
\end{align}
respectively, where ${\beta _n}$ and ${c_n}$ are given in \eqref{BetaCn}.

According to \eqref{AsympInstantSumRateMIMONOMA}, given the equal resource allocation strategy, i.e., ${{p_{k}}} = \frac{{P_{\rm{max}}}}{K}$, the asymptotic ergodic sum-rate of MIMO-NOMA associated with $K\rightarrow \infty$ can be obtained as follows:
\begin{align} \label{ErgodicSumRateMIMONOMA}
\mathop {\lim }\limits_{K \to \infty } \overline{R_{\rm{sum}}^{{\rm{MIMO-NOMA}}}}
& = \mathop {\lim }\limits_{K \to \infty } {{\mathrm{E}}_{\mathbf{H}}}\left\{ {R_{{\mathrm{sum}}}^{{\mathrm{MIMO - NOMA}}}} \right\} = M{\ln}\left( 1 + \frac{{P_{\rm{max}}}}{MN_0} \overline{{{\left\| {{\mathbf{h}}} \right\|}^2}} \right) \\
& \approx M{\ln}\left(\hspace{-1mm}{1+ \frac{{P_{\rm{max}}}}{{{\left(D\hspace{-1mm}+\hspace{-1mm}D_0\right)N_0}}}
\sum\limits_{n = 1}^N \hspace{-1mm}{\frac{{{\beta _n}}}{{{c_n}}}} } \hspace{-1mm}\right),\notag
\end{align}
where $\overline{{{\left\| {{\mathbf{h}}} \right\|}^2}}$ denotes the average channel gain, which is given by
\begin{equation}\label{Mean_ChannelPowerGainMIMO}
\overline{{{\left\| {{\mathbf{h}}} \right\|}^2}} = \int_0^\infty  {x{f_{{{\left\| {{\mathbf{h}}} \right\|}^2}}}\left( x \right)} dx
\approx \frac{M}{D+D_0}\sum\limits_{n = 1}^N {\frac{{{\beta _n}}}{{{c_n}}}}.
\end{equation}

\begin{Remark} \label{Remark2}
Comparing \eqref{ErgodicSumRateSISONOMA} and \eqref{ErgodicSumRateMIMONOMA}, we can observe that for a sufficiently large number of users, the considered MIMO-NOMA system is asymptotically equivalent to a SISO-NOMA system with $M$-fold increases in DoF and an equivalent average channel gain of $\overline{{{\left\| {{{\bf{h}}}} \right\|}^2}}$ in each DoF.
Intuitively, when the number of UL receiver antennas at the BS, $M$, is much smaller than the number of users, $K \to \infty$, which corresponds to the extreme asymmetric case of MIMO-NOMA, the multi-antenna BS behaves asymptotically in the same way as a single-antenna BS.
Additionally, when $K \gg M$, due to the diverse channel directions of all the users, the received signals fully span the $M$-dimensional signal space\cite{WangMUG}.
Therefore, MIMO-NOMA using MMSE-SIC reception can fully exploit the system's spatial DoF, $M$, and its performance can be approximated by that of a SISO-NOMA system with $M$-fold DoF.
\end{Remark}

\subsection{Ergodic Sum-rate of MIMO-OMA with FDMA-ZF}
Upon installing more UL receiver antennas at the BS, ZF can be employed for MUD and the MIMO-OMA system using FDMA-ZF can accommodate $M$ users on each frequency subband.
As mentioned before, we adopt a random user grouping strategy for the MIMO-OMA system using FDMA-ZF detection, where we randomly select $M$ users as a group and denote the composite channel matrix of the $g$-th group by ${\bf{H}}_g = \left[ {{{\bf{h}}_{(g-1)M+1}},{{\bf{h}}_{(g-1)M+2}}, \ldots ,{{\bf{h}}_{gM}}} \right] \in \mathbb{C}^{ M \times M}$.
Then, the instantaneous achievable data rate of user $k$ in the MIMO-OMA system is given by
\begin{equation}\label{MIMOOMAZFIndividualAchievableRate}
R_{k,\rm{FDMA-ZF}}^{{\rm{MIMO-OMA}}} = f_g {\ln}\left(1+ {\frac{{{p_{k }}{{\left| {{{\mathbf{w}_{g,k}^{\rm{H}}\mathbf{h}}_k}} \right|}^2}}}{{f_g N_0}}} \right),
\end{equation}
where $f_g$ denotes the normalized frequency allocation for the $g$-th group.
The vector $\mathbf{w}_{g,k} \in \mathbb{C}^{ M \times 1}$ denotes the normalized ZF detection vector for user $k$ with ${\left\| {{{\mathbf{w}_{g,k}}}} \right\|}^2 = 1$, which is obtained based on the pseudoinverse of the composite channel matrix ${\bf{H}}_g$ in the $g$-th user group\cite{Tse2005}.

Given the equal resource allocation strategy, i.e., ${{p_{k}}} = \frac{{P_{\rm{max}}}}{K}$ and $f_g = 1/G = \frac{M}{K}$, the instantaneous sum-rate of MIMO-OMA using FDMA-ZF can be formulated as:
\begin{equation}\label{InstantSumRateMIMOOMAZF}
R_{\rm{sum,FDMA-ZF}}^{{\rm{MIMO-OMA}}}
= \sum\limits_{k = 1}^K R_{k,\rm{FDMA-ZF}}^{{\rm{MIMO-OMA}}}
= \frac{M}{K}\sum\limits_{k = 1}^K {\ln}\left(1+ \frac{{P_{\rm{max}}}}{MN_0}{\left| {{{\mathbf{w}_{g,k}^{\rm{H}}\mathbf{h}}_k}} \right|}^2 \right).
\end{equation}
Since ${\left\| {{{\mathbf{w}_{g,k}}}} \right\|}^2 = 1$ and ${\bf{g}}_k \sim \mathcal{CN}\left(\mathbf{0},{{\bf{I}}_M}\right)$, we have ${{{{\mathbf{w}_{g,k}^{\rm{H}}\mathbf{g}}_k}}} \sim \mathcal{CN}\left({0},1\right)$ \cite{Tse2005}.
As a result, ${\left| {{{\mathbf{w}_{g,k}^{\rm{H}}\mathbf{h}}_k}} \right|^2}$ in \eqref{InstantSumRateMIMOOMAZF} has an identical distribution with ${{\left| {{h}} \right|}^2}$, i.e.,  its CDF and PDF are given by \eqref{SISOChannelDistributionPDF} and \eqref{SISOChannelDistributionCDF}, respectively.
Therefore, the ergodic sum-rate of the MIMO-OMA system considered can be expressed as:
\begin{align}\label{ErgodicSumRateMIMOOMAZF}
\hspace{-2mm}\overline{R_{\rm{sum,FDMA-ZF}}^{{\rm{MIMO-OMA}}}} &= {{\mathrm{E}}_{\mathbf{H}}}\left\{ {R_{\rm{sum,FDMA-ZF}}^{{\rm{MIMO-OMA}}}} \right\} = \int_0^\infty  M{{{\ln }}\left( {1 + \frac{P_{\rm{max}}}{MN_0}x} \right){f_{{{\left| {{h}} \right|}^2}}}\left( x \right)} dx \\
& = \frac{M}{\left(D\hspace{-1mm}+\hspace{-1mm}D_0\right)}\sum\limits_{n = 1}^N \hspace{-1mm}{{\beta _n}{e^{\frac{{{c_n M N_0}}}{{P_{\rm{max}}}}}} {\mathcal{E}_1}\left( {\frac{{{c_n M N_0}}}{{P_{\rm{max}}}}} \right)}.\notag
\end{align}

\subsection{Ergodic Sum-rate of MIMO-OMA with FDMA-MRC}
The instantaneous achievable data rate of user $k$ in the MIMO-OMA system using the FDMA-MRC receiver is given by
\begin{equation}\label{MIMOOMAMRCIndividualAchievableRate}
R_{k,\rm{FDMA-MRC}}^{{\rm{MIMO-OMA}}} =
f_k {\ln}\left(1+ {\frac{{{p_{k }}{{\left\| {{{{\mathbf{h}}}_k}} \right\|}^2}}}{{f_k N_0}}} \right).
\end{equation}
Upon adopting the equal resource allocation strategy, i.e., ${{p_{k}}} = \frac{{P_{\rm{max}}}}{K}$ and $f_k = 1/K$, the instantaneous sum-rate of MIMO-OMA relying on FDMA-MRC is obtained by
\begin{equation}\label{InstantSumRateMIMOOMAMRC}
R_{\rm{sum,FDMA-MRC}}^{{\rm{MIMO-OMA}}}
= \sum\limits_{k = 1}^K R_{k,\rm{FDMA-MRC}}^{{\rm{MIMO-OMA}}}
= \frac{1}{K}\sum\limits_{k = 1}^K {\ln}\left(1+ \frac{{P_{\rm{max}}}}{N_0}{\left\| {{{\mathbf{h}}_k}} \right\|}^2 \right).
\end{equation}
Averaging $R_{\rm{sum,FDMA-MRC}}^{{\rm{MIMO-OMA}}}$ over the channel fading, we arrive at the ergodic sum-rate of MIMO-OMA using FDMA-MRC as
\begin{align}\label{ErgodicSumRateMIMOOMAMRC}
\overline{R_{\rm{sum,FDMA-MRC}}^{{\rm{MIMO-OMA}}}} &= {\mathrm{E}_{\bf{H}}}\left\{ \frac{1}{K}\sum\limits_{k = 1}^K{\ln}\left(1+ \frac{{P_{\rm{max}}}}{N_0}{\left\| {{{\mathbf{h}}_k}} \right\|}^2 \right) \right\} \notag\\
& = \int_0^\infty  {{{\ln }}\left( {1 + \frac{{P_{\rm{max}}}}{{N_0}}x} \right){f_{{{\left\| {{\mathbf{h}}} \right\|}^2}}}\left( x \right)} dx \notag\\
& = \frac{1}{{\left( {D + {D_0}} \right)}}\sum\limits_{n = 1}^N {{\beta _n}\underbrace {\int_0^\infty  {\ln \left( {1 + \frac{{P_{\rm{max}}}}{{N_0}}x} \right)} {{\rm{Gamma}}}\left( {M,{c_n}} \right)dx}_{{T_n}}},
\end{align}
with $T_n$ given by
\begin{align}\label{T_N_App}
{T_n} & \mathop = \limits^{(a)} {\int_0^\infty  {\ln \left( 1 + t \right)} {{\rm{Gamma}}}\left( {M,\frac{{{N_0}{c_n}}}{{{P_{\rm{max}}}}}} \right)dt} \notag\\
& \mathop = \limits^{(b)}  \frac{\left( \frac{{{N_0}{c_n}}}{{{P_{\rm{max}}}}} \right)^M}{{\Gamma \left(M\right)}} G_{2,3}^{3,1}\left( {\begin{array}{*{20}{c}}
{ - M, - M + 1}\\
{ - M, - M, 0}
\end{array}\left| {\frac{{{N_0}{c_n}}}{{{P_{\rm{max}}}}}} \right.} \right),
\end{align}
where $G_{p,q}^{m,n}\left(  \cdot  \right)$ denotes the Meijer G-function.
The equality $(a)$ in \eqref{T_N_App} is obtained due to $t = \frac{{{P_{\rm{max}}}}}{{N_0}}x \sim {{\rm{Gamma}}}\left( {{M},\frac{{{N_0}{c_n}}}{{{P_{\rm{max}}}}}} \right)$ and the equality $(b)$ in \eqref{T_N_App} is based on Equation (3) in \cite{Heath2011}.
Now, the ergodic sum-rate of MIMO-OMA using FDMA-MRC can be written as
\begin{equation}\label{ErgodicSumRateMIMOOMAMRC2}
\overline{R_{\rm{sum,FDMA-MRC}}^{{\rm{MIMO-OMA}}}}
= \frac{1}{{\left( {D + {D_0}} \right)}}\sum\limits_{n = 1}^N {\beta _n}\left( \frac{\left( \frac{{N_0{c_n}}}{{{P_{\rm{max}}}}} \right)^M}{{\Gamma \left(M\right)}} G_{2,3}^{3,1}\left( {\begin{array}{*{20}{c}}
{ - M, - M + 1}\\
{ - M, - M, 0}
\end{array}\left| {\frac{{N_0{c_n}}}{{P_{\rm{max}}}}} \right.} \right)  \right).
\end{equation}

Note that, the ergodic sum-rate in \eqref{ErgodicSumRateMIMOOMAMRC2} is applicable to an arbitrary number of users $K$ and an arbitrary SNR, but it is too complicated to offer insights concerning the ESG of MIMO-NOMA over MIMO-OMA.
Hence, based on \eqref{ErgodicSumRateMIMOOMAMRC}, we derive the asymptotic ergodic sum-rate of MIMO-OMA with FDMA-MRC in the low-SNR regime with ${P_{\rm{max}}} \rightarrow 0$ as follows:
\begin{equation}\label{ErgodicSumRateMIMOOMAMRC3}
\mathop {\lim }\limits_{{P_{\rm{max}}} \rightarrow 0} \overline{R_{\rm{sum,FDMA-MRC}}^{{\rm{MIMO-OMA}}}} = \frac{{P_{\rm{max}}}}{{N_0}} \overline{{{\left\| {{\mathbf{h}}} \right\|}^2}}
= \frac{M{P_{\rm{max}}}}{{N_0} \left(D+D_0\right)}\sum\limits_{n = 1}^N {\frac{{{\beta _n}}}{{{c_n}}}}.
\end{equation}
On the other hand, in the high-SNR regime, based on \eqref{ErgodicSumRateMIMOOMAMRC}, the asymptotic ergodic sum-rate of MIMO-OMA using FDMA-MRC is given by
\begin{equation}\label{ErgodicSumRateMIMOOMAMRC4}
\mathop {\lim }\limits_{{P_{\rm{max}}} \rightarrow \infty}  \overline{R_{\rm{sum,FDMA-MRC}}^{{\rm{MIMO-OMA}}}}
= {\ln}\left(\frac{{P_{\rm{max}}}}{N_0}\right) + {\mathrm{E}_{\bf{h}}}\left\{ {\ln}\left({\left\| {{{\mathbf{h}}}} \right\|}^2 \right) \right\}.
\end{equation}
\subsection{ESG in Multi-antenna Systems}
By comparing \eqref{ErgodicSumRateMIMONOMA} and \eqref{ErgodicSumRateMIMOOMAZF}, we have the asymptotic ESG of MIMO-NOMA over MIMO-OMA relying on FDMA-ZF as follows:
\begin{align}\label{EPGMIMOERA}
\hspace{-3mm}\mathop {\lim }\limits_{K \rightarrow \infty}  \overline{G^{\rm{MIMO}}_{\rm{FDMA-ZF}}} &\hspace{-1mm}=
\mathop {\lim }\limits_{K \to \infty }  \overline{R_{\rm{sum}}^{{\rm{MIMO-NOMA}}}} - \overline{R_{\rm{sum,FDMA-ZF}}^{{\rm{MIMO-OMA}}}} \notag\\
&\hspace{-1mm}\approx M{\ln}\left( {1 \hspace{-1mm}+\hspace{-1mm} \frac{{P_{\rm{max}}}}{{{\left(D\hspace{-1mm}+\hspace{-1mm}D_0\right)N_0}}}\sum\limits_{n = 1}^N {\frac{{{\beta _n}}}{{{c_n}}}} } \right)\hspace{-1mm}- \hspace{-1mm} \frac{M}{\left(D\hspace{-1mm}+\hspace{-1mm}D_0\right)}\hspace{-1mm}\sum\limits_{n = 1}^N \hspace{-1mm}{{\beta _n}{e^{\frac{{{c_n}MN_0}}{{P_{\rm{max}}}}}} \hspace{-1mm}{\mathcal{E}_1}\hspace{-1mm}\left( \hspace{-0.5mm} {\frac{{{c_n}MN_0}}{{P_{\rm{max}}}}} \hspace{-0.5mm} \right)}.
\end{align}
To unveil some insights, we consider the asymptotic ESG in the high-SNR regime as follows
\begin{equation}\label{EPGMIMOERA2}
\hspace{-2mm}\mathop {\lim }\limits_{K \rightarrow \infty, {P_{\rm{max}}} \rightarrow \infty}  \overline{G^{\rm{MIMO}}_{\rm{FDMA-ZF}}} \approx M  \underbrace{\vartheta \left( {D,{D_0}} \right)}_{\rm{large-scale\;near-far\;gain}} + M\ln\left(M\right) + M \underbrace{\gamma}_{\rm{small-scale\;fading\;gain}},\hspace{-0.5mm}
\end{equation}
where $\vartheta \left( {D,{D_0}} \right)$ denotes the large-scale near-far gain given in \eqref{NearFarDiversity}.

\begin{Remark}
The identified two kinds of gains in ESG of the single-antenna system in \eqref{EPGSISOERA2} are also observed in the ESG of MIMO-NOMA over MIMO-OMA using FDMA-ZF in \eqref{EPGMIMOERA2}.
Moreover, it can be observed that both the large-scale near-far gain $\vartheta \left( {D,{D_0}} \right)$ and the small-scale fading gain $\gamma$ are increased by $M$ times as indicated in \eqref{EPGMIMOERA2}.
In fact, upon comparing \eqref{EPGSISOERA2} and \eqref{EPGMIMOERA2}, we have
\begin{equation}\label{EPGMIMOERA3}
\mathop {\lim }\limits_{K \rightarrow \infty, {P_{\rm{max}}} \rightarrow \infty}  \overline{G^{\rm{MIMO}}_{\rm{FDMA-ZF}}} = M\mathop {\lim }\limits_{K \rightarrow \infty, {P_{\rm{max}}} \rightarrow \infty} \overline{G^{{\rm{SISO}}}} + M\ln\left(M\right),
\end{equation}
which implies that the asymptotic ESG of MIMO-NOMA over MIMO-OMA is $M$-times of that in single-antenna systems, when there are $M$ UL receiver antennas at the BS.
In fact, for $K \rightarrow \infty$, the heterogeneity in channel directions of all the users allows the received signals to fully span across the $M$-dimensional signal space.
Hence, MIMO-NOMA and MIMO-OMA using FDMA-ZF can fully exploit the system's maximal spatial DoF $M$.
Furthermore, we have an additional power gain of $\ln\left(M\right)$ in the second term in \eqref{EPGMIMOERA3}.
This is due to a factor of $\frac{1}{M}$ average power loss within each group for ZF projection to suppress the IUI in the MIMO-OMA system considered\cite{Tse2005}.
\end{Remark}

Comparing \eqref{ErgodicSumRateMIMONOMA} and \eqref{ErgodicSumRateMIMOOMAMRC2}, the asymptotic ESG of MIMO-NOMA over MIMO-OMA with FDMA-MRC is obtained by:
\begin{align}\label{EPGMIMOERAMRC}
\hspace{-3mm}\mathop {\lim }\limits_{K \rightarrow \infty}  \overline{G^{\rm{MIMO}}_{\rm{FDMA-MRC}}} &\hspace{-1mm}=
\mathop {\lim }\limits_{K \to \infty }  \overline{R_{\rm{sum}}^{{\rm{MIMO-NOMA}}}} - \overline{R_{\rm{sum,FDMA-MRC}}^{{\rm{MIMO-OMA}}}} \notag\\
&\hspace{-1mm}\approx M{\ln}\left( {1 \hspace{-1mm}+\hspace{-1mm} \frac{{P_{\rm{max}}}}{{{\left(D\hspace{-1mm}+\hspace{-1mm}D_0\right)N_0}}}\sum\limits_{n = 1}^N {\frac{{{\beta _n}}}{{{c_n}}}} } \right) \notag\\ &-\frac{1}{{\left( {D + {D_0}} \right)}}\sum\limits_{n = 1}^N {\beta _n}\left( \frac{\left( \frac{{N_0{c_n}}}{{{P_{\rm{max}}}}} \right)^M}{{\Gamma \left(M\right)}} G_{2,3}^{3,1}\left( {\begin{array}{*{20}{c}}
{ - M, - M + 1}\\
{ - M, - M, 0}
\end{array}\left| {\frac{{N_0{c_n}}}{{P_{\rm{max}}}}} \right.} \right)  \right).
\end{align}
Then, based on \eqref{ErgodicSumRateMIMONOMA} and \eqref{ErgodicSumRateMIMOOMAMRC3}, the asymptotic ESG of MIMO-NOMA over MIMO-OMA with FDMA-MRC in the low-SNR regime is given by
\begin{align}\label{EPGMIMOERA5}
\mathop {\lim }\limits_{K \rightarrow \infty, {P_{\rm{max}}} \rightarrow 0}  \overline{R_{\rm{sum,FDMA-MRC}}^{{\rm{MIMO-OMA}}}} = 0.
\end{align}
Not surprisingly, the performance gain of MIMO-NOMA over MIMO-OMA with FDMA-MRC vanishes in the low-SNR regime, which has been shown by simulations in existing works, \cite{Ding2014} for example.
In the high-SNR regime, the asymptotic ESG of MIMO-NOMA over MIMO-OMA with FDMA-MRC can be obtained from \eqref{ErgodicSumRateMIMONOMA} and \eqref{ErgodicSumRateMIMOOMAMRC4} by
\begin{align}\label{EPGMIMOERA6}
\mathop {\lim }\limits_{K \rightarrow \infty, {P_{\rm{max}}} \rightarrow \infty}  \overline{G^{\rm{MIMO}}_{\rm{FDMA-MRC}}} &\approx \left( {M - 1} \right)\ln \left( {\frac{{{P_{{\rm{max}}}}}}{{\left( {D + {D_0}} \right){N_0}}}\sum\limits_{n = 1}^N {\frac{{{\beta _n}}}{{{c_n}}}} } \right) - \ln \left( M \right) + \Delta,
\end{align}
where $\Delta = \ln \left( {{{\rm{E}}_{\bf{h}}}\left\{ {{{\left\| {\bf{h}} \right\|}^2}} \right\}} \right) - {{\rm{E}}_{\bf{h}}}\left\{ {\ln \left( {{{\left\| {\bf{h}} \right\|}^2}} \right)} \right\}$ denotes the gap between $\ln \left( {{{\rm{E}}_{\bf{h}}}\left\{ {{{\left\| {\bf{h}} \right\|}^2}} \right\}} \right)$ and ${{\rm{E}}_{\bf{h}}}\left\{ {\ln \left( {{{\left\| {\bf{h}} \right\|}^2}} \right)} \right\}$.

Although the closed-form ESG of MIMO-NOMA over MIMO-OMA is not available for the case of FDMA-MRC, the third term $\Delta$ in \eqref{EPGMIMOERA6} is a constant for a given outer radius $D$ and inner radius $D_0$.
Besides, it is expected that the first term in \eqref{EPGMIMOERA6} dominates the ESG in the high-SNR regime.
%
%
We can observe that the first term in \eqref{EPGMIMOERA6} increases linearly with the system SNR in dB with a slope of $(M-1)$ in the high-SNR regime.
In other words, there is an $(M-1)$-fold DoF gain \cite{DMTadeoff} in the asymptotic ESG of MIMO-NOMA over MIMO-OMA using FDMA-MRC.
In fact, MIMO-NOMA is essentially an $M \times K$ MIMO system on all resource blocks, i.e., time slots and frequency subbands, where the system maximal spatial DoF is limited by $M$ due to $M<K$.
On the other hand, MIMO-OMA using the FDMA-MRC reception is always an $M \times 1$ MIMO system in each resource block, and thus it can only have a  spatial DoF, which is one.
As a result, an $(M-1)$-fold DoF gain can be achieved by MIMO-NOMA compared to MIMO-OMA using FDMA-MRC.
However, MIMO-OMA is only capable of offering a power gain of $\ln\left( M \right)$ owing to the MRC detection utilized at the BS and thus the asymptotic ESG in \eqref{EPGMIMOERA6} suffers from a power reduction by a factor of $\ln\left( M \right)$ in the second term.
\section{ESG of \emph{m}MIMO-NOMA over \emph{m}MIMO-OMA}
In this section, we first derive the ergodic sum-rate of both \emph{m}MIMO-NOMA and \emph{m}MIMO-OMA and then discuss the asymptotic ESG of \emph{m}MIMO-NOMA over \emph{m}MIMO-OMA.

\subsection{Ergodic Sum-rate with $D>D_0$}
Let us now apply NOMA to massive-MIMO systems, where a large-scale antenna array ($M \to \infty$) is employed at the BS and all the $K$ users are equipped with a single antenna.
A simple MRC-SIC receiver is adopted at the BS for data detection of \emph{m}MIMO-NOMA.
The instantaneous achievable data rate of user $k$ and the sum-rate of the \emph{m}MIMO-NOMA system using the MRC-SIC reception are given by
\begin{align}
R_{k}^{\rm{\emph{m}MIMO-NOMA}} &= {\ln}\left(1+ {\frac{{{p_k}{{\left\| {{{\bf{h}}_k}} \right\|}^2}}}{{\sum\limits_{i = k + 1}^K {p_i}{{\left\| {{{\bf{h}}_i}} \right\|}^2} {{\left| {\bf{e}}_k^{\rm{H}}{{\bf{e}}_i} \right|}^2}  + {N_0}}}} \right)\;\text{and} \label{mMIMONOMAIndividualAchievableRate}\\
R_{\rm{sum}}^{\rm{\emph{m}MIMO-NOMA}} &= \sum\limits_{k = 1}^K R_{k}^{\rm{\emph{m}MIMO-NOMA}},\label{mMIMONOMASumRate}
\end{align}
respectively, where ${{\bf{e}}_k} = \frac{{{{\bf{h}}_k}}}{{\left\| {{{\bf{h}}_k}} \right\|}}$ denotes the channel direction of user $k$.
For the massive-MIMO system associated with $D>D_0$, the asymptotic ergodic sum-rate of ${K \rightarrow \infty}$ and ${M \rightarrow \infty}$ is given in the following theorem.

\begin{Thm}\label{Theorem2}
For the \emph{m}MIMO-NOMA system considered in \eqref{MIMONOMASystemModel} in conjunction with $D>D_0$ and MRC-SIC detection at the BS, under the equal resource allocation strategy, i.e., ${{p_{k}}} = \frac{{P_{\rm{max}}}}{K}$, $\forall k$, the asymptotic ergodic sum-rate can be approximated by
\begin{align}\label{ErgodicSumRatemMIMONOMA}
&\mathop {\lim }\limits_{K \rightarrow \infty, M \rightarrow \infty} \overline{R_{\rm{sum}}^{\rm{\emph{m}MIMO-NOMA}}} = \mathop {\lim }\limits_{K \rightarrow \infty, M \rightarrow \infty} {{\mathrm{E}}_{\mathbf{H}}}\left\{ {R_{\rm{sum}}^{\rm{\emph{m}MIMO-NOMA}}} \right\} \notag\\
&\approx \mathop {\lim }\limits_{K \to \infty ,M \to \infty } \sum\limits_{k = 1}^K \left( {\begin{array}{*{20}{c}}
K\\
k
\end{array}} \right){\frac{{k}}{{D + {D_0}}}} \sum\limits_{n = 1}^N {{\beta _n}\ln \left( {1 + \frac{{{\psi _k}}}{{{c_n}}}} \right)} {\left( {\frac{{\phi _n^2 \hspace{-1mm}-\hspace{-1mm} D_0^2}}{{{D^2} \hspace{-1mm}-\hspace{-1mm} D_0^2}}} \right)^{k - 1}}{\left( {\frac{{{D^2} \hspace{-1mm}-\hspace{-1mm} \phi _n^2}}{{{D^2} \hspace{-1mm}-\hspace{-1mm} D_0^2}}} \right)^{K - k}},
\end{align}
with
\begin{align}\label{ParametersformMIMONOMA}
\phi_n &= {\frac{D-D_0}{2}\cos \frac{{2n - 1}}{{2N}}\pi  + \frac{D+D_0}{2}},
{\psi _k} = \frac{{{P_{\rm{max}}}M}}{{\sum\nolimits_{i = k + 1}^K {{P_{\rm{max}}}{I_i} + K{N_0}} }}, \text{and} \notag\\
{I_k} & = {{\rm{E}}_{{d_k}}}\left\{ {\frac{1}{{1 + d_k^\alpha }}} \right\} = \left( {\begin{array}{*{20}{c}}
K\\
k
\end{array}} \right){\frac{{k}}{{D + {D_0}}}} \sum\limits_{n = 1}^N \frac{{\beta _n}}{c_n} {\left( {\frac{{\phi _n^2 \hspace{-1mm}-\hspace{-1mm} D_0^2}}{{{D^2} \hspace{-1mm}-\hspace{-1mm} D_0^2}}} \right)^{k - 1}}{\left( {\frac{{{D^2} \hspace{-1mm}-\hspace{-1mm} \phi _n^2}}{{{D^2} \hspace{-1mm}-\hspace{-1mm} D_0^2}}} \right)^{K - k}}.
\end{align}
\end{Thm}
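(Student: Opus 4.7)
The plan is to exploit two concentration effects that become tight in the joint limit $K, M \to \infty$: favorable propagation over the large array, and a law of large numbers over the many SIC interferers. First I would simplify the SINR in \eqref{mMIMONOMAIndividualAchievableRate} using the algebraic identity $\|{\bf h}_i\|^2|{\bf e}_k^{\mathrm{H}}{\bf e}_i|^2 = |{\bf e}_k^{\mathrm{H}}{\bf h}_i|^2$, which reduces the denominator to $\sum_{i>k} p_i |{\bf e}_k^{\mathrm{H}}{\bf h}_i|^2 + N_0$. Since ${\bf g}_i \sim \mathcal{CN}({\bf 0}, {\bf I}_M)$ is independent of ${\bf h}_k$ and ${\bf e}_k$ is a unit vector, ${\bf e}_k^{\mathrm{H}}{\bf h}_i \sim \mathcal{CN}(0, 1/(1+d_i^\alpha))$ conditionally on ${\bf h}_k$ and $d_i$, so $|{\bf e}_k^{\mathrm{H}}{\bf h}_i|^2$ is exponential with mean $1/(1+d_i^\alpha)$. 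For the numerator, the strong law of large numbers applied to the $M$ unit-variance components of ${\bf g}_k$ yields $\|{\bf h}_k\|^2 \approx M/(1+d_k^\alpha)$.

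Next I would argue that in this massive-array limit the SIC ordering $\|{\bf h}_1\| \geq \cdots \geq \|{\bf h}_K\|$ collapses to the distance ordering $d_1 \leq \cdots \leq d_K$, so that $d_k$ is the $k$-th order statistic of $K$ i.i.d. radii with PDF $f_d(z) = 2z/(D^2-D_0^2)$ on $[D_0, D]$. Applying the law of large numbers to the $K-k$ exponential variables $\{|{\bf e}_k^{\mathrm{H}}{\bf h}_i|^2\}_{i>k}$, conditionally independent given the ordered distances, and then replacing each $1/(1+d_i^\alpha)$ by its expectation $I_i$ taken under the PDF of the $i$-th order statistic, the random denominator concentrates at $\sum_{i>k} p_i I_i + N_0$. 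Substituting $p_k = P_{\max}/K$ and multiplying numerator and denominator by $K$ yields $R_k^{m\text{MIMO-NOMA}} \approx \ln\bigl(1 + \psi_k/(1+d_k^\alpha)\bigr)$ with $\psi_k$ as in \eqref{ParametersformMIMONOMA}.

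The remaining step is to average over $d_k$ and sum over $k$. The $k$-th order statistic has PDF $f_{d_{(k)}}(r) = \binom{K}{k}\, k\, F_d(r)^{k-1}(1-F_d(r))^{K-k} f_d(r)$ with $F_d(r) = (r^2-D_0^2)/(D^2-D_0^2)$, and both $\overline{R_k^{m\text{MIMO-NOMA}}}$ and $I_k$ reduce to integrals of the form $\int_{D_0}^D g(r)\,f_d(r)\,dr$. Applying the same Gauss--Chebyshev quadrature used to derive \eqref{SISOChannelDistributionPDF}--\eqref{BetaCn}, via the substitution $r \mapsto \phi_n = \frac{D-D_0}{2}\cos\frac{(2n-1)\pi}{2N} + \frac{D+D_0}{2}$, converts $1+r^\alpha \mapsto c_n$ and absorbs the weight $\frac{2r}{D+D_0}$ into $\beta_n$; summing over $k$ then produces the announced closed form and the stated quadrature representation of $I_k$.

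The main obstacle is the coupling hidden in the second paragraph: the SIC ordering is not strictly independent of the small-scale fading driving the interference, because $\|{\bf h}_k\|^2$ depends on both $d_k$ and the Gaussian magnitude of ${\bf g}_k$. The resolution I would adopt is that, for any fixed distance configuration with distinct radii, the probability that the $\|{\bf h}_k\|$-ordering matches the inverse-distance ordering tends to one as $M \to \infty$, so that the residual exponential variables $|{\bf e}_k^{\mathrm{H}}{\bf h}_i|^2$ become asymptotically independent of the ordering and the LLN can be applied term-by-term conditional on the order statistics. The remaining quadrature error is then controlled by the classical convergence of Gauss--Chebyshev quadrature as $N$ grows.
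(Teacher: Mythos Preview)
Your proposal is correct and follows essentially the same skeleton as the paper's proof in Appendix~B: channel hardening for the numerator, a law-of-large-numbers concentration for the interference, the identification of the SIC ordering with the distance ordering so that $d_k$ becomes the $k$-th order statistic, and Gauss--Chebyshev quadrature for the final averaging.

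The one genuine difference is how the interference concentration is obtained. The paper works in two steps: it first invokes Lemma~\ref{Lemma2} (proved in Appendix~A) to replace each $|{\bf e}_k^{\mathrm H}{\bf e}_i|^2$ by its mean $1/M$ via a $K\to\infty$ LLN, leaving $\sum_{i>k}p_i\|{\bf h}_i\|^2/M$, and only then invokes $M\to\infty$ channel hardening on each $\|{\bf h}_i\|^2/M$. You instead use the identity $\|{\bf h}_i\|^2|{\bf e}_k^{\mathrm H}{\bf e}_i|^2=|{\bf e}_k^{\mathrm H}{\bf h}_i|^2$ to recognise each summand directly as an exponential variable with mean $1/(1+d_i^\alpha)$, and concentrate the denominator in a single $K\to\infty$ LLN, reserving $M\to\infty$ only for the numerator. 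Your route is slightly more economical: it bypasses the auxiliary Lemmas on the uniform-sphere statistics of ${\bf e}_k$ and does not require interleaving the two limits in the denominator. The paper's route, on the other hand, reuses machinery already built for Theorem~\ref{Theorem1}, which keeps Appendix~B very short. Your explicit discussion of the ordering--fading coupling is also more careful than the paper, which simply takes the distance-ordering identification for granted.
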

\begin{proof}
Please refer to Appendix B for the proof of Theorem \ref{Theorem2}.
\end{proof}

%

For the \emph{m}MIMO-OMA system using the FDMA-MRC detection, we can allocate more than one user to each frequency subband due to the above-mentioned favorable propagation property\cite{Ngo2013}.
In particular, upon allocating $W = \varsigma M$ users to each frequency subband with $\varsigma = \frac{W}{M} \ll 1$, the orthogonality among channel vectors of the $W$ users holds fairly well, hence the IUI becomes negligible.
Therefore, a random user grouping strategy is adopted, where we randomly select $W = \varsigma M$ users as a group and there are $G = \frac{K}{W}$ groups\footnote{Without loss of generality, we consider that $K$ is an integer multiple of $G$ and $W$.} separated using orthogonal frequency subbands.
%
%
In each subband, low-complexity MRC detection can be employed for each individual user and thus the instantaneous achievable data rate of user $k$ can be expressed by
\begin{equation}\label{mMIMOOMAIndividualAchievableRate}
R_{k}^{\rm{\emph{m}MIMO-OMA}} = f_g{\ln}\left(1+ {\frac{{{p_k}{{\left\| {{{\bf{h}}_k}} \right\|}^2}}}{{{f_gN_0}}}} \right),
\end{equation}
where $f_g$ denotes the normalized frequency allocation of the $g$-th group.
Note that \eqref{mMIMOOMAIndividualAchievableRate} serves as an upper bound of the instantaneous achievable data rate of user $k$ in the \emph{m}MIMO-OMA system, since we assumed it to be IUI-free.
Then, under the equal resource allocation strategy, i.e., ${{p_{k}}} = \frac{{P_{\rm{max}}}}{K}$ and $f_g = 1/G = \frac{W}{K} = \delta\varsigma$, we have the asymptotic ergodic sum-rate of the \emph{m}MIMO-OMA system associated with $D > D_0$ as follows:
\begin{align}\label{ErgodicSumRatemMIMOOMA}
&\mathop {\lim }\limits_{M \rightarrow \infty} \overline{R_{\rm{sum}}^{\rm{\emph{m}MIMO-OMA}}} = \mathop {\lim }\limits_{M \rightarrow \infty} {{\mathrm{E}}_{\mathbf{H}}}\left\{ {R_{\rm{sum}}^{\rm{\emph{m}MIMO-OMA}}} \right\} \notag\\
&= \mathop {\lim }\limits_{M \to \infty } \delta\varsigma\sum\limits_{k = 1}^K \left( {\begin{array}{*{20}{c}}
K\\
k
\end{array}} \right){\frac{{k}}{{D + {D_0}}}} \sum\limits_{n = 1}^N {{\beta _n}\ln \left( {1 + \frac{{{\xi}}}{{{c_n}}}} \right)} {\left( {\frac{{\phi _n^2 - D_0^2}}{{{D^2} - D_0^2}}} \right)^{k - 1}}{\left( {\frac{{{D^2} - \phi _n^2}}{{{D^2} - D_0^2}}} \right)^{K - k}},
\end{align}
where $\phi_n$ is given in \eqref{ParametersformMIMONOMA} and ${\xi} = \frac{{{P_{\rm{max}}}}}{\varsigma N_0}$.
%

\subsection{Ergodic Sum-rate with $D=D_0$}
We note that the analytical results in \eqref{ErgodicSumRatemMIMONOMA} and \eqref{ErgodicSumRatemMIMOOMA} are only applicable to the system having $D>D_0$.
The asymptotic ergodic sum-rate of the \emph{m}MIMO-NOMA system with $D=D_0$ can be expressed using the following theorem.

\begin{Thm}\label{Theorem3}
With $D = D_0$ and the equal resource allocation strategy, i.e., ${{p_{k}}} = \frac{{P_{\rm{max}}}}{K}$ and $f_g = 1/G = \frac{W}{K}= \delta\varsigma$, the asymptotic ergodic sum-rate of the \emph{m}MIMO-NOMA system and of the \emph{m}MIMO-OMA system can be formulated by
\begin{align}
\mathop {\lim }\limits_{K \rightarrow \infty, M \rightarrow \infty} \overline{R_{\rm{sum}}^{\rm{\emph{m}MIMO-NOMA}}}
&\approx \mathop {\lim }\limits_{K \to \infty ,M \to \infty } \frac{{M}}{\varpi\delta} \left[ \ln \left( 1 + \varpi\delta + \varpi \right) \left( 1 + \varpi\delta + \varpi \right) \right. \notag\\
&- \left. \ln \left( 1 + \varpi\delta \right)\left( 1 + \varpi\delta \right) - \ln \left( 1 + \varpi \right)\left( 1 + \varpi \right)\right] \;\;\text{and}\label{DD0ErgodicSumRatemMIMONOMA}\\
\mathop {\lim }\limits_{M \rightarrow \infty} \overline{R_{\rm{sum}}^{\rm{\emph{m}MIMO-OMA}}}
&= \mathop {\lim }\limits_{M \to \infty } {\varsigma M} \ln \left( {1 + \frac{\varpi}{\varsigma}} \right),\label{DD0ErgodicSumRatemMIMOOMA}
\end{align}
respectively, where $\delta = \frac{M}{K}$ and $\varsigma = \frac{W}{M}$ are constants and $\varpi = \frac{{P_{\rm{max}}}}{\left({1+D_0^{\alpha}}\right) N_0}$ denotes the total average received SNR of all the users.
\end{Thm}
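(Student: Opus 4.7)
The plan is to exploit the common-distance simplification $D=D_0$, which makes all user channels statistically identical, and then combine the favorable propagation property of massive MIMO with the strong law of large numbers to obtain deterministic SINR expressions for each user. Specifically, $\mathbf{h}_k = \mathbf{g}_k/\sqrt{1+D_0^\alpha}$ yields $\|\mathbf{h}_k\|^2 \to M/(1+D_0^\alpha)$ as $M \to \infty$, while the orthogonality of Gaussian channel directions gives $\mathrm{E}\{|\mathbf{e}_k^{\mathrm{H}} \mathbf{e}_i|^2\} = 1/M$ for $i \neq k$. These two asymptotic relations will carry essentially the whole proof, replacing the path-loss-weighted averages of Theorem \ref{Theorem2}, which collapse when $D \to D_0$.

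For the \emph{m}MIMO-OMA claim in \eqref{DD0ErgodicSumRatemMIMOOMA}, I would invoke the favorable propagation argument already used in the derivation of \eqref{ErgodicSumRatemMIMOOMA} to conclude that the IUI within each $W = \varsigma M$-user group is negligible under $\varsigma \ll 1$. The per-user MRC SINR then converges to $p_k \|\mathbf{h}_k\|^2/(f_g N_0) \to \varpi/\varsigma$ after substituting $p_k = P_{\max}/K$, $f_g = \varsigma\delta$, and the definition of $\varpi$. Summing the rate contribution $f_g \ln(1+\varpi/\varsigma)$ over all $K$ users produces $K\varsigma\delta \ln(1+\varpi/\varsigma) = \varsigma M \ln(1+\varpi/\varsigma)$ as claimed.

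For the \emph{m}MIMO-NOMA case, I would apply the same two asymptotics to \eqref{mMIMONOMAIndividualAchievableRate}. The signal term converges to $P_{\max}M/[K(1+D_0^\alpha)]$, and each interference summand $p_i\|\mathbf{h}_i\|^2 |\mathbf{e}_k^{\mathrm{H}} \mathbf{e}_i|^2$ converges to $P_{\max}/[K(1+D_0^\alpha)]$, so the total IUI at decoding stage $k$ should concentrate on $(K-k)P_{\max}/[K(1+D_0^\alpha)]$, producing the deterministic limiting SINR $\varpi\delta/[\varpi(1-k/K)+1]$. The main obstacle is rigorously justifying this concentration, because the $K-k$ summands $|\mathbf{e}_k^{\mathrm{H}} \mathbf{e}_i|^2$ are correlated through the common direction $\mathbf{e}_k$; the clean way around this is to condition on $\mathbf{e}_k$, use a second-moment argument in the $K-k \to \infty$ regime, and then integrate $\mathbf{e}_k$ out.

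Once the SINR is in this deterministic form, the sum $\sum_{k=1}^K \ln(1+\mathrm{SINR}_k)$ becomes a Riemann sum in $t = k/K$ with mesh $1/K$, so the rate converges to $K\int_0^1 \ln\!\left(1 + \varpi\delta/[\varpi(1-t)+1]\right) dt$. I would write the integrand as $\ln(1+\varpi(1-t)+\varpi\delta) - \ln(1+\varpi(1-t))$ and apply the elementary antiderivative $\int \ln(a+b u)\,du = b^{-1}[(a+bu)\ln(a+bu)-(a+bu)]$ after substituting $u = 1-t$. The linear $-(a+bu)$ pieces cancel between the two logarithms, and the three $x\ln x$-type terms in \eqref{DD0ErgodicSumRatemMIMONOMA} emerge, with the prefactor $K/\varpi = M/(\varpi\delta)$ arising from the $1/b = 1/\varpi$ Jacobian combined with $K = M/\delta$.
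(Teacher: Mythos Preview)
Your proposal is correct and follows essentially the same route as the paper's proof in Appendix~C: channel hardening gives $\|\mathbf{h}_k\|^2/M \to 1/(1+D_0^\alpha)$, the cross-direction second moment ${\rm E}\{|\mathbf{e}_k^{\rm H}\mathbf{e}_i|^2\}=1/M$ (Lemma~\ref{Lemma2}) yields the deterministic per-user SINR $\varpi\delta/[\varpi(1-k/K)+1]$, and the resulting sum is turned into the same Riemann integral $K\int_0^1 \ln\!\left(1+\varpi\delta/[\varpi(1-x)+1]\right)dx$ and evaluated exactly as you describe. Your conditioning-plus-second-moment remark for the interference concentration is in fact slightly more careful than the paper, which simply invokes Lemma~\ref{Lemma2} and the large-$K$ limit without further comment.
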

\begin{proof}
Please refer to Appendix C for the proof of Theorem \ref{Theorem3}.
\end{proof}

\subsection{ESG in Massive-antenna Systems}
Based on \eqref{ErgodicSumRatemMIMONOMA} and \eqref{ErgodicSumRatemMIMOOMA}, when $D>D_0$, the asymptotic ESG of \emph{m}MIMO-NOMA over \emph{m}MIMO-OMA associated with ${K \to \infty}$ and ${M \to \infty}$ can be expressed as follows:
\begin{align}\label{DD0EPGmMIMOERA0}
\mathop {\lim }\limits_{K \to \infty, {M \to \infty}}  \overline{G^{\rm{\emph{m}MIMO}}_{D > D_0}} &\approx \mathop {\lim }\limits_{K \to \infty ,M \to \infty } \sum\limits_{k = 1}^K {\left( \hspace{-1mm} {\begin{array}{*{20}{c}}
K\\
k
\end{array}} \hspace{-1mm}\right)} \frac{k}{{D \hspace{-1mm}+\hspace{-1mm} {D_0}}}\sum\limits_{n = 1}^N {{\beta _n}\left[ {\ln \left( {1 \hspace{-1mm}+\hspace{-1mm} \frac{{{\psi _k}}}{{{c_n}}}} \right) - \delta\varsigma\ln \left( {1 \hspace{-1mm}+\hspace{-1mm} \frac{{{\xi}}}{{{c_n}}}} \right)} \right]}  \notag\\
&\times {\left( {\frac{{\phi _n^2 - D_0^2}}{{{D^2} - D_0^2}}} \right)^{k - 1}}{\left( {\frac{{{D^2} - \phi _n^2}}{{{D^2} - D_0^2}}} \right)^{K - k}}.
\end{align}
However, the expression in \eqref{DD0EPGmMIMOERA0} is too complicated and does not provide immediate insights.
Hence, we focus on the case of $D = D_0$ to unveil some important and plausible insights on the ESG of NOMA over OMA in the massive-MIMO system.
The simulation results of Section \ref{Simulations} will show that the insights obtained from the case of $D=D_0$ are also applicable to the general scenario of $D>D_0$.

Comparing \eqref{DD0ErgodicSumRatemMIMONOMA} and \eqref{DD0ErgodicSumRatemMIMOOMA}, when $D=D_0$, we have the asymptotic ESG of \emph{m}MIMO-NOMA over \emph{m}MIMO-OMA for ${K \to \infty}$ and ${M \to \infty}$ as follows:
\begin{align}\label{DD0EPGmMIMOERA}
\hspace{-3mm}\mathop {\lim }\limits_{K \to \infty, {M \to \infty}}  \overline{G^{\rm{\emph{m}MIMO}}_{D = D_0}} &= \mathop {\lim }\limits_{K \to \infty, {M \to \infty}}  \overline{R_{\rm{sum}}^{{\rm{\emph{m}MIMO-NOMA}}}} - \overline{R_{\rm{sum}}^{{\rm{\emph{m}MIMO-OMA}}}} \notag\\
& \approx \mathop {\lim }\limits_{K \to \infty ,M \to \infty } \frac{{M}}{\varpi\delta} \left[ \ln \left( 1 + \varpi\delta + \varpi \right) \left( 1 + \varpi\delta + \varpi \right) \right. \notag\\
& - \left.\ln \left( 1 + \varpi\delta \right)\left( 1 + \varpi\delta \right) - \ln \left( 1 + \varpi \right)\left( 1 + \varpi \right)\right] -  {\varsigma M} \ln \left( {1 + \frac{\varpi}{\varsigma}} \right).
\end{align}
In the low-SNR regime, we can observe that $\mathop {\lim }\limits_{K \to \infty, {M \to \infty}, {P_{\rm{max}}} \to 0}  \overline{G^{\rm{\emph{m}MIMO}}_{D = D_0}} \to 0$.
This implies that no gain can be achieved by NOMA in the low-SNR regime, which is consistent with \eqref{EPGMIMOERA5}.
By contrast, in the high-SNR regime, we have
\begin{align}
\mathop {\lim }\limits_{K \to \infty, {M \to \infty}, {P_{\rm{max}}} \to \infty}  \overline{G^{\rm{\emph{m}MIMO}}_{D = D_0}}
&\approx \mathop {\lim }\limits_{K \to \infty ,M \to \infty ,{P_{{\rm{max}}}} \to \infty } \frac{{M}}{\delta} \zeta  -  {\varsigma M} \ln \left( {1 + \frac{\varpi}{\varsigma}} \right), \label{DD0EPGmMIMOERA2}\\
&= \mathop {\lim }\limits_{K \to \infty ,M \to \infty ,{P_{{\rm{max}}}} \to \infty } K \zeta  -  {\delta\varsigma K} \ln \left( {1 + \frac{\varpi}{\varsigma}} \right)\label{DD0EPGmMIMOERA3}
\end{align}
where $\zeta = \left[ \ln \left( {1 \hspace{-0.5mm}+\hspace{-0.5mm} \varpi \delta  \hspace{-0.5mm}+\hspace{-0.5mm} \varpi } \right)\left( {1 \hspace{-0.5mm}+\hspace{-0.5mm} \delta } \right) \hspace{-0.5mm}-\hspace{-0.5mm} \ln \left( {1 \hspace{-0.5mm}+\hspace{-0.5mm} \varpi \delta } \right)\delta  \hspace{-0.5mm}-\hspace{-0.5mm} \ln \left( {1 \hspace{-0.5mm}+\hspace{-0.5mm} \varpi } \right) \right]$ represents the extra ergodic sum-rate gain upon supporting an extra user by the \emph{m}MIMO-NOMA system considered.
Explicitly, for $K \to \infty$, ${M \to \infty}$, and ${P_{\rm{max}}} \rightarrow \infty$, the resultant extra benefit $\zeta$ is jointly determined by the average received sum SNR $\varpi$ and the fixed ratio $\delta$.
Observe in \eqref{DD0EPGmMIMOERA2} and \eqref{DD0EPGmMIMOERA3} that given the average received sum SNR $\varpi$ and the fixed ratios $\delta$ and $\varsigma$, the asymptotic ESG scales linearly with both the number of UL receiver antennas at the BS, $M$ and the number of users, $K$, respectively.
In other words, the asymptotic ESG per user and the asymptotic ESG per antenna of \emph{m}MIMO-NOMA over \emph{m}MIMO-OMA are constant and they are given by
\begin{align}
\mathop {\lim }\limits_{K \to \infty, {M \to \infty}, {P_{\rm{max}}} \to \infty}  \frac{\overline{G^{\rm{\emph{m}MIMO}}_{D = D_0}}}{K} &= \zeta  -  {\delta\varsigma } \ln \left( {1 + \frac{\varpi}{\varsigma}} \right)\;\text{and} \label{DD0EPGPerUsermMIMOERA2}\\
\mathop {\lim }\limits_{K \to \infty, {M \to \infty}, {P_{\rm{max}}} \to \infty}  \frac{\overline{G^{\rm{\emph{m}MIMO}}_{D = D_0}}}{M} &= \frac{\zeta}{\delta} -  {\varsigma} \ln \left( {1 + \frac{\varpi}{\varsigma}} \right),\; \label{DD0EPGPerAntennamMIMOERA3}
\end{align}
respectively.
We can explain this observation from the spatial DoF perspective, since it determines the pre-log factor for the ergodic sum-rate of both \emph{m}MIMO-NOMA and \emph{m}MIMO-OMA and thus also determines the pre-log factor of the corresponding ESG.
In particular, the \emph{m}MIMO-NOMA system considered is basically an $(M \times K)$ MIMO system associated with $M<K$, since all the $K$ users transmit their signals simultaneously in the same frequency band.
When scaling up the \emph{m}MIMO-NOMA system while maintaining a fixed ratio $\delta = \frac{M}{K}$, the system's spatial DoF increases linearly  both with $M$ and $K$.
On the other hand, the spatial DoF of the \emph{m}MIMO-OMA system is limited by the group size $W$, since it is always an $(M \times W)$ MIMO system associated with $W \ll M$ in each time slot and frequency subband.
Therefore, the system's spatial DoF increases linearly with both $W$, and $M$, as well as $K$, when scaling up the \emph{m}MIMO-OMA system under fixed ratios of $\delta = \frac{M}{K}$ and $\varsigma = \frac{W}{M}$.
As a result, due to the linear increase of the spatial DoF with $M$ as well as $K$ for both the \emph{m}MIMO-NOMA and \emph{m}MIMO-OMA systems, the asymptotic ESG increases linearly with both $M$ and $K$.
Note that in contrast to \eqref{EPGMIMOERA6}, there is no DoF gain, despite the fact that the asymptotic ESG scales linearly both with $M$ as well as $K$.
This is because the extra benefit $\zeta$ does not increase linearly with the system's SNR in dB.
As a result, the asymptotic ESG of \emph{m}MIMO-NOMA over \emph{m}MIMO-OMA cannot increase linearly with the system's SNR in dB, as it will be shown in Section \ref{Simulations}.

\section{ESG in Multi-cell Systems}\label{DiscussionsMulticell}
\begin{figure}[t]
\centering
\includegraphics[width=3.5in]{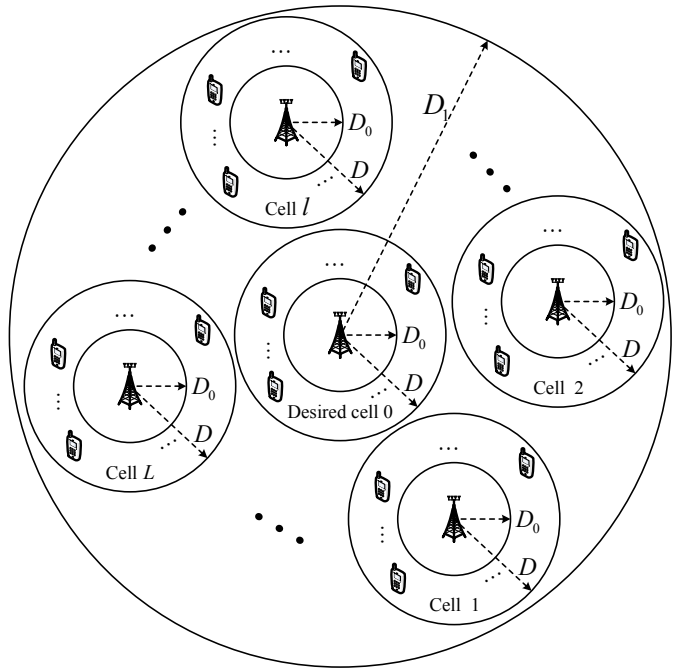}
\caption{The system model of the multi-cell uplink communication with one serving cell and $L$ adjacent cells.}
\label{Multi-cellNOMA_Uplink_Model}
\end{figure}

In Section III, the performance gain of NOMA over OMA has been investigated in single-cell systems, since these analytical results are easily comprehensible and reveal directly plausible insights.
Naturally, the performance gain of NOMA over OMA in single-cell systems serves as an upper bound on that of non-cooperative multi-cell systems, which can be approached by employing conservative frequency reuse strategy.
In practice, cellular networks consist of multiple cells where the inter-cell interference (ICI) is inevitable.
Furthermore, the characteristics of the ICI for NOMA and OMA schemes are different.
In particular, ICI is imposed by all the users in adjacent cells for NOMA schemes, while only a subset of users inflict ICI in OMA schemes, as an explicit benefit of orthogonal time or frequency allocation.
As a result, NOMA systems face more severe ICI than that of OMA, hence it remains unclear, if applying NOMA is still beneficial in multi-cell systems.
Therefore, in this section, we investigate the ESG of NOMA over OMA in multi-cell systems.
\subsection{Inter-cell Interference in NOMA and OMA Systems}
Consider a multi-cell system having multiple non-overlapped adjacent cells with index $l = 1,\ldots,L$, which are randomly deployed and  surround the serving cell $l=0$, as shown in Fig. \ref{Multi-cellNOMA_Uplink_Model}.
We assume that the $L$ interfering cells have the same structure as the serving cell and they are uniformly distributed in the pair of concentric ring-shaped discs of Fig. \ref{Multi-cellNOMA_Uplink_Model} having the inner radius of $D$ and outer radius of $D_1$.
Furthermore, we adopt the radical frequency reuse factor of 1, i.e., using the same frequency band for all cells to facilitate the performance analysis\footnote{With a less-aggressive frequency reuse strategy in multi-cell systems, both NOMA and OMA schemes endure less ICI since only the adjacent cells using the same frequency band with the serving cell are taken into account. As a result, the performance analyses derived in this paper can be extended to the case with a lower frequency reuse ratio by simply decreasing number of adjacent cells $L$. Again, the resultant performance will then approach the performance upper-bound of the single-cell scenario.}.
Again, we are assuming that in each cell there is a single $M$-antenna BS serving $K$ single-antenna users in the UL and thus there are $KL$ users imposing interference on the serving BS.
Additionally, to reduce both the system's overhead and its complexity, no cooperative multi-cell processing is included in our multi-cell system considered.
In the following, we first investigate the resultant ICI distribution and then derive the total received ICI power contaminating over NOMA and OMA systems.

Given the normalized UL receive beamforming vector of user $k$ in the serving cell at the serving BS represented by $\mathbf{w}_k \in \mathbb{C}^{M \times 1}$ with $\left\|\mathbf{w}_k\right\|^2 = 1$, the effective ICI channel spanning from user $k'$ in adjacent cell $l$ to the serving BS can be formulated as:
\begin{equation}\label{MulticelleEffectiveChannelModel}
{h_{k',l}} = {\bf{w}}_k^{\rm{H}}{\mathbf{h}_{k',l}} = \frac{{\bf{w}}_k^{\rm{H}}{\bf{g}}_{k',l}}{\sqrt{1+d_{k',l}^{\alpha}}},
\end{equation}
where ${\mathbf{h}_{k',l}} = \frac{{\bf{g}}_{k',l}}{\sqrt{1+d_{k',l}^{\alpha}}}$ denotes the channel vector from user $k'$ in adjacent cell $l$ to the serving BS, ${\bf{g}}_{k',l} \in \mathbb{C}^{ M \times 1}$ represents the Rayleigh fading coefficients, i.e., ${\bf{g}}_{k',l} \sim \mathcal{CN}\left(\mathbf{0},{{\bf{I}}_M}\right)$, and $d_{k',l}$ denotes the distance between user $k'$ in adjacent cell $l$ and the serving BS with the unit of meter.
Similar to the single-cell system considered, we assume that the CSI of all the users within the serving cell is perfectly known at the serving BS.
However, the ICI channel is unknown for the serving BS.
Note that the receive beamformer ${\bf{w}}_k$ of the serving BS depends on the instantaneous channel vector of user $k$, ${\mathbf{h}_{k}}$, and/or on the multiple access interference structure $\left[ {{{\bf{h}}_1}, \ldots ,{{\bf{h}}_{k - 1}},{{\bf{h}}_{k + 1}}, \ldots ,{{\bf{h}}_K}} \right]$ in the serving cell.
Therefore, the receive beamformer ${\bf{w}}_k$ of the serving cell is independent of the ICI channel ${\bf{g}}_{k',l}$.
As a result, owing to $\left\|\mathbf{w}_k\right\|^2 = 1$, it can be readily observed that ${{\bf{w}}_k^{\rm{H}}{\bf{g}}_{k',l}}$ obeys the circularly symmetric complex Gaussian distribution conditioned on the given $\mathbf{w}_k$, i.e., we have ${\bf{w}}_k^{\rm{H}}{{\bf{g}}_{k',l}}\left| {_{{{\bf{w}}_k}}} \right. \sim \mathcal{CN}\left(0,1\right)$.
However, since the resultant distribution $\mathcal{CN}\left(0,1\right)$ is independent of ${\bf{w}}_k$, we can safely drop the condition and directly apply ${\bf{w}}_k^{\rm{H}}{{\bf{g}}_{k',l}} \sim \mathcal{CN}\left(0,1\right)$.
Now, based on \eqref{MulticelleEffectiveChannelModel}, we can observe that the effective ICI channel ${h_{k',l}}$ is equivalent to a single-antenna Rayleigh fading channel associated with a distance of $d_{k',l}$, regardless of how many antennas are employed at the serving BS.

Given that each user is equipped with a single-antenna, the transmission of each user is omnidirectional.
Therefore, to facilitate the analysis of the ICI power, we assume that there is no gap between the adjacent cells and that the inner radius of each adjacent cell is zero, i.e., $D_0 = 0$.
Hence, we can further assume that the ICI emanates from $KL$ users uniformly distributed within the ring-shaped disc having the inner radius of $D$ and outer radius of $D_1$.
Similar to \eqref{SISOChannelDistributionPDF} and \eqref{SISOChannelDistributionCDF}, the CDF and PDF of ${{\left| {h_{k',l}} \right|}^2}$ are given by
\begin{align}
{F_{{{\left| {h_{k',l}} \right|}^2}}}\left( x \right) &\approx 1 - \frac{1}{D+D_1}\sum\limits_{n = 1}^N {{\beta' _n}{e^{ - {c'_n}x}}} \; \text{and} \label{MulticellEffectiveChannelDistributionPDF}\\
{f_{{{\left| {h_{k',l}} \right|}^2}}}\left( x \right) &\approx \frac{1}{D+D_1}\sum\limits_{n = 1}^N {{\beta' _n}{c'_n}{e^{ - {c'_n}x}}}, x \ge 0, \forall k',l \label{MulticellEffectiveChannelDistributionCDF}
\end{align}
respectively, with parameters of
\begin{align}\label{MulticellEffectiveChannelBetaCn}
{\beta'_n} &= \frac{\pi }{N}\left| {\sin \frac{{2n \hspace{-1mm}-\hspace{-1mm} 1}}{{2N}}\pi } \right|\left( {\frac{D_1\hspace{-1mm}-\hspace{-1mm}D}{2}\cos \frac{{2n \hspace{-1mm}-\hspace{-1mm} 1}}{{2N}}\pi  + \frac{D_1\hspace{-1mm}+\hspace{-1mm}D}{2}} \right) \;\text{and}\notag\\
{c'_n} &= 1 + {\left( {\frac{D_1\hspace{-1mm}-\hspace{-1mm}D}{2}\cos \frac{{2n \hspace{-1mm}-\hspace{-1mm} 1}}{{2N}}\pi  + \frac{D_1\hspace{-1mm}+\hspace{-1mm}D}{2}} \right)^\alpha}.
\end{align}
Note that all the adjacent cell users have i.i.d. channel distributions since we ignore the adjacent cells' structure.

Due to the ICI encountered in unity-frequency-reuse multi-cell systems, the performance is determined by the signal-to-interference-plus-noise ratio (SINR) instead of the SNR of single-cell systems.
Assuming that the ICI is treated as AWGN by the detector, the system's SINR can be defined as follows:
\begin{equation}\label{MulticellSINR}
{\rm{SINR}_{sum}^{multicell}} = \frac{{P_{\rm{max}}}}{{I_{{\rm{inter}}}} + N_0} {\overline{{{\left| {{{h}}} \right|}^2}}},
\end{equation}
where ${I_{{\rm{inter}}}}$ characterizes the ICI power in multi-cell systems and ${P_{\rm{max}}}$ denotes the same system power budget in each single cell.

To facilitate our performance analysis, we assume that the equal resource allocation strategy is adopted in all the adjacent cells, i.e., $p_{k',l} = \frac{{P_{\rm{max}}}}{K}$, $\forall k',l$.
When invoking NOMA in a multi-cell system, the ICI power can be modeled as
\begin{equation}\label{MulticellInterference}
{I^{\rm{NOMA}}_{{\rm{inter}}}} = \sum\limits_{l = 1}^L {\sum\limits_{k' = 1}^K {\frac{{{P_{{\rm{max}}}}}}{K}} } {\left| {{h_{k',l}}} \right|^2}.
\end{equation}
For $KL \to \infty$, ${I^{\rm{NOMA}}_{{\rm{inter}}}}$ becomes a deterministic value, which can be approximated by
\begin{equation}\label{MulticellInterference2}
\mathop {\lim }\limits_{KL \to \infty} {I^{\rm{NOMA}}_{{\rm{inter}}}} \approx L {P_{{\rm{max}}}} \overline{{\left| {{h_{k',l}}} \right|^2}} \approx \frac{L {P_{{\rm{max}}}} }{D+D_1}\sum\limits_{n = 1}^N {\frac{{{\beta' _n}}}{{{c'_n}}}}.
\end{equation}
As a result, the SINR of the multi-cell NOMA system considered is given by
\begin{equation}\label{MulticellSINRNOMA}
{\rm{SINR}_{sum,NOMA}^{multicell}} = \frac{{P_{\rm{max}}}}{\frac{L {P_{{\rm{max}}}} }{D+D_1}\sum\limits_{n = 1}^N {\frac{{{\beta' _n}}}{{{c'_n}}}} + N_0} {\overline{{{\left| {{{h}}} \right|}^2}}}.
\end{equation}

For OMA schemes, we assume that all the $K$ users in each cell are clustered into $G$ groups, with each group allocated to a frequency subband exclusively.
Since only $\frac{1}{G}$ of users in each adjacent cell are simultaneously transmitting their signals in each frequency subband, the ICI power in a multi-cell OMA system can be expressed as:
\begin{equation}\label{MulticellInterference3}
\mathop {\lim }\limits_{KL \to \infty} {I^{\rm{OMA}}_{{\rm{inter}}}} = \frac{1}{G} \mathop {\lim }\limits_{KL \to \infty} {I^{\rm{NOMA}}_{{\rm{inter}}}} \approx
\frac{L {P_{{\rm{max}}}}}{G(D+D_1)}\sum\limits_{n = 1}^N {\frac{{{\beta' _n}}}{{{c'_n}}}}.
\end{equation}
The SINR of the multi-cell OMA system considered can be written as:
\begin{equation}\label{MulticellSINROMA}
{\rm{SINR}_{sum,OMA}^{multicell}} = \frac{{P_{\rm{max}}}}{\frac{L {P_{{\rm{max}}}} }{G(D+D_1)}\sum\limits_{n = 1}^N {\frac{{{\beta' _n}}}{{{c'_n}}}} + \frac{1}{G} N_0} {\overline{{{\left| {{{h}}} \right|}^2}}}.
\end{equation}
Note that we have $G = K$ for SISO-OMA and MIMO-OMA with FDMA-MRC, $G = \frac{K}{M}$ for MIMO-OMA with FDMA-ZF, and $G = \frac{K}{W}$ for \emph{m}MIMO-OMA with FDMA-MRC.

\subsection{ESG in Multi-cell Systems}
It can be observed that ${I^{\rm{NOMA}}_{{\rm{inter}}}}$ in \eqref{MulticellInterference2} and ${I^{\rm{OMA}}_{{\rm{inter}}}}$ in \eqref{MulticellInterference3} are independent of the number of antennas employed at the serving BS, which is due to the non-coherent combining used at the serving BS ${\bf{w}}_k^{\rm{H}}{\bf{g}}_{k',l}$, thereby leading to the effective ICI channel becoming equivalent to a single-antenna Rayleigh fading channel.
Therefore, all the ergodic sum-rates of NOMA in single-antenna, multi-antenna, and massive-MIMO single-cell systems are degraded upon replacing the noise power $N_0$ by $({{I^{\rm{NOMA}}_{{\rm{inter}}}} + N_0})$.
On the other hand, since OMA schemes only face a noise power level of $\frac{1}{G}N_0$ on each subband, all the ergodic sum-rates of the OMA schemes in single-antenna, multi-antenna, and massive-MIMO single-cell systems are reduced upon substituting the noise power $\frac{1}{G}N_0$ by ${{I^{\rm{OMA}}_{{\rm{inter}}}} + \frac{1}{G}N_0} = \frac{1}{G} \left( {{I^{\rm{NOMA}}_{{\rm{inter}}}} \hspace{-1mm}+\hspace{-1mm} {N_0}} \right)$.

Given the ICI terms ${I^{\rm{NOMA}}_{{\rm{inter}}}}$ and ${I^{\rm{OMA}}_{{\rm{inter}}}}$, we have the corresponding asymptotic ESGs in single-antenna, multi-antenna, and massive-MIMO multi-cell systems as follows:
\begin{align}
\mathop {\lim }\limits_{K \rightarrow \infty} \overline{G^{{\rm{SISO}}}}'
&\approx \ln \left( {1 + \frac{{{P_{{\rm{max}}}}}}{{\left( {D + {D_0}} \right)\left( {{I^{\rm{NOMA}}_{{\rm{inter}}}} \hspace{-1mm}+\hspace{-1mm} {N_0}} \right)}}\sum\limits_{n = 1}^N {\frac{{{\beta _n}}}{{{c_n}}}} } \right) \notag\\
&-\frac{1}{{\left( {D + {D_0}} \right)}}\sum\limits_{n = 1}^N {{\beta _n}{e^{\frac{{{c_n}\left( {{I^{\rm{NOMA}}_{{\rm{inter}}}} + {N_0}} \right)}}{{{P_{{\rm{max}}}}}}}}{{\cal E}_1}\left( {\frac{{{c_n}\left( {{I^{\rm{NOMA}}_{{\rm{inter}}}} +{N_0}} \right)}}{{{P_{{\rm{max}}}}}}} \right)},\label{MultiCellEPGSISOERA2} \\
\mathop {\lim }\limits_{K \rightarrow \infty}  \overline{G^{{\rm{MIMO}}}_{\rm{FDMA-ZF}}}'
&\approx M\ln \left( {1 + \frac{{{P_{{\rm{max}}}}}}{{\left( {D + {D_0}} \right)\left( {{I^{\rm{NOMA}}_{{\rm{inter}}}} \hspace{-1mm}+\hspace{-1mm} {N_0}} \right)}}\sum\limits_{n = 1}^N {\frac{{{\beta _n}}}{{{c_n}}}} } \right) \notag\\
&-\frac{M}{{\left( {D + {D_0}} \right)}}\sum\limits_{n = 1}^N {{\beta _n}{e^{\frac{{{c_n}M\left( {{I^{\rm{NOMA}}_{{\rm{inter}}}} + {N_0}} \right)}}{{{P_{{\rm{max}}}}}}}}{{\cal E}_1}\left( {\frac{{{c_n}M\left( {{I^{\rm{NOMA}}_{{\rm{inter}}}} \hspace{-1mm}+\hspace{-1mm} {N_0}} \right)}}{{{P_{{\rm{max}}}}}}} \right)},\label{MultiCellEPGMIMOERA2}\\
\mathop {\lim }\limits_{K \rightarrow \infty}  \overline{G^{\rm{MIMO}}_{\rm{FDMA-MRC}}}' &\approx M\ln \left( {1 + \frac{{{P_{{\rm{max}}}}}}{{\left( {D + {D_0}} \right)\left( {{I^{\rm{NOMA}}_{{\rm{inter}}}} \hspace{-1mm}+\hspace{-1mm} {N_0}} \right)}}\sum\limits_{n = 1}^N {\frac{{{\beta _n}}}{{{c_n}}}} } \right) \notag\\
&\hspace{-35mm}-\frac{1}{{\left( {D + {D_0}} \right)}}\sum\limits_{n = 1}^N {{\beta _n}} \left( {\frac{{{{\left( {\frac{{\left( {{I^{\rm{NOMA}}_{{\rm{inter}}}} + {N_0}} \right){c_n}}}{{{P_{{\rm{max}}}}}}} \right)}^M}}}{{\Gamma \left( M \right)}}G_{2,3}^{3,1}\left(\hspace{-2mm} {\begin{array}{*{20}{c}}
{ - M, - M + 1}\\
{ - M, - M,0}
\end{array}\hspace{-2mm}\left| {\frac{{\left( {{I^{\rm{NOMA}}_{{\rm{inter}}}} \hspace{-1mm}+\hspace{-1mm} {N_0}} \right){c_n}}}{{{P_{{\rm{max}}}}}}} \right.} \right)} \right),\label{MultiCellEPGMIMOERAMRC}
\end{align}
\begin{align}
\mathop {\lim }\limits_{K \to \infty, {M \to \infty}}  \overline{G^{\rm{\emph{m}MIMO}}_{D > D_0}}' &\approx \mathop {\lim }\limits_{K \to \infty ,M \to \infty } \sum\limits_{k = 1}^K {\left( \hspace{-1mm} {\begin{array}{*{20}{c}}
K\\
k
\end{array}} \hspace{-1mm}\right)} \frac{k}{{D \hspace{-1mm}+\hspace{-1mm} {D_0}}}\sum\limits_{n = 1}^N {\beta _n} \notag\\
&\hspace{-25mm}\times\left[ {\ln \left( {1 \hspace{-1mm}+\hspace{-1mm} \frac{{{\psi _k}'}}{{{c_n}}}} \right) - \delta\varsigma\ln \left( {1 \hspace{-1mm}+\hspace{-1mm} \frac{{{\xi}'}}{{{c_n}}}} \right)} \right] {\left( {\frac{{\phi _n^2 - D_0^2}}{{{D^2} - D_0^2}}} \right)^{k - 1}}{\left( {\frac{{{D^2} - \phi _n^2}}{{{D^2} - D_0^2}}} \right)^{K - k}}, \;\text{and} \label{MultiCellDD0EPGmMIMOERA0} \\
\mathop {\lim }\limits_{K \to \infty, {M \to \infty}}  \overline{G^{\rm{\emph{m}MIMO}}_{D = D_0}}'
&\approx \frac{{M}}{\varpi'\delta} \left[ \ln \left( 1 + \varpi'\delta + \varpi' \right) \left( 1 + \varpi'\delta + \varpi' \right) \right. \notag\\
&\hspace{-20mm} -\left.\ln \left( 1 + \varpi'\delta \right)\left( 1 + \varpi'\delta \right) - \ln \left( 1 + \varpi' \right)\left( 1 + \varpi' \right)\right] - {{\varsigma M}} \ln \left( {1 + \frac{\varpi'}{\varsigma}} \right), \label{MultiCellDD0EPGmMIMOERA}
\end{align}
where ${\psi _k}^\prime  = \frac{{{P_{{\rm{max}}}}M}}{{\sum\nolimits_{i = k + 1}^K {{P_{{\rm{max}}}}{I_i} + K\left( {{I^{\rm{NOMA}}_{{\rm{inter}}}} + {N_0}} \right)} }}$, ${{\xi}'} = \frac{{{P_{{\rm{max}}}}M}}{{W\left( {{I^{\rm{NOMA}}_{{\rm{inter}}}} + {N_0}} \right)}}$, and $\varpi ' = \frac{{{P_{{\rm{max}}}}}}{{\left( {1 + D_0^\alpha } \right)\left( {{I^{\rm{NOMA}}_{{\rm{inter}}}} + {N_0}} \right)}}$.

It can be observed that compared to single-cell systems, the ESGs of NOMA over OMA in multi-cell systems are degraded due to the existence of ICI.
In particular, since the OMA schemes endure not only $\frac{1}{G}$ of noise power but also $\frac{1}{G}$ of ICI power, compared to NOMA schemes, the interference plus noise power of $({{I^{\rm{NOMA}}_{{\rm{inter}}}} + {N_0}})$ in multi-cell systems plays the same role as the noise power ${N_0}$ in single-cell systems.
Therefore, the performance analyses in single-cell systems are directly applicable to multi-cell systems via increasing the noise power ${N_0}$ to the interference plus noise power of $({{I^{\rm{NOMA}}_{{\rm{inter}}}} + {N_0}})$.
Upon utilizing the coordinate signal processing among multiple cells\cite{ShinMulticellNOMA}, the ICI power can be effectively suppressed, which may prevent the ESG degradation, when extending NOMA from single-cell to multi-cell systems.

\section{Simulations}\label{Simulations}
In this section, we use simulations to evaluate our analytical results.
In the single-cell systems considered, the inner cell radius is $D_0 = 50$ m and the outer cell radius is given by $D = [50, 200, 500]$ m, which corresponds to the cases of normalized cell sizes given by $\eta = [1, 4, 10]$, respectively.
The number of users $K$ ranges from $2$ to $256$ and the number of antennas employed at the BS $M$ ranges from $1$ to $128$.
The path loss exponent is $\alpha = 3.76$ according to the 3GPP path loss model\cite{Access2010}.
The noise power is set as $N_0 = -80$ dBm.
To emphasize the effect of cell size on the ESG of NOMA over OMA, in the simulations of the single-cell systems, we characterize the system's SNR with the aid of the total average received SNR of all the users at the BS as follows\cite{Xu2017}:
\begin{equation}\label{SystemSNR}
{\rm{SNR}_{sum}} = \frac{{P_{\rm{max}}}}{N_0} {\overline{{{\left| {{{h}}} \right|}^2}}} = \frac{{P_{\rm{max}}}}{N_0} \frac{\overline{{{\left\| {{\mathbf{h}}} \right\|}^2}}}{M},
\end{equation}
where ${\overline{{{\left| {{{h}}} \right|}^2}}}$ and ${\overline{{{\left\| {{\mathbf{h}}} \right\|}^2}}}$ are given by \eqref{Mean_ChannelPowerGainSISO} and \eqref{Mean_ChannelPowerGainMIMO}, respectively.
The total transmit power ${P_{\rm{max}}}$ is adjusted adaptively for different cell sizes to satisfy ${\rm{SNR}_{sum}}$ in \eqref{SystemSNR} ranging from $0$ dB to $40$ dB.
In the \emph{m}MIMO-OMA system considered, we set the ratio between the group size and the number of antennas to $\varsigma = \frac{W}{M} = \frac{1}{16}$, hence we can assume that the favorable propagation conditions prevail in the spirit of \cite{Ngo2013}.
Additionally, in the \emph{m}MIMO-NOMA system considered, the ratio between the number of receiver antennas at the BS and the number of serving users is fixed as $\delta = \frac{M}{K} =  \frac{1}{2}$.
The important system parameters adopted in our simulations are summarized in Table \ref{SysParameters}.
The specific simulation setups for each simulation scenario are shown under each figure.
All the simulation results in this paper are obtained by averaging the system performance over both small-scale fading and large-scale fading.

\begin{table}
	\caption{System Parameters Used In Simulations}
	\centering
	\begin{tabular}{l|r}
		\hline
		Inner cell radius, $D_0$                 & 50 m \\\hline
		Outer cell radius, $D$                   & [50, 200, 500] m \\\hline
		Normalized cell size, $\eta$             & [1, 4, 10] \\\hline
		Number of users, $K$                     & 2 $\sim$ 256 \\\hline
		Number of receive antennas at BS, $M$    & 1 $\sim$ 128 \\\hline
		Path loss exponent, $\alpha$         & 3.76 \\\hline
		Noise power, $N_0$                   & -80 dBm \\\hline
		System SNR, ${\rm{SNR}_{sum}}$       & 0 $\sim$ 40 dB \\\hline
		Ratio $\varsigma = \frac{W}{M}$ for \emph{m}MIMO-OMA & $\frac{1}{16}$ \\\hline
		Ratio $\delta = \frac{M}{K}$ for \emph{m}MIMO-NOMA & $\frac{1}{2}$ \\\hline
	\end{tabular}\label{SysParameters}
\end{table}


\subsection{ESG versus the Number of Users in Single-cell Systems}
\begin{figure}[t]
\centering
\vspace{-4mm}
\subfigure[Single-antenna systems]
{\label{APGVsK:a} 
\includegraphics[width=2.22in]{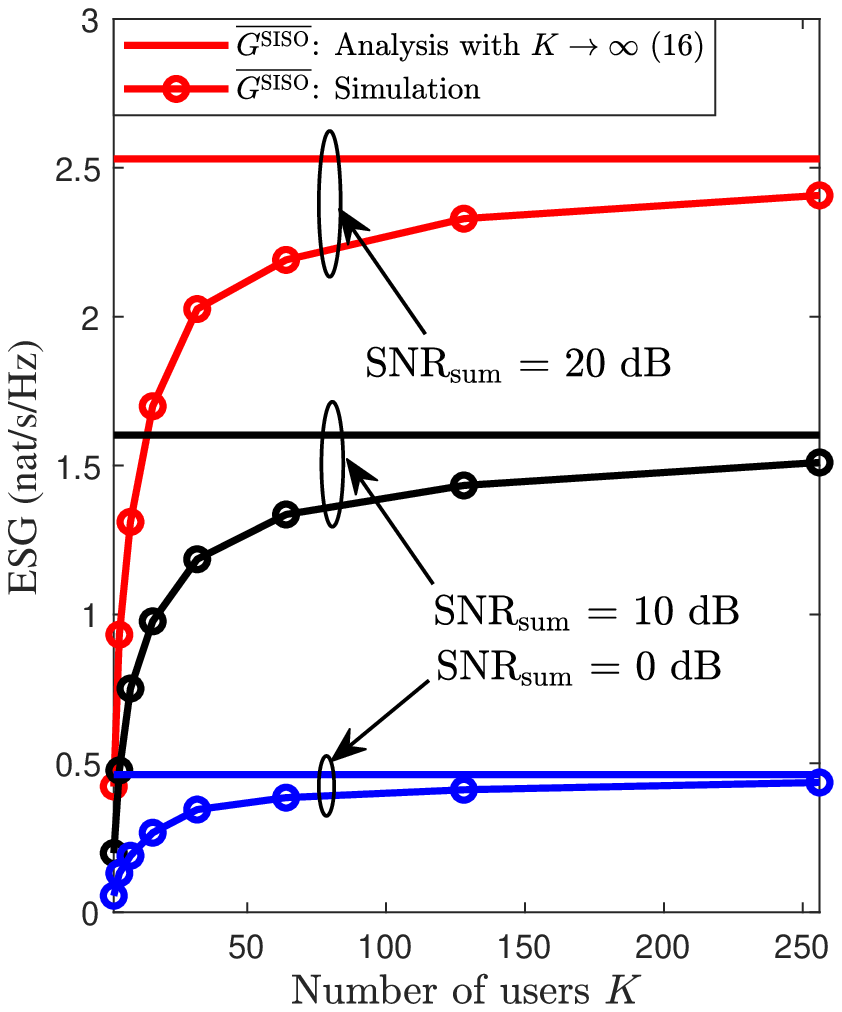}}
\hspace{-7mm}
\subfigure[Multi-antenna systems]
{\label{APGVsK:b} 
\includegraphics[width=2.22in]{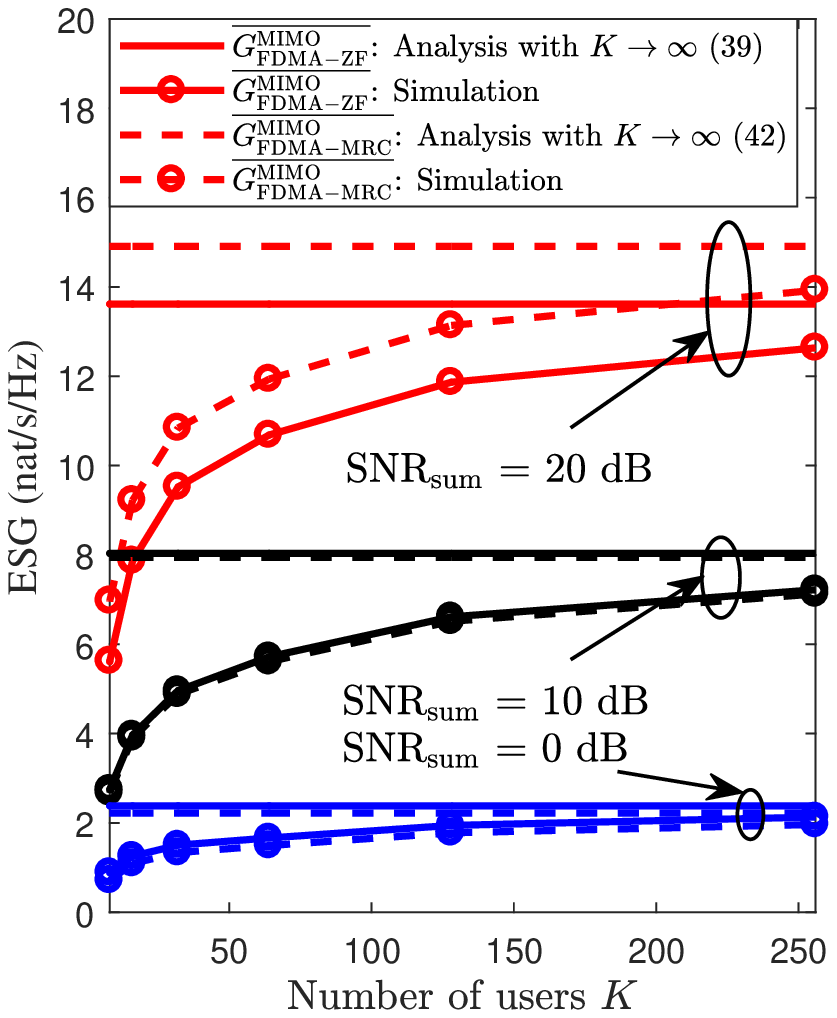}}
\hspace{-7mm}
\subfigure[Massive-antenna systems]
{\label{APGVsK:c} 
\includegraphics[width=2.22in]{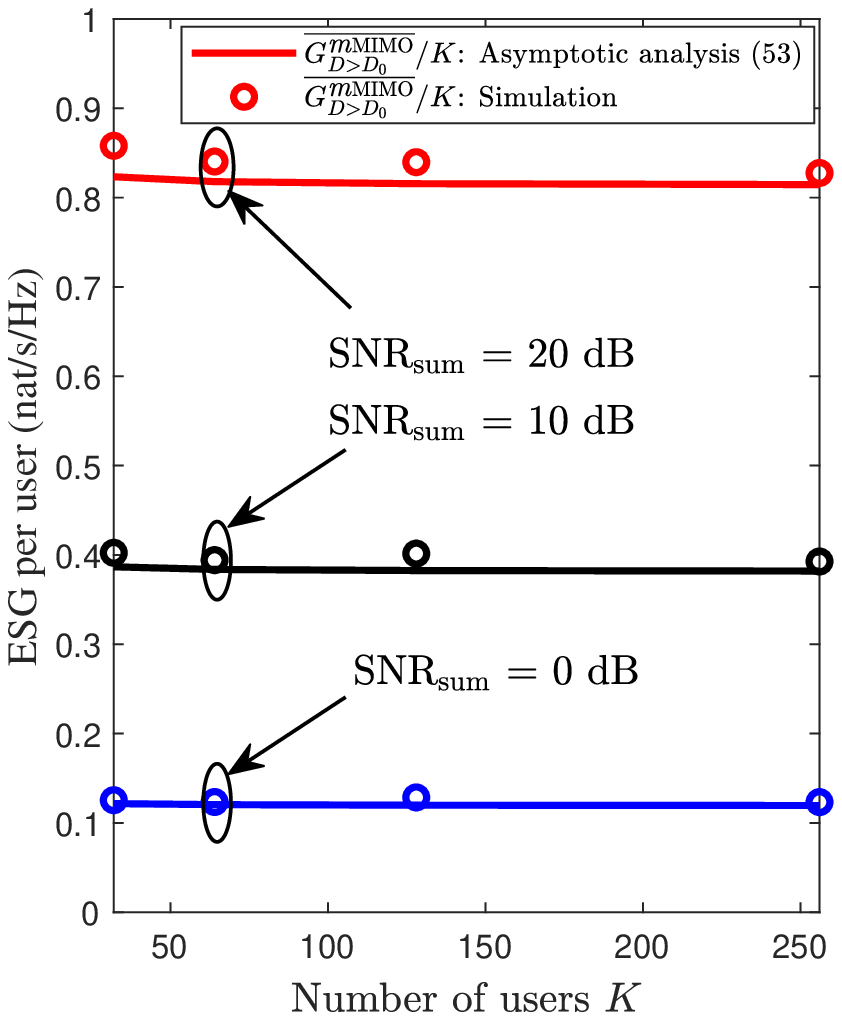}}
\caption{The ESG of NOMA over OMA versus the number of users $K$. The normalized cell size is $\eta = 10$ and the average received sum SNR is ${\rm{SNR}_{sum}} = [0,10,20]$ dB. For the considered MIMO-NOMA and MIMO-OMA systems in Fig. \ref{APGVsK:b}, we have $M=4$. For the considered \emph{m}MIMO-NOMA and \emph{m}MIMO-OMA systems in Fig. \ref{APGVsK:c}, the number of antennas equipped at the BS is adjusted according to the number of users $K$ based on $M = {K}{\delta}$ with $\delta = \frac{1}{2}$.}
\label{APGVsK}%
\end{figure}


Fig. \ref{APGVsK} illustrates the ESG of NOMA over OMA versus the number of users in the single-antenna, multi-antenna, and massive-MIMO single-cell systems.
%
In both Fig. \ref{APGVsK:a} and Fig. \ref{APGVsK:b}, we can observe that the ESG increases with the number of users $K$ and eventually approaches the asymptotic results derived for $K\to \infty$.
This is because upon increasing the number of users, the heterogeneity in channel gains among users is enhanced, which leads to an increased near-far gain.
As shown in Fig. \ref{APGVsK:c}, for massive-MIMO systems, the asymptotic ESG per user derived in \eqref{DD0EPGmMIMOERA0} closely matches with the simulations even for moderate numbers of users and SNRs.
Although \eqref{DD0EPGPerUsermMIMOERA2} is derived for massive-MIMO systems with $D=D_0$, in Fig. \ref{APGVsK:c}, we can observe a constant ESG per user in massive-MIMO systems with $D>D_0$.
In other words, the insights obtained from the massive-MIMO systems with $D=D_0$ are also applicable to the scenarios of $D>D_0$.
Compared to the ESG in the single-antenna systems of Fig. \ref{APGVsK:a}, the ESG in the multi-antenna systems of Fig. \ref{APGVsK:b} is substantially increased due to the extra spatial DoF offered by additional antennas at the BS.
Moreover, it can be observed in Fig. \ref{APGVsK:b} that we have $\mathop {\lim }\limits_{K \rightarrow \infty}  \overline{G^{\rm{MIMO}}_{\rm{FDMA-ZF}}} > \mathop {\lim }\limits_{K \rightarrow \infty}  \overline{G^{\rm{MIMO}}_{\rm{FDMA-MRC}}}$ in the low-SNR case, while $\mathop {\lim }\limits_{K \rightarrow \infty}  \overline{G^{\rm{MIMO}}_{\rm{FDMA-ZF}}} < \mathop {\lim }\limits_{K \rightarrow \infty}  \overline{G^{\rm{MIMO}}_{\rm{FDMA-MRC}}}$ in the high-SNR case.
This is because ZF detection outperforms MRC detection in the high-SNR regime for the MIMO-OMA system considered, while it becomes inferior to MRC detection in the low-SNR regime.
Furthermore, we can observe a higher ESG in Fig. \ref{APGVsK:a}, Fig. \ref{APGVsK:b}, and Fig. \ref{APGVsK:c} for the high-SNR case, e.g. ${\rm{SNR}_{sum}} = 20$ dB.
This is due to the power-domain multiplexing of NOMA, which enables multiple users to share the same time-frequency resource and motivates a more efficient exploitation of the power resource.

\subsection{ESG versus the SNR in Single-cell Systems}
\begin{figure}[t]
\centering
\subfigure[ESG of SISO-NOMA over SISO-OMA.]
{\label{APGVsSNR:a} 
\includegraphics[width=3.25in]{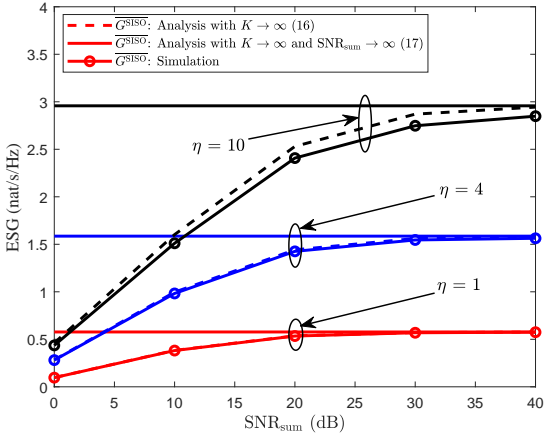}}
\hspace{-7mm}
\subfigure[ESG of MIMO-NOMA over MIMO-OMA with FDMA-ZF.]
{\label{APGVsSNR:b} 
\includegraphics[width=3.25in]{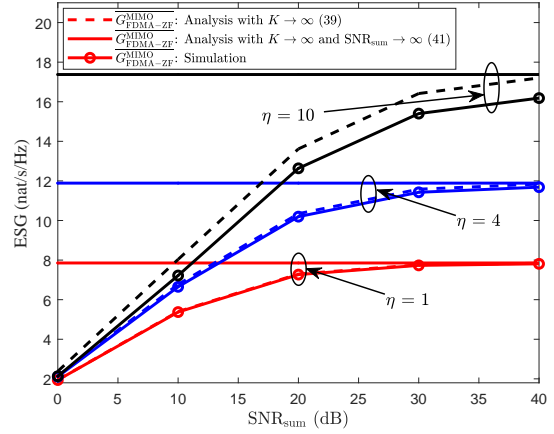}}
\subfigure[ESG of MIMO-NOMA over MIMO-OMA with FDMA-MRC.]
{\label{APGVsSNR:c} 
\includegraphics[width=3.25in]{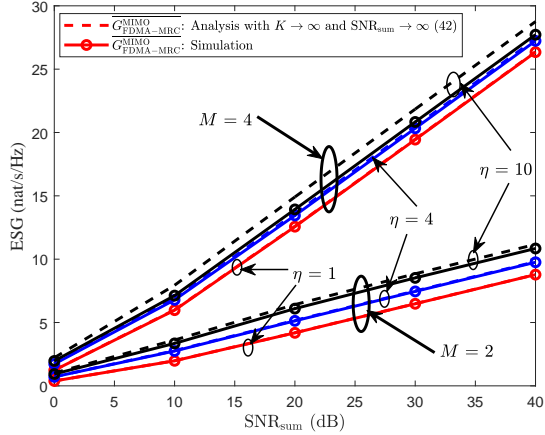}}
\hspace{-7mm}
\subfigure[ESG of \emph{m}MIMO-NOMA over \emph{m}MIMO-OMA.]
{\label{APGVsSNR:d} 
\includegraphics[width=3.25in]{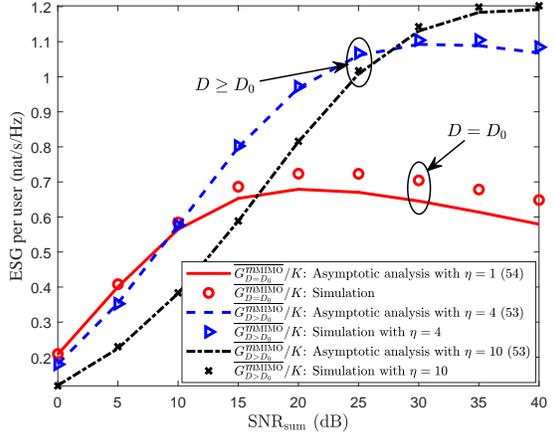}}
\caption{The ESG of NOMA over OMA versus ${\rm{SNR}_{sum}}$. The number of users is $K = 256$ and the normalized cell size is $\eta = [1,4,10]$. In Fig. \ref{APGVsSNR:b}, the number of antennas equipped at the BS $M=4$, while we have $M=[2,4]$ in Fig. \ref{APGVsSNR:c}. In Fig. \ref{APGVsSNR:d}, we have $M = 128$ such that $\delta = \frac{M}{K} = \frac{1}{2}$.}
\label{APGVsSNR}%
\end{figure}


Fig. \ref{APGVsSNR} depicts the ESG of NOMA over OMA versus the system's SNR ${\rm{SNR}_{sum}}$ within the range of ${\rm{SNR}_{sum}} = [0, 40]$ dB in the single-antenna, multi-antenna, and massive-MIMO single-cell systems.
We can observe that the simulation results match closely our asymptotic analyses in all the considered cases.
Besides, by increasing the system SNR, the ESGs seen in Fig. \ref{APGVsSNR:a} and Fig. \ref{APGVsSNR:b} increase monotonically and approach the asymptotic analyses results derived in the high-SNR regime.
In other words, the ESGs seen in Fig. \ref{APGVsSNR:a} and Fig. \ref{APGVsSNR:b} are bounded from above even if ${P_{{\rm{max}}}} \to \infty $.
This is because there is no DoF gain in the ESG of NOMA over OMA in the pair of scenarios considered.
By contrast, as derived in \eqref{EPGMIMOERA6}, the $(M-1)$-fold DoF gain in the ESG of MIMO-NOMA over MIMO-OMA with FDMA-MRC enables the ESG to increase linearly with the system's SNR in dB in the high-SNR regime, as shown in Fig. \ref{APGVsSNR:c}.
Furthermore, a higher number of antennas provides a larger DoF gain, which leads to a steeper slope of ESG versus the system SNR in dB.
In contrast to Fig. \ref{APGVsSNR:a}, Fig. \ref{APGVsSNR:b}, and Fig. \ref{APGVsSNR:c}, the ESG of \emph{m}MIMO-NOMA over \emph{m}MIMO-OMA recorded in Fig. \ref{APGVsSNR:d} first increases and then decreases with the system SNR, especially for a small normalized cell size.
In fact, the \emph{m}MIMO-NOMA system relying on MRC-SIC detection becomes interference-limited in the high-SNR regime, while the \emph{m}MIMO-OMA system remains interference-free, since favorable propagation conditions prevail for $\varsigma = \frac{W}{M} \ll 1$.
As a result, upon increasing the system SNR, the increased IUI of the \emph{m}MIMO-NOMA system considered neutralizes some of its ESG over the \emph{m}MIMO-OMA system, particularly for a small cell size associated with a limited large-scale near-far gain.

On the other hand, it is worth noticing in Fig. \ref{APGVsSNR:a}, that if all the users are randomly distributed on a circle when $D = D_0 = 50$ m, i.e., $\eta = 1$, then we have an ESG of about $0.575$ nat/s/Hz at ${\rm{SNR}_{sum}} = 40$ dB for SISO-NOMA compared to SISO-OMA.
This again verifies the accuracy of the small-scale fading gain $\gamma$ derived in \eqref{EPGSISOERA2}.
Furthermore, we can observe in Fig. \ref{APGVsSNR:a}, Fig. \ref{APGVsSNR:b}, and Fig. \ref{APGVsSNR:c}, that a larger normalized cell size $\eta$ results in a higher performance gain, which is an explicit benefit of the increased large-scale near-far gain $\vartheta \left( \eta \right)$.
By contrast, in Fig. \ref{APGVsSNR:d}, a larger cell size facilitates a higher ESG but only in the high-SNR regime, while a smaller cell size can provide a larger ESG in the low to moderate-SNR regime.
In fact, due to the large number of antennas, the IUI experienced in the \emph{m}MIMO-NOMA system is significantly reduced compared to that in single-antenna and multi-antenna systems.
As a result, in the low to moderate-SNR regime, the \emph{m}MIMO-NOMA system considered may be noise-limited rather than interference-limited, which is in line with the single-antenna and multi-antenna systems.
For instance, the noise degrades the achievable rates of the cell-edge users more severely compared to the impact of IUI in the \emph{m}MIMO-NOMA system, especially for large normalized cell sizes.
Therefore, the large-scale near-far gain cannot be fully exploited in the low to moderate-SNR regime in the massive-MIMO systems.
Moreover, it can be observed in Fig. \ref{APGVsSNR:d} that the ESG increases faster for a larger normalized cell size $\eta$.
This is due to the enhanced large-scale near-far gain observed for a larger cell size, which enables NOMA to exploit the power resource more efficiently.

\subsection{ESG versus the Number of Antennas $M$ in Single-cell Systems}

\begin{figure}[t]
\centering
\vspace{-4mm}
\subfigure[ESG of MIMO-NOMA over MIMO-OMA with FDMA-ZF.]
{\label{MIMOAPGVsM:a} 
\includegraphics[width=2.07in]{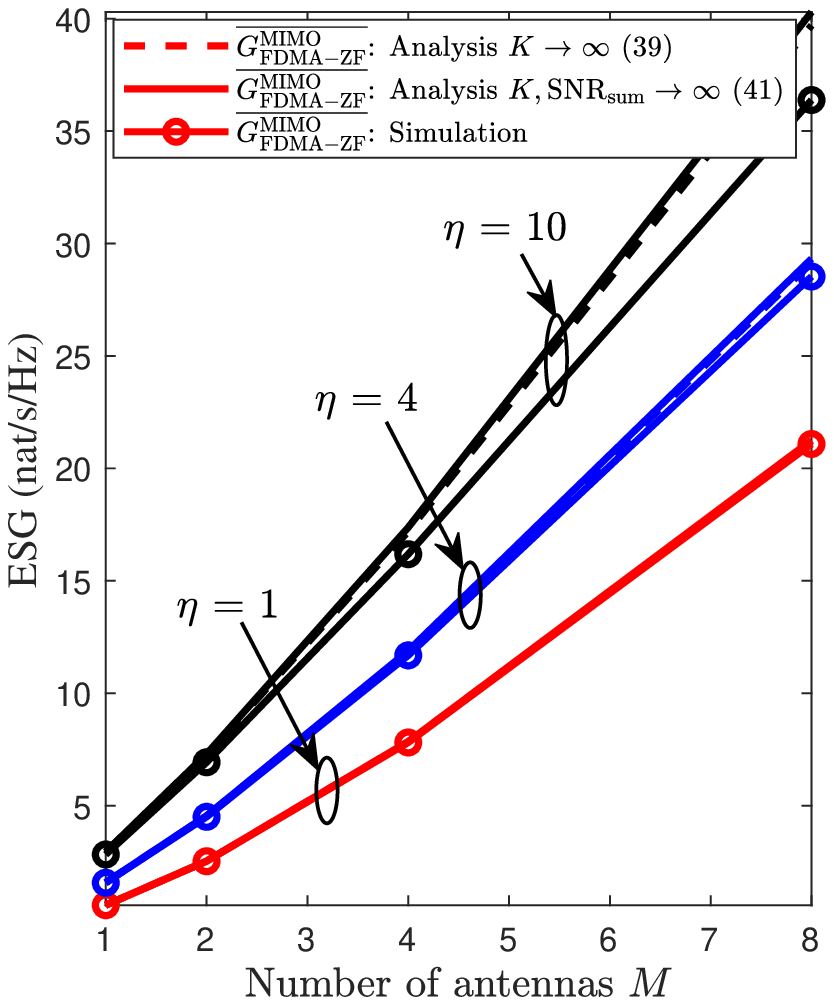}}
\subfigure[ESG of MIMO-NOMA over MIMO-OMA with FDMA-MRC.]
{\label{MIMOAPGVsM:b} 
\includegraphics[width=2.07in]{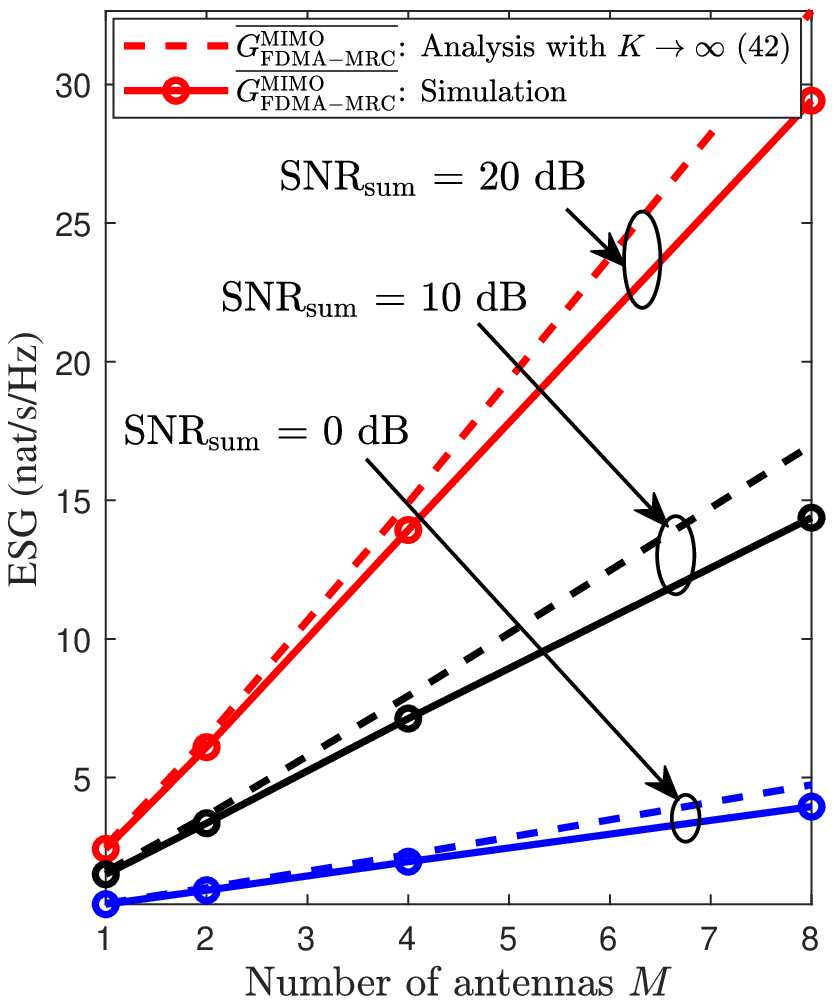}}
\subfigure[ESG of \emph{m}MIMO-NOMA over \emph{m}MIMO-OMA.]
{\label{MIMOAPGVsM:c} 
\includegraphics[width=2.07in]{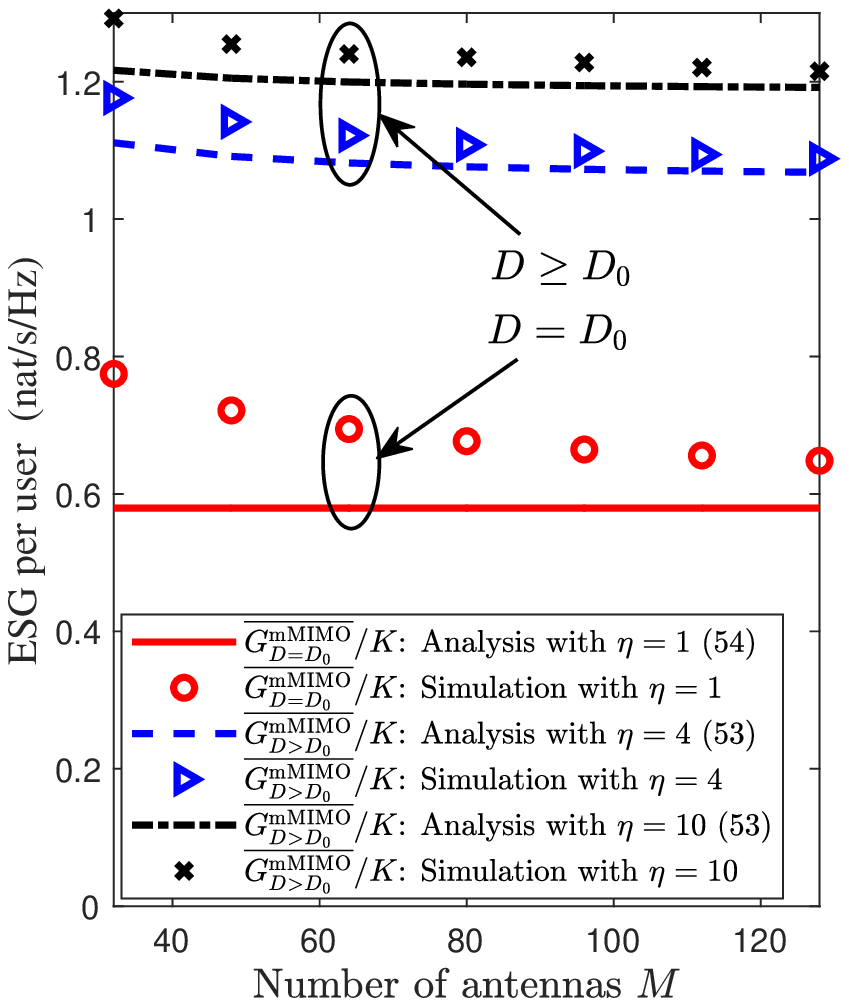}}
\caption{The ESG of NOMA over OMA versus the number of antennas $M$. The number of users is $K = 256$ in Fig. \ref{MIMOAPGVsM:a} and Fig. \ref{MIMOAPGVsM:b}. The normalized cell size is $\eta = [1,4,10]$ in Fig. \ref{MIMOAPGVsM:a} and Fig. \ref{MIMOAPGVsM:c} while it is set as $\eta = [10]$ in Fig. \ref{MIMOAPGVsM:b}. The average received sum SNR is ${\rm{SNR}_{sum}} = [0,10,20]$ dB in Fig. \ref{MIMOAPGVsM:b}, while it is set as ${\rm{SNR}_{sum}} = [40]$ dB in Fig. \ref{MIMOAPGVsM:a} and Fig. \ref{MIMOAPGVsM:c}. In multi-antenna systems in Fig. \ref{MIMOAPGVsM:a} and Fig. \ref{MIMOAPGVsM:b}, the number of antennas $M$ equipped at the BS ranges from $1$ to $8$.
In massive-MIMO systems in Fig. \ref{MIMOAPGVsM:c}, $M$ ranges from $32$ to $128$, and the number of users $K$ is adjusted according to $M$ based on $K = \frac{M}{\delta}$ with $\delta = \frac{1}{2}$.}
\label{MIMOAPGVsM}%
\end{figure}

Fig. \ref{MIMOAPGVsM} illustrates the ESG of NOMA over OMA versus the number of antennas $M$ employed at the BS in multi-antenna and massive-MIMO systems.
It can be observed that the simulation results closely match our asymptotic analyses  for all the simulation scenarios.
In particular, observe for the ESG of MIMO-NOMA over MIMO-OMA with FDMA-ZF in Fig. \ref{MIMOAPGVsM:a} that as predicted in \eqref{EPGMIMOERA3}, the asymptotic ESG $\overline{G^{{\rm{SISO}}}}$ of single-antenna systems is increased by $M$, when an $M$-antenna array is employed at the BS.
More importantly, a larger normalized cell size $\eta$ enables a steeper slope in the ESG versus the number of antennas $M$, which is due to the increased large-scale near-far gain $\vartheta \left( \eta \right)$, as shown in \eqref{EPGMIMOERA2}.
Apart from the linearly increased component of ESG vesus $M$, an additional power gain factor of $\ln\left(M\right)$ can also be observed in Fig. \ref{MIMOAPGVsM:a} as derived in \eqref{EPGMIMOERA3}.
Observe the ESG of MIMO-NOMA over MIMO-OMA with FDMA-MRC in Fig. \ref{MIMOAPGVsM:b} that the ESG grows linearly versus $M$ due to the $(M-1)$-fold of DoF gain and the corresponding slope becomes higher for a higher system SNR, as seen in \eqref{EPGMIMOERA6}.
The ESG per user seen in Fig. \ref{MIMOAPGVsM:c} for massive-MIMO systems remains almost constant upon increasing $M$, which matches for our asymptotic analysis in \eqref{DD0EPGPerUsermMIMOERA2}, and is also consistent with the results of Fig. \ref{APGVsK:c} for the fixed ratio $\delta = \frac{M}{K}$.
Furthermore, we can observe that a large cell size offers a higher ESG per user due to the improved large-scale near-far gain.

\subsection{ESG versus the Total Transmit Power in Multi-cell Systems}
In a multi-cell system, we consider a high user density scenario within a large circular area with the radius of $D_1 = 5$ km and the user density of $\rho = 1000$ devices per $\rm{km}^{\rm{2}}$.
As a result, the total number of users in the multi-cell system considered is $K' = \left\lceil\rho \pi D_1^2\right\rceil$.
Then, the number of users in each cell is given by $K = \left\lceil {\rho \pi {D^2}} \right\rceil$ with $D$ in the unit of km.
Meanwhile, the number of adjacent cells $L$ can be obtained by $L = \left\lceil \frac{K' - K}{K} \right\rceil$, so that all the $K'$ users can be covered.
Furthermore, the $K'$ users in all the cells share a given total transmit power and the total transmit power ${P_{\rm{max}}'}$ of $(L+1)$ cells is within the range spanning from $20$ dBm to $60$ dBm\footnote{Since there are a larger number of users deployed in the considered area, we set a large power budget for all the users in the considered multi-cell system.}.
In this section, we also consider an equal power allocation among multiple cells and an equal power allocation among users within each cell, i.e., we have ${P_{\rm{max}}} = \frac{P_{\rm{max}}'}{L+1}$ and ${p_{k',l}} = \frac{{P_{\rm{max}}}}{K}$.
All the other simulation parameters are the same as those adopted in the single-cell systems.

\begin{figure}[t]
\centering
\subfigure[The ESG of SISO-NOMA over SISO-OMA.]
{\label{MultiCell:a} 
\includegraphics[width=3.25in]{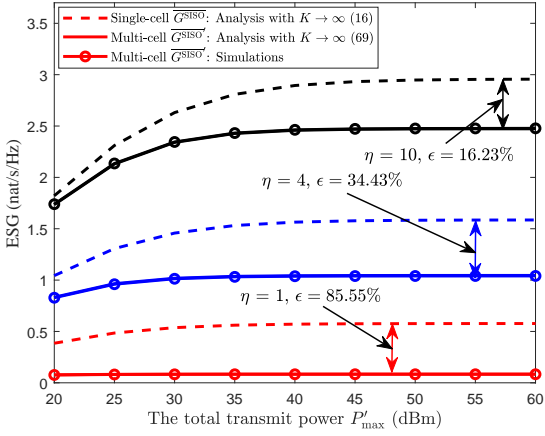}}
\hspace{-7mm}
\subfigure[ESG of MIMO-NOMA over MIMO-OMA with FDMA-ZF.]
{\label{MultiCell:b} 
\includegraphics[width=3.25in]{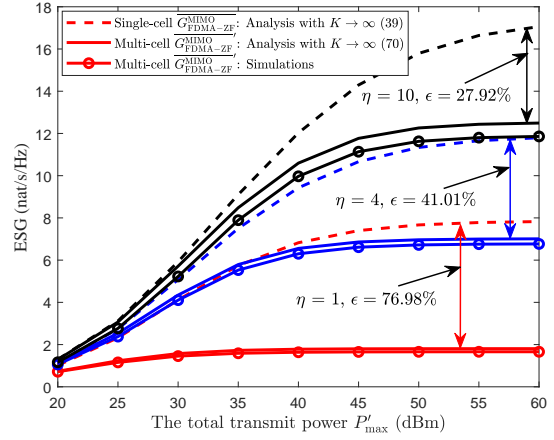}}
\subfigure[ESG of MIMO-NOMA over MIMO-OMA with FDMA-MRC.]
{\label{MultiCell:c} 
\includegraphics[width=3.25in]{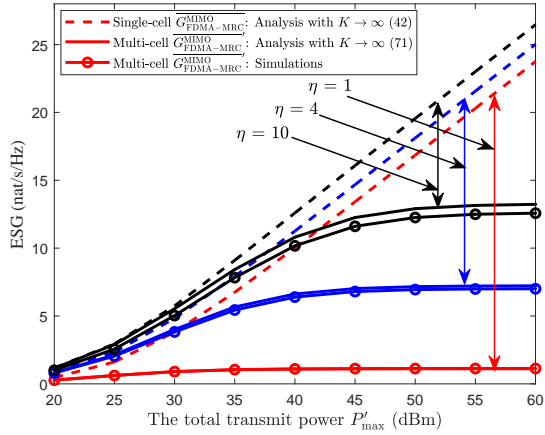}}
\hspace{-7mm}
\subfigure[ESG of \emph{m}MIMO-NOMA over \emph{m}MIMO-OMA.]
{\label{MultiCell:d} 
\includegraphics[width=3.25in]{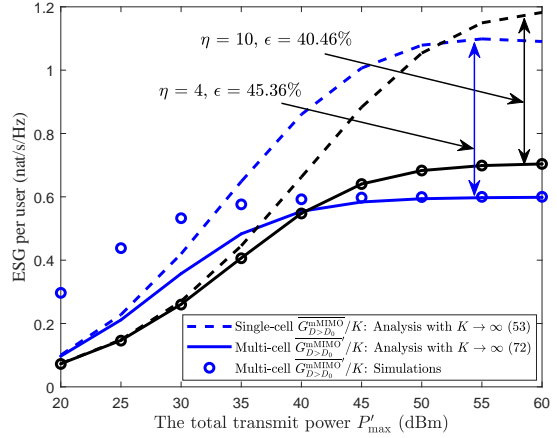}}
\caption{ESG versus the total transmit power ${P_{\rm{max}}'}$ in multi-cell systems. The normalized cell size is $\eta = [1,4,10]$ in Fig. \ref{MultiCell:a}, Fig. \ref{MultiCell:b}, and \ref{MultiCell:c}, while it is set as $\eta = [4,10]$ in Fig. \ref{MultiCell:d}. The number of antennas equipped at each BS is $M=1$ in Fig. \ref{MultiCell:a} and $M=4$ in Fig. \ref{MultiCell:b} as well as Fig. \ref{MultiCell:c}.
In Fig. \ref{MultiCell:d}, the number of antennas equipped at each BS is adjusted based on the number of users in each cell via $M = \left\lceil {K}{\delta} \right\rceil$ and the group size of the considered \emph{m}MIMO-OMA system is $W = \left\lceil {\varsigma}M \right\rceil$.
The ESG degradations due to the ICI are denoted by double-sided arrows.}
\label{MultiCell}%
\end{figure}

In contrast to the single-cell systems, the ESG versus the total transmit power ${P_{\rm{max}}'}$ trends are more interesting, which is due to the less straightforward impact of ICI on the performance gain of NOMA over OMA in multi-cell systems.
Fig. \ref{MultiCell} shows the ESG of NOMA over OMA versus the total transmit power ${P_{\rm{max}}'}$ in single-antenna, multi-antenna, and massive-MIMO\footnote{Note that, for the considered massive-MIMO multi-cell system, a small cell size leads to a small number of users $K$ in each cell and thus results in a small number of antennas $M$ due to the fixed ratio $\delta = \frac{M}{K}$. This is contradictory to our assumption of $K \to \infty$ and $M \to \infty$. Therefore, we only consider the normalized cell size of $\eta = [4,10]$ for massive-MIMO multi-cell systems in Fig. \ref{MultiCell:d}.} multi-cell systems.
The analytical results in single-cell systems are also shown for comparison.
We can observe that the performance gains of NOMA over OMA are degraded upon extending NOMA from single-cell systems to multi-cell systems.
In fact, NOMA schemes enable all the users in adjacent cells to simultaneously transmit their signals on the same frequency band and thus the ICI level in NOMA schemes is substantially higher than that in OMA schemes, as derived in \eqref{MulticellInterference3}.
For the ease of illustration, we define the normalized performance degradation of the ESG in multi-cell systems compared to that in single-cell systems as $\epsilon = \frac{{\overline G - \overline G'}}{{\overline G}}$, where $\overline G$ denotes the ESG in single-cell systems and $\overline G'$ denotes the ESG in multi-cell systems.
It can be observed that the ESG degradation is more severe for a small normalized cell size $\eta$, because multi-cell systems suffer from a more severe ICI for smaller cell sizes due to a shorter inter-site distance.
Therefore, the system performance becomes saturated even for a moderate system power budget in the case of a smaller cell size.

It is worth noting that the ESG of MIMO-NOMA over MIMO-OMA with FDMA-MRC is saturated in multi-cell systems in the high transmit power regime, as shown in Fig. \ref{MultiCell:c}, which is different from the trends seen for single-cell systems in Fig. \ref{APGVsSNR:c}.
In fact, the $(M-1)$-fold DoF gain in the ESG of MIMO-NOMA over MIMO-OMA with FDMA-MRC in single-cell systems derived in \eqref{EPGMIMOERA6} can only be achieved in the high-SNR regime.
However, due to the lack of joint multi-cell signal processing to mitigate the ICI, the multi-cell system becomes interference-limited upon increasing the total transmit power.
Therefore, the multi-cell system actually operates in the low-SINR regime, which does not facilitate the exploitation of the DoF gain in single-cell systems.

\begin{table}
	\caption{Comparison on ESG (nat/s/Hz) of NOMA over OMA in the considered scenarios. The system setup is $K = 256$, $D = 200$ m, $\eta = 4$, and $M = 4$ for  multi-antenna systems.}
	\centering
	\begin{tabular}{c|cc|cc}
		\hline
		               & \multicolumn{2}{c}{${\rm{SNR}_{sum}} = 0$ dB}  \vline& \multicolumn{2}{c}{${\rm{SNR}_{sum}} = 10$ dB} \\\hline
		               & Single-cell    & Multi-cell                          & Single-cell    & Multi-cell                           \\\hline
		Single-antenna & 0.281          & 0.2639                              & 0.983          & 0.7973                                 \\
		Multi-antenna  & 2.114          & 2.0179                              & 6.65           & 5.4113                                  \\
		Massive-MIMO (ESG per user)     & 0.1796         & 0.1702                              & 0.5765         & 0.4490                      \\
		\hline
	\end{tabular}\label{SimulationComparison}
\end{table}

\begin{Remark}
The comparison of the ESG (nat/s/Hz) results of NOMA over OMA in all the scenarios considered is summarized in Table \ref{SimulationComparison}.
We consider a practical operation setup with $K = 256$, $D = 200$ m, $\eta = 4$, and $M = 4$ for the multi-antenna systems.
For fair comparison, the total transmit power ${P_{\rm{max}}'}$ in multi-cell systems is adjusted for ensuring that the total average received SNR ${\rm{SNR}_{sum}}$ at the serving BS is identical to that in single-cell systems.
Note that the row of massive-MIMO in Table \ref{SimulationComparison} quantifies the ESG per user of \emph{m}MIMO-NOMA over \emph{m}MIMO-OMA, which is consistent with Fig. \ref{APGVsSNR:d}, Fig. \ref{MIMOAPGVsM:c} and Fig. \ref{MultiCell:d}.
We can observe that the ESG remains a near-constant at the low SNR of ${\rm{SNR}_{sum}} = 0$ dB when extending NOMA from single-cell systems to multi-cell systems, while the ESG degrades substantially at the high SNR of ${\rm{SNR}_{sum}} = 10$ dB.
In fact, the limited transmit power budget in the low-SNR regime in adjacent cells only leads to a low ICI level at the serving BS, which avoids a significant ESG degradation, when applying NOMA in multi-cell systems.
\end{Remark}

\section{Conclusions and Future Work}
In this paper, we investigated the ESG in uplink communications attained by NOMA over OMA in single-antenna, multi-antenna, and massive-MIMO systems with both single-cell and multi-cell deployments.
In the single-antenna single-cell system considered, the ESG of NOMA over OMA was quantified and two types of gains were identified in the ESG derived, i.e., the large-scale near-far gain and the small-scale fading gain.
The large-scale near-far gain increases with the cell size, while the small-scale fading gain is a constant of $\gamma = 0.57721$ nat/s/Hz in Rayleigh fading channels.
Additionally, we unveiled that the ESG of SISO-NOMA over SISO-OMA can be increased by $M$ times upon using $M$ antennas at the BS, owing to the extra spatial DoF offered by additional antennas.
In the massive-MIMO single-cell system considered, the ESG of NOMA over OMA increases linearly both with the number of users and with the number of antennas at the BS.
The analytical results derived for single-cell systems were further extended to multi-cell systems via characterizing the effective ICI channel distribution and by deriving the ICI power.
We found that a larger cell size is preferred by NOMA for both single-cell and multi-cell systems, due to the enhanced large-scale near-far gain and reduced ICI, respectively.
Extensive simulation results have shown the accuracy of our performance analyses and confirmed the insights provided above.

In this paper, as a first attempt to unveil fundamental insights on the performance gain of NOMA over OMA, we considered the ideal case associated with perfect CSI and error-propagation-free SIC detection at the BS.
In practice, it is difficult to acquire the perfect CSI due to channel estimation errors, feedback delays, and/or quantization errors.
Similarly, the error propagation during SIC decoding is usually also inevitable in practice.
In our future work, we will investigate the ESG of NOMA over OMA both in the face of imperfect CSI and error propagation during SIC detection.


\section{Acknowledgement}
The authors would like to thank Prof. Ping Li from the City University of Hong Kong for valuable discussions during this work.

\section*{Appendix}
\begin{appendices}
\subsection{Proof of Theorem \ref{Theorem1}}\label{AppendixA}
To facilitate the proof, we first consider a virtual system whose capacity serves as an upper bound to that of the system in \eqref{MIMONOMASystemModel}.
In particular, the virtual system is the uplink of a $K$-user $M \times M$ MIMO system with $M$ antennas employed at each user and the BS.
We assume that, in the virtual $K$-user $M \times M$ MIMO system, each user faces $M$ parallel subchannels with identical subchannel gain ${\left\| {{{\bf{h}}_k}} \right\|}$, i.e., the channel matrix between user $k$ and the BS is ${\left\| {{{\bf{h}}_k}} \right\|}{{\bf{I}}_M}$.
As a result, the signal received at the BS is given by
\begin{equation}\label{MIMONOMASystemModelRelaxed}
{\bf{\tilde y}} = \sum\limits_{k = 1}^K \sqrt {{p_k}} {\left\| {{{\bf{h}}_k}} \right\|}{{\bf{I}}_M}{{{\bf{\tilde x}}}_k} + {\bf{v}},
\end{equation}
where ${{{\bf{\tilde x}}}_k} = {{\bf{u}}_k}{x_k} \in \mathbb{C}^{M \times 1}$ denotes the transmitted signal after preprocessing by a precoder ${{\bf{u}}_k}\in \mathbb{C}^{M \times 1}$.
We note that the precoder should satisfy the constraint $ \rm{Tr} \left({{{\bf{u}}_k}} {{{\bf{u}}_k^{\rm{H}}}} \right)\le 1$, so that ${\rm{E}}\left\{ {{\bf{\tilde x}}_k^{\rm{H}}{{{\bf{\tilde x}}}_k}} \right\} \le {\rm{E}}\left\{ {x_k^{\rm{2}}} \right\} = 1$.
Additionally, in the virtual $K$-user $M \times M$ MIMO system, the subchannel gain between user $k$ and the BS is forced to be identical as ${\left\| {{{\bf{h}}_k}} \right\|}$, where ${\left\| {{{\bf{h}}_k}} \right\|}$ is the corresponding channel gain value between user $k$ and the BS in the original $K$-user $1 \times M$ MIMO system in \eqref{MIMONOMASystemModel}.
Furthermore, we consider an arbitrary but the identical power allocation strategy ${\bf{p}} = \left[ {{p_1}, \ldots ,{p_K}} \right]$ as that of our original system in \eqref{MIMONOMASystemModel} during the following proof.
Upon comparing \eqref{MIMONOMASystemModel} and \eqref{MIMONOMASystemModelRelaxed}, we can observe that the specific choice of the precoder ${{{\bf{u}}_k}} = \frac{{{{\bf{h}}_k}}}{\left\| {{{\bf{h}}_k}} \right\|}$ in \eqref{MIMONOMASystemModelRelaxed} would result in an equivalent system to that in \eqref{MIMONOMASystemModel}.
In other words, the capacity of the system in \eqref{MIMONOMASystemModelRelaxed} serves as an upper bound to that of the system in \eqref{MIMONOMASystemModel}, i.e., we have:
\begin{align}\label{UpperBoundMIMONOMA}
R_{\rm{sum}}^{{\rm{MIMO-NOMA}}} \mathop  =\limits^{(a)} C\left( {M, K, {\bf{p}},{\bf{H}}} \right) &\le {C}\left( {M^2, K,{\bf{p}},\widetilde{{\bf{H}}}} \right)\notag\\
&= \mathop {\max }\limits_{{\rm{Tr}}\left( {{{\bf{u}}_k}{\bf{u}}_k^{\rm{H}}} \right) \le 1} {\ln}\left| {{{\bf{I}}_M} + \frac{1}{{{N_0}}}\sum\limits_{k = 1}^K {{p_k}{{\left\| {{{\bf{h}}_k}} \right\|}^2}{{\bf{I}}_M}{{\bf{u}}_k}{\bf{u}}_k^{\rm{H}}{\bf{I}}_M^{\rm{H}}} } \right| \notag\\
&= M{\ln}\left( {1 + \frac{1}{{M{N_0}}}\sum\limits_{k = 1}^K {{p_k}{{\left\| {{{\bf{h}}_k}} \right\|}^2}} } \right),
\end{align}
where $C\left( {M, K, {\bf{p}},{\bf{H}}} \right)$ denotes the capacity for the uplink $K$-user $1 \times M$ MIMO system in \eqref{MIMONOMASystemModel} for a channel matrix ${\bf{H}} = \left[ {{{\bf{h}}_1}, \ldots ,{{\bf{h}}_K}} \right]$ and power allocation ${\bf{p}}$.
Furthermore, $C\left( {M^2, K, {\bf{p}},\widetilde{{\bf{H}}}} \right)$ denotes the capacity of the virtual $K$-user $M \times M$ MIMO system in \eqref{MIMONOMASystemModelRelaxed} associated with a channel matrix $\widetilde{{\bf{H}}} = \left[ {{\left\| {{{\bf{h}}_1}} \right\|}{\bf{I}}_M, \ldots ,{\left\| {{{\bf{h}}_K}} \right\|}{\bf{I}}_M} \right]$, while ${\bf{p}}$ is the value as in \eqref{MIMONOMASystemModel}.
The achievable sum-rate $R_{\rm{sum}}^{{\rm{MIMO-NOMA}}}$  is given in \eqref{InstantSumRateMIMONOMA} and the equality $(a)$ in \eqref{UpperBoundMIMONOMA} is obtained by a capacity-achieving MMSE-SIC\cite{Tse2005}.

Now, to prove the asymptotic tightness of the upper bound considered in \eqref{UpperBoundMIMONOMA}, we have to consider a lower bound of the achievable sum-rate in \eqref{InstantSumRateMIMONOMA} and prove that asymptotically the upper bound and the lower bound converge to the same expression.
For the uplink $K$-user $1 \times M$ MIMO system in \eqref{MIMONOMASystemModel}, we assume that all the users transmit their signals subject to the power allocation ${\bf{p}} = \left[ {{p_1}, \ldots ,{p_K}} \right]$ and the BS utilizes an MRC-SIC receiver to retrieve the messages of all the $K$ users.
Then the achievable rate for user $k$ of the MIMO-NOMA system using the MRC-SIC receiver is given by:
\begin{align}\label{MIMONOMAMRCSICIndividualAchievableRate}
R_{k,\rm{MRC-SIC}}^{{\rm{MIMO-NOMA}}} = {\ln}\left(1+ {\frac{{{p_k}{{\left\| {{{\bf{h}}_k}} \right\|}^2}}}{{\sum\limits_{i = k + 1}^K {p_i}{{\left\| {{{\bf{h}}_i}} \right\|}^2} {{\left| {\bf{e}}_k^{\rm{H}}{{\bf{e}}_i} \right|}^2}  + {N_0}}}} \right),
\end{align}
where ${{\bf{e}}_k} = \frac{{{{\bf{h}}_k}}}{{\left\| {{{\bf{h}}_k}} \right\|}}$ denotes the channel direction of user $k$.
Then, it becomes clear that the achievable sum-rate of the MIMO-NOMA system using the MRC-SIC receiver serves as a lower bound to the channel capacity in \eqref{InstantSumRateMIMONOMA}, i.e., we have
\begin{align}\label{LowerBoundMIMONOMA2}
R_{\rm{sum,MRC-SIC}}^{{\rm{MIMO-NOMA}}} = \sum \limits_{k=1}^{K} R_{k,\rm{MRC-SIC}}^{{\rm{MIMO-NOMA}}} \le R_{\rm{sum}}^{{\rm{MIMO-NOMA}}}.
\end{align}

Through the following theorem and corollaries, we first characterize the statistics of ${{\bf{e}}_k}$ as well as ${{\left| {\bf{e}}_k^{\rm{H}}{{\bf{e}}_i} \right|}^2}$ and derive the asymptotic achievable sum-rate of MIMO-NOMA employing an MRC-SIC receiver.
Then, we show that the upper bound considered in \eqref{UpperBoundMIMONOMA} and the lower bound of \eqref{LowerBoundMIMONOMA2} will asymptotically converge to the same limit for $K \to \infty$.

\begin{Lem}\label{UnitSphere}
For ${\bf{h}}_k \sim \mathcal{CN}\left(\mathbf{0},\frac{1}{1+d_k^{\alpha}}{{\bf{I}}_M}\right)$, the normalized random vector (channel direction) ${{\bf{e}}_k} = \frac{{{{\bf{h}}_k}}}{{\left\| {{{\bf{h}}_k}} \right\|}}$ is uniformly distributed on a unit sphere in $\mathbb{C}^{M}$.
\end{Lem}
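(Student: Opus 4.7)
The plan is to reduce the statement to the standard fact that a complex Gaussian vector with scalar covariance is isotropically distributed, and then invoke the uniqueness of the unitarily-invariant probability measure on the complex sphere.

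First, I would absorb the distance-dependent scaling. Writing $\sigma^2 = 1/(1+d_k^\alpha)$, we have $\mathbf{h}_k = \sigma \tilde{\mathbf{h}}_k$ with $\tilde{\mathbf{h}}_k \sim \mathcal{CN}(\mathbf{0},\mathbf{I}_M)$. Since $\mathbf{e}_k = \mathbf{h}_k/\|\mathbf{h}_k\| = \tilde{\mathbf{h}}_k/\|\tilde{\mathbf{h}}_k\|$, the positive scalar $\sigma$ cancels and it suffices to prove the claim for the standard complex Gaussian $\tilde{\mathbf{h}}_k$. This observation is what makes the large-scale path-loss term irrelevant to the direction distribution.

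Next, I would exploit isotropy. The density of $\tilde{\mathbf{h}}_k$ is $f(\mathbf{x}) = \pi^{-M}\exp(-\|\mathbf{x}\|^2)$, which is a function of $\|\mathbf{x}\|$ only. Hence for every unitary matrix $\mathbf{U} \in \mathbb{C}^{M \times M}$, the vector $\mathbf{U}\tilde{\mathbf{h}}_k$ has the same density as $\tilde{\mathbf{h}}_k$, because $\|\mathbf{U}\mathbf{x}\|=\|\mathbf{x}\|$ and $|\det \mathbf{U}|=1$. Consequently,
\begin{equation}
\mathbf{U}\mathbf{e}_k \;=\; \frac{\mathbf{U}\tilde{\mathbf{h}}_k}{\|\mathbf{U}\tilde{\mathbf{h}}_k\|} \;\stackrel{d}{=}\; \frac{\tilde{\mathbf{h}}_k}{\|\tilde{\mathbf{h}}_k\|} \;=\; \mathbf{e}_k,
\end{equation}
so the law of $\mathbf{e}_k$ on the unit sphere $S^{2M-1}\subset \mathbb{C}^M$ is invariant under the full unitary group $U(M)$. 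Since $U(M)$ acts transitively on $S^{2M-1}$, by the uniqueness of the Haar-induced probability measure on a homogeneous space, this invariant distribution must be the normalized surface (uniform) measure on $S^{2M-1}$.

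The only step that requires a little care is the last one: invoking uniqueness of the unitarily invariant measure. If I prefer a self-contained argument, I would instead change variables to generalized polar coordinates $\tilde{\mathbf{h}}_k = r\,\mathbf{e}_k$ with $r=\|\tilde{\mathbf{h}}_k\|$, compute the Jacobian, and show that the joint density factorizes as a function of $r$ alone times the uniform surface element, which simultaneously yields independence of $r$ and $\mathbf{e}_k$ and the uniformity of $\mathbf{e}_k$. I expect the main (very mild) obstacle to be stating the measure-theoretic uniqueness cleanly; everything else is a direct consequence of the scalar-covariance structure.
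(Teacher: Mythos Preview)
Your proposal is correct and follows essentially the same route as the paper: both arguments exploit the rotational (unitary) invariance of an isotropic complex Gaussian to conclude that the normalized vector must be uniformly distributed on the sphere. Your write-up is in fact slightly more careful than the paper's, since you use unitary matrices (the paper says ``orthogonal'') and you explicitly name the uniqueness of the invariant (Haar-induced) measure that justifies the final step.
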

\begin{proof}
According to the system model of ${{{\bf{h}}_k}} = \frac{{\bf{g}}_k}{\sqrt{1+d_k^{\alpha}}}$ with ${\bf{g}}_k \sim \mathcal{CN}\left(\mathbf{0},{{\bf{I}}_M}\right)$, we have ${\bf{h}}_k \sim \mathcal{CN}\left(\mathbf{0},\frac{1}{1+d_k^{\alpha}}{{\bf{I}}_M}\right)$.
The distribution of ${{\bf{e}}_k}$ can be proven by exploiting the orthogonal-invariance of the multivariate normal distribution.
In particular, for any orthogonal matrix $\mathbf{Q}$, we have $\mathbf{Q}{\bf{h}}_k \sim \mathcal{CN}\left(\mathbf{0},\frac{1}{1+d_k^{\alpha}}{{\bf{I}}_M}\right)$, which means that the distribution of ${\bf{h}}_k$ is invariant to rotations (orthogonal transform).
Then, ${{\bf{e}}_k} = \frac{{{\mathbf{Q}{\bf{h}}_k}}}{{\left\| {{\mathbf{Q}{\bf{h}}_k}} \right\|}} = \frac{{{\mathbf{Q}{\bf{h}}_k}}}{{\left\| {{{\bf{h}}_k}} \right\|}}$ is also invariant to rotation.
Meanwhile, we have ${{\left\| {{{\bf{e}}_k}} \right\|}} = 1$ for sure.
Therefore, ${{\bf{e}}_k}$ must be uniformly distributed on a unit sphere on $\mathbb{C}^{M}$.
\end{proof}

\begin{Cor}\label{Corollary1}
The channel direction of user $k$, ${{\bf{e}}_k}$, is independent of its channel gain ${{\left\| {{{\bf{h}}_k}} \right\|}}$.
\end{Cor}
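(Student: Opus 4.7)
The plan is to leverage the rotational invariance of the isotropic complex Gaussian distribution, which was already the engine behind Lemma 1, and upgrade that observation from a marginal statement about the channel direction to a joint statement about $({{\bf{e}}_k}, {\left\| {{{\bf{h}}_k}} \right\|})$. Concretely, I would work with the polar decomposition ${{\bf{h}}_k} = {\left\| {{{\bf{h}}_k}} \right\|}\, {{\bf{e}}_k}$ and show that the conditional law of ${{\bf{e}}_k}$ given ${\left\| {{{\bf{h}}_k}} \right\|} = r$ is the uniform distribution on the unit sphere in $\mathbb{C}^M$, for every $r > 0$. Since this conditional law does not depend on $r$, independence of ${{\bf{e}}_k}$ and ${\left\| {{{\bf{h}}_k}} \right\|}$ follows at once.

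First I would recall that ${{\bf{h}}_k} \sim \mathcal{CN}\bigl(\mathbf{0}, \tfrac{1}{1+d_k^{\alpha}} {{\bf{I}}_M}\bigr)$, so for any unitary matrix $\mathbf{Q} \in \mathbb{C}^{M \times M}$ the rotated vector $\mathbf{Q}{{\bf{h}}_k}$ has the same distribution as ${{\bf{h}}_k}$. Noting that $\|\mathbf{Q}{{\bf{h}}_k}\| = \|{{\bf{h}}_k}\|$ and $\mathbf{Q}{{\bf{e}}_k} = \mathbf{Q}{{\bf{h}}_k}/\|{{\bf{h}}_k}\|$, it follows that the pair $(\mathbf{Q}{{\bf{e}}_k}, \|{{\bf{h}}_k}\|)$ has the same joint distribution as $({{\bf{e}}_k}, \|{{\bf{h}}_k}\|)$. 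Conditioning both sides on $\|{{\bf{h}}_k}\| = r$, the conditional distribution of ${{\bf{e}}_k}$ is invariant under every unitary $\mathbf{Q}$, and the only rotation-invariant probability measure on the unit sphere in $\mathbb{C}^M$ is the uniform one. Hence the conditional law of ${{\bf{e}}_k}$ given $\|{{\bf{h}}_k}\| = r$ coincides, for every $r$, with the unconditional law established in Lemma~1.

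Because the conditional law is the same for every value of the conditioning variable, the two random objects are independent; equivalently, the joint density of $({{\bf{e}}_k}, \|{{\bf{h}}_k}\|)$ factorizes as the product of the uniform density on the unit sphere and the marginal density of $\|{{\bf{h}}_k}\|$ (which is a scaled chi distribution). I do not foresee a substantial obstacle here, since the whole argument is a direct consequence of the unitary invariance used in Lemma~1; the only subtlety worth stating carefully is the uniqueness of the rotation-invariant probability measure on the unit sphere, which justifies identifying the conditional distribution with the uniform one. Everything else is a routine conditioning argument and requires no calculation.
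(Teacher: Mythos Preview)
Your proposal is correct and follows essentially the same route as the paper: both rely on the rotational (unitary) invariance of the isotropic complex Gaussian to argue that the conditional law of ${\bf e}_k$ given $\|{\bf h}_k\|=r$ is uniform on the unit sphere for every $r$, and then conclude independence. The paper's proof is a one-line version of your argument, simply stating that by Lemma~\ref{UnitSphere} the direction is uniformly distributed ``regardless of the value of $\|{\bf h}_k\|$''; your write-up makes explicit the conditioning step and the appeal to uniqueness of the rotation-invariant measure, which is exactly the justification the paper leaves implicit.
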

\begin{proof}
According to Lemma \ref{UnitSphere}, the channel direction ${{\bf{e}}_k}$ is uniformly distributed on a unit sphere on $\mathbb{C}^{M}$, regardless of the value of ${{\left\| {{{\bf{h}}_k}} \right\|}}$.
Therefore, ${{\bf{e}}_k}$ is independent of ${{\left\| {{{\bf{h}}_k}} \right\|}}$.
\end{proof}

\begin{Cor}
The mean and covariance matrix of ${{\bf{e}}_k}$ are given by
\begin{align}
{\rm{E}}\left\{ {{{\bf{e}}_k}} \right\} = {\bf{0}}\; \text{and} \;
{\rm{E}}\left\{ {{{\bf{e}}_k}{\bf{e}}_k^{\rm{H}}} \right\} = \frac{1}{M}{{\bf{I}}_M},
\end{align}
respectively.
\end{Cor}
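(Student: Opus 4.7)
My plan is to establish both identities via symmetry arguments rooted in Lemma~\ref{UnitSphere}, rather than by direct integration over the unit sphere in $\mathbb{C}^M$.

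For the mean, I would exploit the fact that $-\mathbf{h}_k$ has the same complex Gaussian distribution as $\mathbf{h}_k$, which immediately gives $-\mathbf{e}_k \stackrel{d}{=} \mathbf{e}_k$. Taking expectations yields $\mathrm{E}\{\mathbf{e}_k\}=-\mathrm{E}\{\mathbf{e}_k\}$, forcing $\mathrm{E}\{\mathbf{e}_k\}=\mathbf{0}$. This relies only on the rotational (in fact sign) invariance already established in Lemma~\ref{UnitSphere}.

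For the covariance, I would apply the unitary invariance more fully. Let $\mathbf{R}\triangleq \mathrm{E}\{\mathbf{e}_k\mathbf{e}_k^{\mathrm{H}}\}$. By Lemma~\ref{UnitSphere}, for any unitary $\mathbf{Q}\in\mathbb{C}^{M\times M}$ we have $\mathbf{Q}\mathbf{e}_k \stackrel{d}{=}\mathbf{e}_k$, hence $\mathbf{Q}\mathbf{R}\mathbf{Q}^{\mathrm{H}} = \mathbf{R}$, i.e.\ $\mathbf{Q}\mathbf{R}=\mathbf{R}\mathbf{Q}$ for every unitary $\mathbf{Q}$. Since the only matrices commuting with all unitaries are scalar multiples of the identity (a direct consequence of Schur's lemma), we must have $\mathbf{R}=c\mathbf{I}_M$ for some scalar $c$. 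The constant $c$ is then pinned down by taking the trace: $\mathrm{tr}(\mathbf{R})=\mathrm{E}\{\mathrm{tr}(\mathbf{e}_k\mathbf{e}_k^{\mathrm{H}})\}=\mathrm{E}\{\|\mathbf{e}_k\|^2\}=1$, while $\mathrm{tr}(c\mathbf{I}_M)=cM$, giving $c=1/M$.

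I do not anticipate a real obstacle here, as the result is essentially a symmetry/Schur argument; the only subtlety is justifying the exchange of trace and expectation (which is immediate since $\mathbf{e}_k\mathbf{e}_k^{\mathrm{H}}$ is entry-wise bounded) and invoking commutant-of-the-unitary-group to collapse $\mathbf{R}$ to a scalar matrix. An equivalent but slightly more pedestrian alternative, in case a self-contained argument is preferred, would be to note that by permutation and phase-sign invariance of the distribution of $\mathbf{e}_k$ the diagonal entries $\mathrm{E}\{|e_{k,m}|^2\}$ are all equal and the off-diagonal entries $\mathrm{E}\{e_{k,m}e_{k,n}^{*}\}$ all vanish, again pinning $\mathbf{R}$ to $\frac{1}{M}\mathbf{I}_M$ by the trace normalization $\sum_m \mathrm{E}\{|e_{k,m}|^2\}=1$.
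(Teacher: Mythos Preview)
Your proposal is correct. The mean argument is identical to the paper's. For the covariance, your primary route differs from the paper's: you invoke full unitary invariance of the law of $\mathbf{e}_k$ to conclude that $\mathbf{R}=\mathrm{E}\{\mathbf{e}_k\mathbf{e}_k^{\mathrm{H}}\}$ commutes with every unitary and hence, by a Schur/commutant argument, must be a scalar multiple of $\mathbf{I}_M$, with the scalar fixed by the trace. The paper instead argues coordinate-wise: it flips the sign of a single entry to show all off-diagonal moments vanish, then uses permutation symmetry to equalize the diagonal entries, and finally the normalization $\sum_m \mathrm{E}\{|e_{k,m}|^2\}=1$ to pin each to $1/M$. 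Your Schur-type argument is slicker and reveals more clearly why the result holds (it only uses the irreducibility of the natural action of the unitary group on $\mathbb{C}^M$), while the paper's approach is entirely self-contained and avoids any appeal to representation-theoretic facts; your stated ``pedestrian alternative'' is essentially the paper's proof.
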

\begin{proof}
Due to the symmetry of the uniform spherical distribution, ${{{\bf{e}}_k}}$ and $-{{{\bf{e}}_k}}$ have the same distribution and thus we have ${\rm{E}}\left\{ {{{\bf{e}}_k}} \right\} =  {\rm{E}}\left\{ -{{{\bf{e}}_k}} \right\}$ and hence ${\rm{E}}\left\{ {{{\bf{e}}_k}} \right\} = {\bf{0}}$.
For the reason of symmetry, ${{{\bf{e}}_k}} = \left[ {{e_{k,1}}, \ldots ,{e_{k,m}}, \ldots ,{e_{k,M}}} \right]$ and ${{{\bf{e}}_k'}} = \left[ {{e_{k,1}}, \ldots ,-{e_{k,m}}, \ldots ,{e_{k,M}}} \right]$ have the same distribution, where ${e_{k,m}}$ denotes the $m$-th entry in ${{{\bf{e}}_k}}$.
Therefore, we have
\begin{equation}
{\rm{E}}\left\{ {{e_{k,m}}e_{k,n}^*} \right\} = {\rm{E}}\left\{ { - {e_{k,m}}e_{k,n}^*} \right\} =  - {\rm{E}}\left\{ {{e_{k,m}}e_{k,n}^*} \right\}, \forall m \neq n,
\end{equation}
which implies that the covariance terms are zero, i.e., ${\rm{E}}\left\{ {{e_{k,m}}e_{k,n}^*} \right\} = 0$, $\forall m \neq n$.
Note that, the zero covariance terms only reflect the lack of correlation between ${e_{k,m}}$ and $e_{k,n}$, but not their independence.
In fact, the entries of ${{{\bf{e}}_k}}$ are dependent on each other, i.e., increasing one entry will decrease all the other entries due to ${{\left\| {{{\bf{e}}_k}} \right\|}} = 1$.
As for the variance, since ${{{\bf{e}}_k}}$ has been normalized, we have
\begin{equation}
\sum\limits_{m = 1}^M {{\rm{E}}\left\{ {e_{k,m}^2} \right\}}  = {\rm{E}}\left\{ {\sum\limits_{m = 1}^M {e_{k,m}^2} } \right\} = 1.
\end{equation}
Again, based on the symmetry of the uniform spherical distribution, we have ${{\rm{E}}\left\{ {e_{k,m}^2} \right\}} = {{\rm{E}}\left\{ {e_{k,n}^2} \right\}}$, $\forall m,n$, and hence we have ${{\rm{E}}\left\{ {e_{k,m}^2} \right\}} = \frac{1}{M}$ and ${\rm{E}}\left\{ {{{\bf{e}}_k}{\bf{e}}_k^{\rm{H}}} \right\} = \frac{1}{M}{{\bf{I}}_M}$.
This completes the proof.
\end{proof}

Let us now define a scalar random variable as ${\nu_{k,i}} = {{\bf{e}}_k^{\rm{H}}{{\bf{e}}_i}} \in \mathbb{C}$, which denotes the projection of channel direction of user $k$ on the channel direction of user $i$.
Note that the random variable ${\nu_{k,i}}$ can characterize the IUI during MRC in \eqref{MIMONOMAMRCSICIndividualAchievableRate}.
Additionally, thanks to the independence between ${{\bf{e}}_k}$ and ${{\left\| {{{\bf{h}}_k}} \right\|}}$, ${\nu_{k,i}}$ is independent of ${{\left\| {{{\bf{h}}_k}} \right\|}}$ and ${{\left\| {{{\bf{h}}_i}} \right\|}}$.
The following Lemma characterizes the mean and variance of ${\nu_{k,i}}$.

\begin{Lem}\label{Lemma2}
For ${\bf{h}}_k \sim \mathcal{CN}\left(\mathbf{0},\frac{1}{1+d_k^{\alpha}}{{\bf{I}}_M}\right)$ and ${{\bf{e}}_k} = \frac{{{{\bf{h}}_k}}}{{\left\| {{{\bf{h}}_k}} \right\|}}$, the random variable ${\nu_{k,i}} = {{\bf{e}}_k^{\rm{H}}{{\bf{e}}_i}}$ has a zero mean and variance of $\frac{1}{M}$.
\end{Lem}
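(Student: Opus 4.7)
The plan is to compute the mean and the second moment of $\nu_{k,i}$ directly, leaning on two facts already established in the paper: the independence of the channel vectors across users, $\mathbf{h}_k \perp \mathbf{h}_i$ for $i \neq k$ (and hence $\mathbf{e}_k \perp \mathbf{e}_i$), and the two moment identities $\mathrm{E}\{\mathbf{e}_k\} = \mathbf{0}$ and $\mathrm{E}\{\mathbf{e}_k \mathbf{e}_k^{\mathrm{H}}\} = \frac{1}{M}\mathbf{I}_M$ derived in the preceding corollary. The symmetry arguments already done for a single user do the heavy lifting; here only the cross-user conditioning needs to be executed.

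For the mean, I would condition on $\mathbf{e}_i$ and write $\mathrm{E}\{\mathbf{e}_k^{\mathrm{H}} \mathbf{e}_i\} = \mathrm{E}\{\mathrm{E}\{\mathbf{e}_k^{\mathrm{H}} \mathbf{e}_i \mid \mathbf{e}_i\}\} = \mathrm{E}\{\mathrm{E}\{\mathbf{e}_k^{\mathrm{H}}\}\,\mathbf{e}_i\} = \mathbf{0}$, where pulling $\mathbf{e}_i$ out of the inner expectation is justified by the independence between $\mathbf{e}_k$ and $\mathbf{e}_i$, and the inner expectation vanishes by $\mathrm{E}\{\mathbf{e}_k\}=\mathbf{0}$.

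For the variance, since the mean is zero I only need $\mathrm{E}\{|\nu_{k,i}|^2\}$. The scalar $|\nu_{k,i}|^2$ expands as $\mathbf{e}_k^{\mathrm{H}} (\mathbf{e}_i \mathbf{e}_i^{\mathrm{H}}) \mathbf{e}_k$, so conditioning on $\mathbf{e}_k$ and using independence gives $\mathrm{E}\{\mathbf{e}_k^{\mathrm{H}} \mathbf{e}_i \mathbf{e}_i^{\mathrm{H}} \mathbf{e}_k \mid \mathbf{e}_k\} = \mathbf{e}_k^{\mathrm{H}} \mathrm{E}\{\mathbf{e}_i \mathbf{e}_i^{\mathrm{H}}\}\mathbf{e}_k = \tfrac{1}{M}\mathbf{e}_k^{\mathrm{H}} \mathbf{e}_k = \tfrac{1}{M}$, because $\|\mathbf{e}_k\|^2 = 1$ deterministically. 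Taking outer expectations preserves the constant $\tfrac{1}{M}$, which is the claimed variance.

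The only potentially delicate point is keeping the conditional expectations honest: one must be explicit that $\mathbf{e}_k$ is a function of $\mathbf{h}_k$ alone and $\mathbf{e}_i$ is a function of $\mathbf{h}_i$ alone, so that the i.i.d. assumption on the channel vectors transfers to independence of the directions, and factorizations like $\mathrm{E}\{\mathbf{e}_k^{\mathrm{H}} \mathbf{A}(\mathbf{e}_i) \mathbf{e}_k\} = \mathrm{E}_{\mathbf{e}_k}\{\mathbf{e}_k^{\mathrm{H}} \mathrm{E}_{\mathbf{e}_i}\{\mathbf{A}(\mathbf{e}_i)\}\mathbf{e}_k\}$ are legitimate. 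Beyond this bookkeeping, the calculation reduces to substituting the two moment identities from the preceding corollary, and no new probabilistic machinery is required.
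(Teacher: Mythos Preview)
Your proposal is correct and follows essentially the same conditioning argument as the paper: both compute the conditional mean and second moment by fixing one direction, applying $\mathrm{E}\{\mathbf{e}_i\}=\mathbf{0}$ and $\mathrm{E}\{\mathbf{e}_i\mathbf{e}_i^{\mathrm{H}}\}=\tfrac{1}{M}\mathbf{I}_M$ from the preceding corollary, and then averaging out the remaining direction. The only cosmetic difference is that you condition on $\mathbf{e}_i$ for the mean whereas the paper conditions on $\mathbf{e}_k$, and you are more explicit than the paper about why independence of the directions holds.
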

\begin{proof}
In fact, ${\nu_{k,i}}$ denotes the projection of ${\bf{e}}_k$ on ${{\bf{e}}_i}$, where ${\bf{e}}_k$ and ${{\bf{e}}_i}$ are uniformly distributed in a unit sphere on $\mathbb{C}^{M}$.
Upon fixing one channel direction ${{\bf{e}}_k}$, the conditional mean and variance of ${\nu_{k,i}}$ are given by
\begin{equation}\label{ConditionalMeanVariance}
{\rm{E}}\left\{ {{\nu_{k,i}}\left| {{{\bf{e}}_k}} \right.} \right\} = {\bf{e}}_k^{\rm{H}}{\rm{E}}\left\{ {{{\bf{e}}_i}} \right\} =0 \;\text{and}\;
{\rm{E}}\left\{ {\left|{\nu_{k,i}}\right|^2\left| {{{\bf{e}}_k}} \right.} \right\} = {{\bf{e}}_k^{\rm{H}}{\rm{E}}\left\{ {{{\bf{e}}_i}{\bf{e}}_i^{\rm{H}}} \right\}{{\bf{e}}_k}} = \frac{1}{M},
\end{equation}
respectively.
Since ${{\bf{e}}_k}$ is uniformly distributed, the integral over ${{\bf{e}}_k}$ will not change the mean and variance.
Therefore, the mean and variance of ${\nu_{k,i}}$ are given by
\begin{align}\label{ConditionalMeanVariance}
{\rm{E}}\left\{ {\nu_{k,i}} \right\} = 0 \;\text{and}\;
{\rm{E}}\left\{ \left|{\nu_{k,i}}\right|^2 \right\} = \frac{1}{M},
\end{align}
respectively, which completes the proof.
\end{proof}

Now, based on \eqref{MIMONOMAMRCSICIndividualAchievableRate}, we have the asymptotic achievable data rate of user $k$ as follows:
\begin{align}\label{AsymptoticSumRateMIMONOMA}
\mathop {\lim }\limits_{K \to \infty } R_{k,\rm{MRC-SIC}}^{{\rm{MIMO-NOMA}}} & = \mathop {\lim }\limits_{K \to \infty } {\ln}\left(1+ {\frac{{{p_k}{{\left\| {{{\bf{h}}_k}} \right\|}^2}}}{{\sum\limits_{i = k + 1}^K {p_i}{{\left\| {{{\bf{h}}_i}} \right\|}^2} \left|{\nu_{k,i}}\right|^2  + {N_0}}}} \right) \notag\\
& \mathop  =\limits^{(a)} \mathop {\lim }\limits_{K \to \infty } {\ln}\left(1+ {\frac{{{p_k}{{\left\| {{{\bf{h}}_k}} \right\|}^2}}}{{\sum\limits_{i = k + 1}^K {p_i}{{\left\| {{{\bf{h}}_i}} \right\|}^2} \frac{1}{M}  + {N_0}}}} \right) \notag\\
& \mathop  = \limits^{(b)} \mathop {\lim }\limits_{K \to \infty } M{\ln}\left(1+ {\frac{{{p_k}{{\left\| {{{\bf{h}}_k}} \right\|}^2} \frac{1}{M}}}{{\sum\limits_{i = k + 1}^K {p_i}{{\left\| {{{\bf{h}}_i}} \right\|}^2} \frac{1}{M}  + {N_0}}}} \right).
\end{align}
Note that the equality in $(a)$ holds asymptotically by applying Corollary \ref{Corollary1} and Lemma \ref{Lemma2} with $K \to \infty$.
In addition, the equality in $(b)$ holds with $K \to \infty$ since $\mathop {\lim }\limits_{x \to 0} {\ln}\left( {1 + Mx} \right) = \mathop {\lim }\limits_{x \to 0} M{\ln}\left( {1 + x} \right)$.
As a result, the asymptotic achievable sum-rate in \eqref{LowerBoundMIMONOMA2} can be obtained by
\begin{equation}\label{AchievableRateMIMONOMA2}
\mathop {\lim }\limits_{K \to \infty } R_{\rm{sum,MRC-SIC}}^{{\rm{MIMO-NOMA}}}
= \mathop {\lim }\limits_{K \to \infty } M{\ln}\left( {1 + \frac{{1}}{{M{N_0}}}\sum\limits_{k = 1}^K {p_k{{\left\| {{{\bf{h}}_k}} \right\|}^2}} } \right).
\end{equation}

Now, upon comparing \eqref{UpperBoundMIMONOMA}, \eqref{LowerBoundMIMONOMA2}, and \eqref{AchievableRateMIMONOMA2}, it can be observed that the upper bound and the lower bound considered converge when $K \to \infty$.
In other words, for any given power allocation strategy ${\bf{p}} = \left[ {{p_1}, \ldots ,{p_K}} \right]$, the upper bound in \eqref{UpperBoundMIMONOMA} is asymptotically tight.
It completes the proof.

\subsection{Proof of Theorem \ref{Theorem2}}\label{AppendixB}
Based on \eqref{AsymptoticSumRateMIMONOMA} in the proof of Theorem \ref{Theorem1} in Appendix A, under the equal resource allocation strategy, i.e., ${{p_{k}}} = \frac{{P_{\rm{max}}}}{K}$, $\forall k$, the asymptotic individual rate of user $k$ of the \emph{m}MIMO-NOMA system with the MRC-SIC detection in \eqref{mMIMONOMAIndividualAchievableRate} can be obtained by
\begin{equation}\label{mMIMONOMAIndividualAchievableRate2}
\mathop {\lim }\limits_{K \rightarrow \infty} R_{k}^{\rm{\emph{m}MIMO-NOMA}} = \mathop {\lim }\limits_{K \rightarrow \infty} {\ln}\left(1+ {\frac{{{P_{\rm{max}}}{{\left\| {{{\bf{h}}_k}} \right\|}^2}}}{{\sum\limits_{i = k + 1}^K {P_{\rm{max}}}{{\left\| {{{\bf{h}}_i}} \right\|}^2} \frac{1}{M}  + {KN_0}}}} \right).
\end{equation}
With the aid of a large-scale antenna array, i.e., $M \to \infty$, the fluctuation imposed by the small-scale fading on the channel gain can be averaged out as a benefit of channel hardening\cite{HochwaldMassiveMIMO}.
Therefore, the channel gain is mainly determined by the large-scale fading asymptotically as follows:
\begin{equation}
\mathop {\lim }\limits_{M \rightarrow \infty}\frac{{\left\| {{{\bf{h}}_k}} \right\|}^2}{M} = \frac{1}{1+d_k^{\alpha}}.
\end{equation}
As a result, the asymptotic data rate of user $k$ in \eqref{mMIMONOMAIndividualAchievableRate2} is given by:
\begin{equation}\label{mMIMONOMAIndividualAchievableRate3}
\mathop {\lim }\limits_{K \rightarrow \infty, M \rightarrow \infty} R_{k}^{\rm{\emph{m}MIMO-NOMA}} = \mathop {\lim }\limits_{K \rightarrow \infty, M \rightarrow \infty} {\ln}\left(1+ {\frac{M{{P_{\rm{max}}}\frac{1}{1+d_k^{\alpha}}}}{{\sum\limits_{i = k + 1}^K {P_{\rm{max}}}\frac{1}{1+d_i^{\alpha}}  + {KN_0}}}} \right).
\end{equation}

Based on the theory of order statistics\cite{David2003order}, the PDF of $d_k$ is given by
\begin{equation}\label{OrderedDistancePDF}
{f_{{d_k}}}\left( x \right) = k\left( {\begin{array}{*{20}{c}}
{K}\\
{k}
\end{array}} \right) {F_{{d}}}^{k-1}\left( x \right) \left(1-{F_{{d}}}\left( x \right)\right)^{K-k} {f_{{d}}}\left( x \right), D_0 \le z \le D.
\end{equation}
Thus, the mean of the large-scale fading of user $k$ can be written as
\begin{align}\label{OrderedDistanceMean}
I_k &= {{\rm{E}}_{{d_k}}}\left\{ {\frac{1}{{1 + d_k^\alpha }}} \right\} = \int_{{D_0}}^D {\frac{1}{{1 + {x^\alpha }}}} {f_{{d_k}}}\left( x \right)dx  \notag\\
& \approx \left( {\begin{array}{*{20}{c}}
K\\
k
\end{array}} \right){\frac{{k}}{{D + {D_0}}}} \sum\limits_{n = 1}^N \frac{{\beta _n}}{c_n} {\left( {\frac{{\phi _n^2 - D_0^2}}{{{D^2} - D_0^2}}} \right)^{k - 1}}{\left( {\frac{{{D^2} - \phi _n^2}}{{{D^2} - D_0^2}}} \right)^{K - k}},
\end{align}
with $\phi_n = {\frac{D-D_0}{2}\cos \frac{{2n - 1}}{{2N}}\pi  + \frac{D+D_0}{2}}$.
For a large number of users, i.e., $K \to \infty$, the random IUI term in \eqref{mMIMONOMAIndividualAchievableRate3} can be approximated by a deterministic value given by
\begin{equation}\label{DeterministicInterference}
\mathop {\lim }\limits_{K \rightarrow \infty} \sum\limits_{i = k + 1}^K {P_{\rm{max}}}\frac{1}{1+d_i^{\alpha}} \approx \mathop {\lim }\limits_{K \rightarrow \infty} \sum\limits_{i = k + 1}^K {P_{\rm{max}}} I_i.
\end{equation}
Now, the asymptotic ergodic data rate of user $k$ can be approximated by
\begin{align}\label{mMIMONOMAErgodicRate}
&\mathop {\lim }\limits_{K \rightarrow \infty, M \rightarrow \infty} \overline{R_{k}^{\rm{\emph{m}MIMO-NOMA}}}= \mathop {\lim }\limits_{K \rightarrow \infty, M \rightarrow \infty} \int_{{D_0}}^D {\ln \left( {1 + \frac{{{\psi _k}}}{{1 + {x^\alpha }}}} \right){f_{{d_k}}}\left( x \right)dx} \notag\\
& \approx \mathop {\lim }\limits_{K \to \infty ,M \to \infty } \left( {\begin{array}{*{20}{c}}
K\\
k
\end{array}} \right){\frac{{k}}{{D + {D_0}}}} \sum\limits_{n = 1}^N {{\beta _n}\ln \left( {1 + \frac{{{\psi _k}}}{{{c_n}}}} \right)} {\left( {\frac{{\phi _n^2 - D_0^2}}{{{D^2} - D_0^2}}} \right)^{k - 1}}{\left( {\frac{{{D^2} - \phi _n^2}}{{{D^2} - D_0^2}}} \right)^{K - k}},
\end{align}
with ${\psi _k} = \frac{{{P_{\rm{max}}}M}}{{\sum\nolimits_{i = k + 1}^K {{P_{\rm{max}}}{I_i} + {KN_0}} }}$.
Substituting \eqref{mMIMONOMAErgodicRate} into \eqref{mMIMONOMASumRate} yields the asymptotic ergodic sum-rate of the \emph{m}MIMO-NOMA system with the MRC-SIC detection as in \eqref{ErgodicSumRatemMIMONOMA}, which completes the proof.

\subsection{Proof of Theorem \ref{Theorem3}}\label{AppendixC}
With $D=D_0$, based on the channel hardening property\cite{HochwaldMassiveMIMO}, the channel gain can be asymptotically formulated as:
\begin{equation}\label{ChannelHardening}
\mathop {\lim }\limits_{M \rightarrow \infty}\frac{{\left\| {{{\bf{h}}_k}} \right\|}^2}{M} = \frac{1}{1+D_0^{\alpha}}, \forall k.
\end{equation}
Substituting \eqref{ChannelHardening} into \eqref{mMIMONOMAIndividualAchievableRate2}, the asymptotic individual rate of user $k$ of the \emph{m}MIMO-NOMA system with $D=D_0$ is obtained by
\begin{align}\label{DD0mMIMONOMAIndividualAchievableRate2}
\mathop {\lim }\limits_{K \rightarrow \infty, M \rightarrow \infty} R_{k}^{\rm{\emph{m}MIMO-NOMA}} &= \mathop {\lim }\limits_{K \rightarrow \infty, M \rightarrow \infty} {\ln}\left(1+ M{\frac{{{P_{\rm{max}}}\frac{1}{1+D_0^{\alpha}}}}{{\sum\limits_{i = k + 1}^K {P_{\rm{max}}} \frac{1}{1+D_0^{\alpha}}  + K{N_0}}}} \right) \notag\\
& = \mathop {\lim }\limits_{K \rightarrow \infty, M \rightarrow \infty} \ln \left( {1 + \frac{{\delta \varpi }}{{\left( {1 - \frac{k}{K}} \right)\varpi  + 1}}} \right),
\end{align}
where $\delta = \frac{M}{K}$ and $\varpi = \frac{{P_{\rm{max}}}}{\left({1+D_0^{\alpha}}\right) N_0}$.
We can observe that the asymptotic individual rate of user $k$ in \eqref{DD0mMIMONOMAIndividualAchievableRate2} becomes a deterministic value for $K \rightarrow \infty$ and $M \rightarrow \infty$ due to the channel hardening property.
As a result, we have $\mathop {\lim }\limits_{K \rightarrow \infty, M \rightarrow \infty} R_{k}^{\rm{\emph{m}MIMO-NOMA}} = \mathop {\lim }\limits_{K \rightarrow \infty, M \rightarrow \infty} \overline{R_{k}^{\rm{\emph{m}MIMO-NOMA}}}$.

Now, the asymptotic ergodic sum-rate of the \emph{m}MIMO-NOMA system with MRC-SIC receiver can be obtained by
\begin{align}\label{DD0mMIMONOMAErgodicRate}
&\mathop {\lim }\limits_{K \rightarrow \infty, M \rightarrow \infty} \overline{R_{\rm{sum}}^{\rm{\emph{m}MIMO-NOMA}}} = \mathop {\lim }\limits_{K \to \infty ,M \to \infty } \sum\limits_{k = 1}^K \ln \left( {1 + \frac{{\delta \varpi }}{{\left( {1 - \frac{k}{K}} \right)\varpi  + 1}}} \right) \\
&= \mathop {\lim }\limits_{K \to \infty ,M \to \infty } K {{\int_{0}^1 {\ln \left( {1 + \frac{{\delta \varpi }}{{\left( {1 - x} \right)\varpi  + 1}}} \right) dx} } } \notag\\
&= \mathop {\lim }\limits_{K \to \infty ,M \to \infty } \frac{{M}}{\varpi\delta} \left[ \ln \left( 1 \hspace{-1mm}+\hspace{-0.5mm} \varpi\delta \hspace{-0.5mm}+\hspace{-0.5mm} \varpi \right) \left( 1 \hspace{-0.5mm}+\hspace{-0.5mm} \varpi\delta \hspace{-0.5mm}+\hspace{-0.5mm} \varpi \right) - \ln \left( 1 \hspace{-0.5mm}+\hspace{-0.5mm} \varpi\delta \right)\left( 1 \hspace{-0.5mm}+\hspace{-0.5mm} \varpi\delta \right) - \ln \left( 1 \hspace{-0.5mm}+\hspace{-0.5mm} \varpi \right)\left( 1 \hspace{-0.5mm}+\hspace{-0.5mm} \varpi \right) \right], \notag
\end{align}
which completes the proof of \eqref{DD0ErgodicSumRatemMIMONOMA}.

On the other hand, under the equal resource allocation strategy, the asymptotic individual rate of user $k$ of the \emph{m}MIMO-OMA system with the MRC detection in \eqref{mMIMOOMAIndividualAchievableRate} can be approximated by
\begin{equation}\label{DD0mMIMOOMAIndividualAchievableRate}
R_{k}^{\rm{\emph{m}MIMO-OMA}} \approx \delta\varsigma{\ln}\left(1+ {\frac{{{P_{\rm{max}}}{{\left\| {{{\bf{h}}_k}} \right\|}^2}}}{{{\varsigma M N_0}}}} \right).
\end{equation}
Exploiting the channel hardening property as stated in \eqref{ChannelHardening}, the individual rate of user $k$ in \eqref{DD0mMIMOOMAIndividualAchievableRate} can be approximated by a deterministic value and we have the asymptotic ergodic sum-rate of the  \emph{m}MIMO-OMA system considered as
\begin{equation}\label{DD0mMIMOOMAErgodicRate}
\mathop {\lim }\limits_{M \rightarrow \infty} \overline{R_{\rm{sum}}^{\rm{\emph{m}MIMO-OMA}}}
\approx \mathop {\lim }\limits_{M \to \infty } {\varsigma M} \ln \left( {1 + \frac{\varpi}{\varsigma}}\right),
\end{equation}
which completes the proof of \eqref{DD0ErgodicSumRatemMIMOOMA}.

\end{appendices}



\end{document}